  \let\oldparagraph\paragraph
  \renewcommand{\paragraph}{
    \@ifstar
      \xxxParagraphStar
      \xxxParagraphNoStar
  }
  \newcommand{\xxxParagraphStar}[1]{\oldparagraph*{#1}\mbox{}}
  \newcommand{\xxxParagraphNoStar}[1]{\oldparagraph{#1}\mbox{}}
  \let\oldsubparagraph\subparagraph
  \renewcommand{\subparagraph}{
    \@ifstar
      \xxxSubParagraphStar
      \xxxSubParagraphNoStar
  }
  \newcommand{\xxxSubParagraphStar}[1]{\oldsubparagraph*{#1}\mbox{}}
  \newcommand{\xxxSubParagraphNoStar}[1]{\oldsubparagraph{#1}\mbox{}}
\patchcmd\longtable{\par}{\if@noskipsec\mbox{}\fi\par}{}{}
\def\maxwidth{\ifdim\Gin@nat@width>\linewidth\linewidth\else\Gin@nat@width\fi}
\def\maxheight{\ifdim\Gin@nat@height>\textheight\textheight\else\Gin@nat@height\fi}
\def\fps@figure{htbp}
  \renewcommand*\contentsname{Table of contents}
  \newcommand\contentsname{Table of contents}
  \renewcommand*\listfigurename{List of Figures}
  \newcommand\listfigurename{List of Figures}
  \renewcommand*\listtablename{List of Tables}
  \newcommand\listtablename{List of Tables}
  \renewcommand*\figurename{Figure}
  \newcommand\figurename{Figure}
  \renewcommand*\tablename{Table}
  \newcommand\tablename{Table}
\newcommand{\anon}{1}
\theoremstyle{definition}
\newtheorem{assumption}{Assumption}
\newtheorem{theorem}{Theorem}
\newtheorem{proposition}{Proposition}
\newtheorem{lemma}{Lemma}
\newtheorem*{lemma*}{Lemma}
\newtheorem{definition}{Definition}
\newtheorem*{corollary*}{Corollary}
\newcommand{\sa}{supplemental appendix}
\newcommand{\diff}{\textup{diff}}\newcommand{\vs}{V_{*,N}}
\newcommand{\condsre}{condition~\eqref{eq:sre}}
\newcommand{\mig}{\mathcal I_g}
\newcommand{\propclt}{\prop~\ref{prop:clt}}
\newcommand{\assmngm}{\assm~\ref{assm:ng_main}}
\def\supi{\sup_{\otn}}
\def\lemblock{\lem~\ref{lem:block_mat}}
\def\cac{C^\T A C}
\def\uni{uniformly integrable}
\def\unin{uniform integrability}
\def\dbfexf{\beginp \htfex - \tc\\\hbxfex - \gxa\endp}
\def\hbffex{\beginp\htfex\\ \hbxfex\endp}
\def\bffex{\beginp \tc \\ \gxa \endp}
\def\dzdx{\beginp \dzi\\ \dxi \endp}
\def\ctc{C^\T C}
\def\lmax{\lambda_{\max}}
\def\olzd{\overline{ZD}}
\def\olzy{\overline{ZY}}
\def\vr{\dfrac{\vls}{\vfe}}
\def\vrf{\kn  \left(1 + \vavef  \cdot \cn \right)}
\def\xigols{\xi_{g,\ls}}
\def\xigfe{\xi_{g,\fe}}
\def\dols{\delta_\ls}
\def\dfe{\delta_\fe}
\def\htt{\widehat T}
\def\ttt{\widetilde T}
\def\httg{\htt_g}
\def\tttg{\ttt_g}
\def\ttg{\ttg}
\def\rgfe{r_{\sg,\fe}}
\def\rgls{r_{\sg,\ls}}
\def\rgols{\rgls}
\def\vvhv{\vnsq(N\hsiglsfe)\vnsq}
\def\ses{\vn^{-1/2}}
\def\tbgs{v_g}
\def\bis{\bi^*}
\def\bbs{\bb^*}
\def\tbgst{v_g^\T}
\def\tvn{\widetilde V_N}
\def\szdi{\szd^{-1}}
\def\gns{\gm^*_N}
\def\gnsf{\dfrac1\vzpc 
 \beginp -e & 1 & 0 \\
 0 & 0 & \kni \endp}
\def\eozodf{\beginp 1 & \mud \\ e & \muzd \endp}
\def\phinf{\meani \beginp 1 \\ \zi\endp (1, \di)}
\def\dbolsf{\beginp \hbo - \mua\\ \htols - \tc\endp}
\def\hbolsf{\beginp \hbo\\ \htols\endp}
\def\bolsf{\beginp \mua\\ \tc\endp}
\def\zpt{0_p^\T}
\def\vzpci{\dfrac1\vzpc}
\def\glsxi{\glsx^{-1}}
\def\enf{Euclidean norm}
\def\sumin{\sum_{i=1}^n}
\def\meanin{n^{-1}\sumin}
\def\abcd{\beginp A & B \\ C & D\endp}
\def\vi{v_i}
\def\swi{w_i}
\def\signlsx{\Sigma_{N,\lsx}}
\def\lm{\lambda_{\min}}
\def\wilsxf{\ozx}
\def\wifexf{
\beginp
\dzi\\
\dxi
\endp}
\def\eozxodxm{\beginp 
1 & \mud & \muxt\\ 
e & \muzd & \muxzt \\
\mux & \mudx & \muxx
\endp}
\def\wsg{W_\sg}
\def\wsgt{W_\sg^\T}
\def\vsg{V_\sg}
\def\ccn{C_N}
\def\ccnt{\ccn^\T}
\def\gnlsx{\Gamma_{N, \lsx}}
\def\glsx{\Gamma_\lsx}
\def\psinlsx{\Psi_{N, \lsx}}
\def\psilsx{\Psi_\lsx}
\def\omgnlsx{\Omega_{N, \lsx}}
\def\omgnfex{\Omega_{N, \fex}}
\def\wilsx{w_{i, \lsx}}
\def\wilsxt{w_{i, \lsx}^\T}
\def\vilsx{v_{i, \lsx}}
\def\vilsxt{v_{i, \lsx}^\T}
\def\ugj{U_{gj}}
\def\tauag{\tau_{\aa,g}}
\def\muyzg{\mu_{Y(0),g}}
\def\bxaa{\beta_{X, A}}
\def\cnf{\nsqphi}
\def\nsqphi{\ni\sumg \ngsq\phig}
\def\nphi{\ni\sumg\ng\phig}
\def\ngsq{\ng^2}
\def\ngsqi{\ng^{-2}}
\def\zidi{\zi\di}
\def\ziyi{\zi\yi}
\def\vave{\va/\ve}
\def\vavef{\dfrac{\va}{\ve}}
\def\va{\sigsq_\alpha}
\def\ve{\sigsq_\epsilon}
\def\ngg{{n_G}}
\newcommand{\vgh}[1]{(#1_{[1]}^\T, \ldots, #1_{[G]}^\T)^\T}
\newcommand{\vgj}[1]{(#1_{11}, \ldots, #1_{G,\ngg})^\T}
\newcommand{\vjh}[1]{(#1_{g1}, \ldots, #1_{g,\ng})^\T}
\def\vtgf{\vz \cdot (\vap \cdot \ngsq\phig + \ve \cdot\ng)}
\def\covxi{\{\cov(X_i)\}^{-1}}
\def\rrglsx{R_{\sg,\lsx}}
\def\boa{\beta_{1,\alpha}}
\def\bxa{\beta_{X,\alpha}}
\def\vap{\sigsq_{\alpha'}}
\def\xsg{X^*_g}
\def\xsgt{X^{*\T}_g}
\def\acip{\aci'}
\def\agp{\ag'}
\def\hvn{N\hsiglsfe}
\def\hsiglsfe{\widehat \Sigma_\lsfe}
\def\hclsfe{\hat\sigma_\lsfe} 
\def\vlsfe{V_{\lsfe,N}}
\def\vfen{\vfe}
\def\vlsn{\vls}
\def\vnsq{V_N^{-1/2}}
\def\vls{V_{\ls}}
\def\vfe{V_{\fe}}
\def\eszd{\E(\szd)}
\def\esxd{\E(\sxd)}
\def\esx{\E(\sx)}
\def\czi{\zi - e}
\def\cai{\ai - \mua}
\def\dgo{\delta_{g,1}}
\def\dgt{\delta_{g,2}}
\def\dgth{\delta_{g,3}}
\def\dgfo{\delta_{g,4}}
\def\dgfi{\delta_{g,5}}
\def\dgk{\delta_{g,k}}
\def\sx{S_{X, \win}}
\def\sxx{\sx}
\def\sxi{\sx^{-1}}
\def\sxz{S_{XZ, \win}}
\def\sxzt{\sxz^\T}
\def\sxd{S_{XD, \win}}
\def\sxa{S_{XA, \win}}
\def\dxxi{\dxi\dxit}
\def\dxdi{\dxi\ddi}
\def\dxzi{\dxi\dzi}
\def\dxai{\dxi\dai}
\def\dhbxfex{\hbxfex - \gxa}
\def\dhbxfexp{(\dhbxfex)}
\def\dhbxfext{\dhbxfexp^\T}
\def\hgxzt{\hgxz^\T}
\def\covzd{\cov(\zi,\di)}
\def\covzy{\cov(\zi,\yi)}
\def\covza{\cov(\zi,\ai)}
\def\muzd{\mu_{ZD}}
\def\muzy{\mu_{ZY}}
\def\muzx{\mu_{XZ}}
\def\muxz{\muzx}
\def\muzxt{\mu_{XZ}^\T}
\def\muxzt{\muzxt}
\def\vzpc{\vz \pc}
\def\kvzpc{\kn \vzpc}
\def\kvzpcsq{(\kvzpc)^2}
\def\kvzpcif{\dfrac1\kvzpc}
\def\vzpcsq{(\vzpc)^2}
\def\edf{\pa+e\pc}
\def\muzyf{\muzyf}
\def\rest{\texttt{res}}
\def\dxg{\dot\xg}
\def\dxgt{\dxg^\T}
\def\ss{\bar S^*}
\def\ssg{S^*_g}
\def\ssgsq{(\ssg)^2}
\def\szdg{S_{ZD,g}}
\def\szxg{S_{XZ,g}}
\def\szxgt{\szxg^\T}
\def\sxzgt{\szxgt}
\def\sxzg{\szxg}
\def\sdxg{S_{XD,g}}
\def\sxdg{\sdxg}
\def\sdxgt{\sdxg^\T}
\def\sxdgt{\sdxgt}
\def\sxg{S_{X,g}}
\def\sxrg{S_{XR,g}}
\def\szrg{S_{ZR,g}}
\def\dtg{\delta_g}
\def\gxa{\gamma_{X,A}}
\def\gxaf{\{\E(\sx)\}^{-1} \E(\sxa)}
\def\sxy{S_{XY,\win}}
\def\rrifex{R_{i,\fex}}
\def\rrifexf{\dai - \dxit\gxa}
\def\rrgfex{R_{\sg,\fex}}
\def\rifex{r_{i,\fex}}
\def\rgfex{r_{\sg,\fex}}
\def\hbxfex{\hb_{X,\fex}}
\def\dxit{\dot X^\T_i}
\def\vfexn{V_{\fex, N}}
\def\vfexnsq{\vfexn^{-1/2}}
\def\vlsxn{V_{\lsx, N}}
\def\phini{\phin^{-1}}
\def\phin{\Phi_N}
\def\psin{\Psi_N}
\def\xxit{\xxi\xit}
\def\oz{\beginp 1 \\ \zi\endp}
\def\od{(1,\di)}
\def\ozx{\beginp 1 \\ \zi \\ \xxi \endp}
\def\ozxt{(1, \zi, \xit)^\T}
\def\odxt{(1, \di, \xit)^\T}
\def\odx{(1, \di, \xit)}
\def\hbo{\hb_1}
\def\bo{\beta_1}
\def\sign{\Sigma_N}
\def\hsign{\widehat\Sigma_N}
\def\hgxz{\hg_{X,Z}}
\def\hsexsq{\hsex^2}
\def\hsefexsq{\hse_\fex^2}
\def\hselsxsq{\hse_\lsx^2}
\newcommand{\hses}{\hse_*}
\def\hselsx{\hse_\lsx}
\def\vn{V_N}
\def\mux{\mu_X}
\def\vx{V_X}
\def\vxf{
\beginp
1 &  \muxt \\ 
\mux  & \muxx
\endp}
\def\muxd{\mu_{XD}}
\def\muxx{\mu_{XX}}
\def\muxt{\mux^\T}
\def\mudx{\muxd}
\def\gm{\Gamma}
\def\rilsx{r_{i,\lsx}}
\def\rrilsx{R_{i,\lsx}}
\def\pz{P_Z}
\def\pzf{\zfwl(\zfwlt \zfwl)^{-1} \zfwlt}
\def\dpzdi{(\dsfwlt \pz \dsfwl)^{-1}}
\def\zfwlt{\zst_\fwl}
\def\dsfwl{D^*_\fwl}
\def\dfwl{\dsfwl}
\def\dsfwlt{\dst_\fwl}
\def\hdfwlsqi{(\hdfwlsq)^{-1}}
\def\dis{\di^*}
\def\zis{\zi^*}
\def\disw{D_{i\mid W}^*}
\def\zisw{Z_{i\mid W}^*}
\def\hgwz{\hg_{W,Z}}
\def\hgwd{\hg_{W,D}}
\def\hgwd{\hg_{W,D}}
\def\wit{\wi^\T}
\def\fwl{{\mid W}}
\def\yiw{Y_{i \fwl}}
\def\hdisw{\widehat D^*_{i \fwl}}
\def\hdgfwl{\widehat D^*_{\sg \fwl}}
\def\zgfwl{Z^*_{\sg \fwl}}
\def\zst{Z^{*\T}}
\def\dst{D^{*\T}}
\def\hds{\widehat D^*}
\def\hdst{\widehat D^{*\T}}
\def\zfwl{Z^*_\fwl}
\def\hdgfwlt{\hdst_{\sg\fwl}}
\def\hdfwl{\hds_\fwl}
\def\hdfwlt{\hdst_\fwl}
\def\hdfwlsq{\hdfwlt\hdfwl}
\def\hdfwlsi{(\hdfwlsq)^{-1}}
\def\hbzfwl{\hb}
\def\hsefwl{\hse_\fwl}
\def\rifwl{r_{i \fwl}}
\def\riw{\rifwl}
\def\rgfwl{r_{\sg  \fwl}}
\def\rgfwlt{\rgfwl^\T}
\def\htw{\htau_\fwl}
\def\htfwl{\htw}
\def\szw{S_{Z \fwl}}
\def\szdw{S_{ZD\mid W}}
\def\szyw{S_{ZY\mid W}}
\def\mud{\mu_D}
\def\muy{\mu_Y}
\def\czd{\cov(\zi, \di)}
\def\covzd{\czd}
\def\ho{\widehat \Omega}
\def\gnst{\gn^{*\T}}
\def\omgn{\Omega_N}
\def\tomgn{\widetilde \Omega_N}
\def\gn{\Gamma_N}
\def\gnt{\gn^\T}
\def\sza{S_{ZA,\win}}
\def\mua{\mu_A}
\def\tbg{\widetilde B_g}
\newcommand{\tmug}{\tmu_g}
\def\szyg{S_{ZY,g}}
\def\dsag{\dot a_\sg}
\def\sag{a_\sg}
\def\sagt{\sag^\T}
\def\sbg{b_\sg}
\def\bsag{\bsa_g}
\def\bsbg{\bsb_g}
\def\sai{a_i}
\def\sbi{b_i}
\def\bsa{\bar a}
\def\bsb{\bar b}
\def\dsai{\dot a_i}
\def\dsbi{\dot b_i}
\def\pw{P_W}
\def\pwf{W(\wt W)^{-1}\wt}
\def\szg{S_{Z,g}}
\def\bb{\bar B}
\def\bbn{\bb_N}
\def\xpwxi{(V^\T \pw V)^{-1}}
\def\hg{\hat\gamma}
\def\dai{\dot A_i}
\def\ho{\widehat \Omega}
\def\ddg{\dot D_\sg}
\def\dzg{\dot Z_\sg}
\def\dyg{\dot Y_\sg}
\def\dxg{\dot X_\sg}
\def\dzgt{\dzg^\T}
\def\czg{\check Z_\sg}
\def\czgt{\czg^\T}
\def\cag{\check A_\sg}
\def\czag{\czgt\cag}
\def\dzag{\dzgt\dag}
\def\dzdg{\dzgt\ddg}
\def\czai{(\czi)(\cai)}
\def\dzdi{\dzi\ddi}
\def\dzdgp{\left(\dzdg\right)}
\def\dzagp{\left(\dzag\right)}
\def\czagp{\left(\czag\right)}
\def\og{1_{\ng}}
\def\ogt{\og^\T}
\def\jng{J_{\ng}}
\def\ing{I_{\ng}}
\def\ong{1_{\ng}}
\def\ongt{\ong^\T}
\def\zng{0_{\ng}}
\def\png{P_{\ng}}
\def\mig{\mathcal I_g}
\def\mud{\mu_D}
\def\mual{\mu_\alpha}
\def\da{\dot A}
\def\dag{\da_{\sg}}
\def\aag{A_{\sg}}
\def\baci{\bar A_\ci}
\def\bag{\bar A_g}
\def\lsx{{\ls\textup{-x}}}
\def\fex{{\fe\textup{-x}}}
\def\hsex{\hse_\lsx}
\def\hselsx{\hsex}
\def\hsefex{\hse_\fex}
\def\htfex{\htau_\fex}
\def\htx{\htau_\lsx}
\def\htlsx{\htx}
\def\szdx{S_{ZD\mid X}}
\def\szyx{S_{ZY\mid X}}
\def\szdcx{S_{ZD\mid (C,X)}}
\def\szycx{S_{ZY\mid (C,X)}}
\def\zix{Z_{i\mid X}}
\def\zicx{Z_{i\mid (C,X)}}
\def\zgcx{Z_{\sg\mid (C,X)}}
\def\zgcxt{\zgcx^\T}
\def\wi{W_i}
\def\kgfe{\kappa_{g,\fe}}
\def\kgls{\kappa_{g,\ls}}
\def\pcg{\pi_{\cc,g}}
\def\tcg{\tau_{\cc,g}}
\def\pag{\pi_{\aa,g}}
\def\vzg{\sigma_{Z,g}^2}
\def\eg{e_g}
\def\dyi{\dot Y_i}
\def\dzi{\dot Z_i}
\def\ddi{\dot D_i}
\def\dxi{\dot X_i}
\def\dx{\dot X}
\def\dyif{\yi - \byci}
\def\dzif{\zi - \bzci}
\def\ddif{\di - \bdci}
\def\ddifp{(\ddif)}
\def\dyifp{(\dyif)}
\def\dzifp{(\dzif)}
\def\epi{\ep_i}
\def\eszg{\E(\szg)} 
\def\lsfe{{\ls, \fe}}
\def\sumiipg{\sum_{i,i'\in\mig }}
\def\dsumiipg{\ds\sum_{i,i'\in\mig }}
\def\knf{1 - \nphi} 
\def\phigb{\phig = \phigf}
\def\phigf{\ng^{-2}\sumiipg \corr(\zi, \zip)}
\def\phig{\phi_g}
\def\vzinv{(\vz)^{-1}}
\def\vzi{\vzinv}
\def\esz{\E(\sz)}
\def\hsed{\hse_{\textup{diff}}}
\def\hsels{\hse_\ls}
\def\hsefe{\hse_\fe}
\def\hsefesq{\hsesqfe}
\def\hsesqls{\hsesq_\ls}
\def\hsesqfe{\hsesq_\fe}
\newcommand{\ci}{{c(i)}}
\newcommand\assmlm[1]{$\inf_{N} \lm(#1) > 0$}
\newcommand{\ouc}{1_{\{U_i = \cc\}}}
\newcommand{\oua}{1_{\{U_i = \aa\}}}
\newcommand{\oco}{1_{\{\ci = 1\}}}
\newcommand{\ocgg}{1_{\{\ci = G\}}}
\def\cci{C_i}
\def\ccif{(\oco, \ldots, \ocgg)^\T}
\newcommand{\di}{D_i}
\def\dsumi{\displaystyle\sumi}
\def\dsumig{\displaystyle\sumig}
\def\wt{W^\T}
\def\rg{r_\sg}
\def\ni{N^{-1}}
\def\nif{\dfrac{1}{N}}
\def\tyi{\widetilde Y_i}
\def\ai{A_i}
\def\aai{A_i}
\def\bi{B_i}
\def\rils{r_{i, \ls}}
\def\riols{\rils}
\def\rife{r_{i, \fe}}
\def\ri{r_i}
\def\bn{\bar n}
\def\nb{\bn}
\def\bni{\nbi}
\def\nbi{\nb^{-1}}
\def\hts{\htau_*}
\def\win{\textup{in}}
\def\sz{S_{Z,\win}}
\def\szd{S_{ZD,\win}}
\def\szdf{\meani \dzifp\ddifp}
\def\szy{S_{ZY,\win}}
\def\szt{S_{Z}}
\def\szdt{S_{ZD}}
\def\szyt{S_{ZY}}
\def\szytf{\meani(\zi - \bz)(\yi - \by)}
\def\szdtf{\meani(\zi - \bz)(\di - \bd)}
\def\mi{\mathcal I}
\def\mif{\{gj: \otg; \, j = \ot{\ng}\}}
\def\dmeani{\displaystyle\meani}
\def\dsumg{\displaystyle\sumg}
\def\rz{\rho_Z}
\def\tmu{\widetilde\mu}
\def\for{\quad \text{for} \ \ }
\def\where{\quad\text{where} \ \ }
\def\with{\quad\text{with} \ \ }
\def\kn{\kappa_N}
\def\kni{\kn^{-1}}
\def\cn{c_N}
\def\vfe{V_{\fe,N}}
\def\vls{V_{\ls,N}}
\def\ls{\textup{2sls}} 
\def\fe{\textup{2sfe}}
\def\htls{\htau_\ls}
\def\htfe{\htau_{\fe}}
\def\sg{{[g]}}
\def\xci{X^*_{\ci}}
\def\aci{\alpha_{\ci}}
\def\vz{\sigma^2_{Z}}
\def\zip{Z_{i'}}
\def\ep{\epsilon}
\def\zg{Z_{\sg}}
\def\dg{D_{\sg}}
\def\zgt{\zg^\T}
\def\yg{Y_g}
\def\pctc{\pc\tc}
\def\pc{\pi_{\cc}}
\def\pcsq{\pc^2}
\def\doi{\Delta\omega_i}
\def\sumg{\sum_{g=1}^G}
\def\nisumg{\ni\sumg}
\def\nifsumg{\nif\sumg}
\def\dnisumg{\ni\dsumg}
\def\meangi{N^{-1}\sum_{g=1}^G \sumig}
\def\yg{Y_{\sg}}
\def\xg{X_{\sg}}
\def\ng{n_g}
\def\ig{{i \in \mig }}
\def\iig{\ig}
\def\sumig{\sum_{\ig}}
\def\dsumig{\displaystyle\sumig}
\def\meanig{\ng^{-1}\sumig}
\def\dmeanig{\ng^{-1}\displaystyle\sumig}
\def\bxg{\bar X_g}
\def\ag{\alpha_g}
\def\zgj{Z_{gj}}
\def\sigsq{\sigma^2}
\def\aa{\textup{a}}
\def\cc{\textup{c}}
\def\nn{\textup{n}}
\def\tc{\tau_{\cc}}
\def\aa{\textup{a}}
\def\pa{\pi_{\aa}}
\def\ua{\ui = \aa}
\def\uc{\ui = \cc}
\def\un{\ui = \nn}
\def\lst{\texttt{ols}}
\def\tslst{\texttt{2sls}}
\def\olst{\lst}
\def\beginp{\begin{pmatrix}}
\def\endp{\end{pmatrix}}
\def\beginy{\begin{eqnarray}}
\def\endy{\end{eqnarray}}
\def\T{\top}
\newcommand{\bz}{\bar Z}
\newcommand{\bd}{\bar D}
\def\yizd{Y_i(z,d)}
\newcommand{\yid}{Y_{i}(d)}
\def\dizz{D_i(z)}
\def\diz{D_i(0)}
\def\dio{D_i(1)}
\def\ui{U_i}
\def\tyi{\widetilde Y_i}
\newcommand{\ti}{\tau_i}
\def\gj{{gj}}
\def\epg{\epsilon_{\sg}}
\def\epgt{\epg^\T}
\def\mq{\mathcal Q}
\def\mqg{\mq_{\sg}}
\def\mqgp{\mq_{[g']}}
\def\mqi{\mq_i}
\def\pypd{\yizd, \dizz}
\def\pypdz{\pypd,\zi}
\def\mqif{\mqifx}
\def\mqifx{\mqi = \{\pypdz, \xxi: z, d = 0, 1\}}
\def\mqgj{\mathcal Q_{gj}}
\def\byg{\by_g}
\def\bzg{\bz_g}
\def\bdg{\bd_g}
\def\byci{\by_{\ci}}
\def\bzci{\bz_{\ci}}
\def\bdci{\bd_{\ci}}
\newcommand{\tsls}{\tslsc}
\newcommand{\tslsc}{\textsc{2sls}}
\newcommand{\ols}{\textsc{ols}}
\def\tsfe{\textup{\textsc{2sfe}}}
\def\otn{i\in\mi}
\def\otg{g = \ot{G}}
\def\otng{j = \ot{\ng}}
\newcommand{\meani}{N^{-1}\sumi}
\newcommand{\meanif}{\dfrac{1}{N}\sumi}
\def\sumi{\sum_{i\in \mi}}
\def\hb{\widehat\beta}
\def\ag{\alpha_g}
\def\hsesq{\hse^2}
\def\ui{U_i}
\def\hse{\hat{\textup{se}}}
\def\htau{\hat\tau}
\def\htols{\htau_\ls}
\def\hb{\hat\beta}
\def\sqrtn{\sqrt N}
\def\sn{\mn(0,1)}
\def\sigesq{\ve}
\newcommand{\mn}{\mathcal N}
\def\ca{covariate-adjusted}
\def\tslsf{two-stage least squares}
\def\tsfef{two-stage least squares with fixed effects}
\def\tslsxformula{\ctsxformula}
\def\ctsxformula{\tslst(\yi \sim 1 + \di + \xxi \mid 1 + \zi + \xxi)}
\def\tsfexformula{\tslst(\yi \sim \di + \cci + \xxi \mid  \zi + \cci + \xxi)}
\def\tscdxformula{\tslst(\dyi \sim \ddi + \dxi \mid  \dzi + \dxi)}
\def\cpo{$C$-partialled out}
\def\tslsxf{\ca\ \cts}
\def\tsfexf{\ca\ \tsfe}
\def\cts{canonical \tsls}
\def\crse{cluster-robust standard error}
\def\crses{{\crse}s}
\def\baumid{baum2003instrumental}
\def\hansen{\cite{hansen2019asymptotic}}
\def\hansenid{hansen2019asymptotic}
\def\tslsr{\tsls\ regression}
\def\olsr{\ols\ regression}
\def\olsc{\textsc{ols}}
\def\cfes{{\cfe}s}
\def\cfe{cluster-specific effect}
\def\dgp{Under a linear IV model \citep{imbens2005robust}}
\def\condeff{the variation in \cfes\ dominates idiosyncratic variation}
\def\condiv{the IV has sufficient within-cluster variation}
\def\vaveword{the relative importance of cluster-level heterogeneity}
\def\ch{cluster heterogeneity}
\def\propxeff{Proposition~\ref{prop:x_eff}}
\def\defls{Definition~\ref{def:ls}}
\def\deffe{Definition~\ref{def:fe}}
\def\defcrse{Definition~\ref{def:crse}}
\def\thmhetero{\thm~\ref{thm:hetero}}
\def\thmlsx{\thm~\ref{thm:lsx}}
\def\thmfex{\thm~\ref{thm:fex}}
\def\thmlsxfex{\thms~\ref{thm:lsx}--\ref{thm:fex}}
\def\thmjoint{\thm~\ref{thm:joint}}
\def\thmeff{Theorem~\ref{thm:eff}}
\def\assmasymlsx{\assmasym\eqref{it:assm_asym_123}--\eqref{it:assm_asym_lsx}}
\def\assmasymfex{\assmasym\eqref{it:assm_asym_123}--\eqref{it:assm_asym_Y} and \eqref{it:assm_asym_fex}}
\def\lembound{\lem~\ref{lem:bound}}
\def\lemasymnox{\lem~\ref{lem:asym_nox}}
\def\lemen{\lem~\ref{lem:lm}\eqref{it:lem_lm_Cx}}
\def\lemlm{\lem~\ref{lem:lm}\eqref{it:lem_lm_CAC}}
\def\lemfwl{\lem~\ref{lem:fwl}}
\def\lembasu{\lem~\ref{lem:basu}}
\def\lema{\lem~\ref{lem:A}}
\def\lemz{\lem~\ref{lem:Z}}
\def\lemtsls{\lem~\ref{lem:hansen_2sls}}
\def\lemsm{\lem~\ref{lem:hansen_sm}}
\def\lemsmclt{\lem~\ref{lem:hansen_sm}\eqref{it:lem_sm_clt}}
\def\lemfex{\lem~\ref{lem:fex}}
\def\lemhetero{\lem~\ref{lem:hetero}}
\def\lemcef{\lem~\ref{lem:cef}}
\def\lemcefd{\lemcef\eqref{it:lem_cef_EDZ}}
\def\lemcefa{\lemcef\eqref{it:lem_cef_EAZ}}
\def\assmasym{Assumption~\ref{assm:asym}}
\def\assmcs{Assumption~\ref{assm:cs}}
\def\ccd{demeaned}
\def\assmim{Assumption~\ref{assm:im}}
\def\assmhm{Assumption~\ref{assm:hetero_marginal}}
\def\assmng{Assumption~\ref{assm:ng}} 
\def\assmiv{Assumption~\ref{assm:iv}}
\def\assmy{Assumption~\ref{assm:y}}
\def\assmyyx{\assms~\ref{assm:y} and \ref{assm:y_x}}
\def\assmyx{Assumption~\ref{assm:y_x}}
\def\assmyall{\assmhomo, \ref{assm:ng_main}--\ref{assm:y}, and proper moment and rank conditions that ensure $\inf_{N}\kn > 0$}
\def\assmhetero{\assms~\ref{assm:cs}--\ref{assm:iv} and \ref{assm:hetero_marginal}}
\def\assmtth{Assumptions~\ref{assm:iv}--\ref{assm:im}}
\def\assmhomo{\assms~\ref{assm:cs}--\ref{assm:im}}
\newcommand{\wrt}{with respect to}
\newcommand{\lmd}{\lambda}
\def\simiid{\overset{\textup{\iid}}{\sim}}
\def\eq{Equation}
\def\eqs{Equations}
\def\sec{Section}
\def\secs{Sections}
\def\prop{Proposition}
\def\thm{Theorem}
\def\thms{{\thm}s}
\def\lem{Lemma}
\def\assm{Assumption}
\def\assms{{\assm}s}
\def\fwlf{Frisch--Waugh--Lovell}
\def\csf{Cauchy--Schwarz inequality}
\def\csp{\text{Cauchy--Schwarz}}
\def\coeffv{\coeff\ vector}
\def\coeffs{coefficients}
\def\coeff{coefficient}
\def\nnb{\nonumber}
\def\begina{\begin{eqnarray*}}
\def\enda{\end{eqnarray*}}
\def\beginals{\begin{align*}}
\def\endals{\end{align*}}
\def\beginar{\begin{array}}
\def\endar{\end{array}}
\def\begineqs{\begin{equation*}}
\def\endeqs{\end{equation*}}
\def\begineq{\begin{equation}}
\def\endeq{\end{equation}}
\newcommand{\sm}{Supplementary Material}
\def\begini{\begin{itemize}}
\def\endi{\end{itemize}}
\def\begine{\begin{enumerate}}
\def\ende{\end{enumerate}}
\def\zi{Z_i}
\def\yi{Y_i}
\def\xxi{X_i}
\def\xit{X_i^\T}
\def\yio{Y_i(1)}
\def\yiz{Y_i(0)}
\newcommand{\by}{\bar Y}
\newcommand{\mbr}{\mathbb R}
\newcommand{\fn}[1]{ \|#1 \|_{\textsc{f}}}
\newcommand{\en}[1]{ \|#1 \|}
\newcommand{\spn}[1]{ \|#1\|_2}
\newcommand{\oeqt}[1]{\overset{\textup{#1}}{=}}
\newcommand{\oeq}[1]{\overset{#1}{=}}
\newcommand{\oleq}[1]{\overset{#1}{\le}}
\newcommand{\oleqt}[1]{\overset{\textup{#1}}{\le}}
\newcommand{\ot}[1]{1, \ldots, #1}
\newcommand{\indep}{\perp \!\!\! \perp}
\DeclareMathOperator\cov{cov}
\DeclareMathOperator\argmin{argmin}
\DeclareMathOperator{\diag}{\textup{diag}}
\DeclareMathOperator\corr{corr}
\DeclareMathOperator\res{Res} 
\DeclareMathOperator\proj{Proj} 
\DeclareMathOperator\E{\mathbb E}    
\DeclareMathOperator\var{var}
\def\ntinf{N\to\infty}
\def\iid{i.i.d.}
\newcommand{\pr}{\mathbb P}
\newcommand{\op}{o_{\pr}(1)}
\newcommand{\oop}{O_{\pr}(1)}
\newcommand{\ooo}{O(1)}
\newcommand{\rs}{\rightsquigarrow}
\def\ep{\epsilon}
\def\mn{\mathcal N}
\newcommand{\ds}{\displaystyle}
\def\spacingset#1{\renewcommand{\baselinestretch}%
{#1}\small\normalsize} 
\begin{document}

\def\ttl{Two-stage Least Squares with Clustered Data under the Local Average Treatment Effect Framework}

\if1\anon
{
  \title{\bf \ttl}
  \author{Anqi Zhao\thanks{
% Anqi Zhao, Fuqua School of Business, Duke University. Peng Ding, Department of Statistics, University of California, Berkeley. Fan Li, Professor, Department of Statistical Sciences, Duke University. 
%%%
Peng Ding thanks the U.S. National Science Foundation (grants \# 1945136 and \# 2514234) for the support. 
We thank Bruce Hansen for discussions on the regularity condition, and thank Hongbin Huang, Weihang Liu, and Songliang Chen for research assistance.} \\
 Fuqua School of Business, Duke University\\
 \\
 Peng Ding\\
 Department of Statistics, University of California, Berkeley\\
 \\
 Fan Li\\
 Department of Statistical Sciences, Duke University}

   \date{}
  \maketitle
} \fi

\if0\anon
{
  \bigskip
  \bigskip
  \bigskip
  \begin{center}
 {\LARGE\bf \ttl}
\end{center}
  \medskip
} \fi

\bigskip

\bigskip
\begin{abstract}
To estimate the causal effect of an endogenous treatment using clustered data,
the {\it canonical \tslsf} (\tsls) estimates a linear regression of the outcome on treatment status using an instrumental variable (IV) and conducts inference with \crses.
When both the treatment and the IV vary within clusters, 
an alternative {\it \tsfef} (\tsfe) additionally includes cluster indicators in the regression, thereby incorporating cluster information into point estimation as well. 
This paper studies the trade-off between these approaches within the local average treatment effect (LATE) framework. 
When clusters are homogeneous, we show that both approaches yield valid large-sample inference for the LATE, and that \tsfe\ is more efficient than \cts\ only when \condeff\ and \condiv. 
%%%%%%%%%%%%%%%%%%
When clusters are heterogeneous, we show that  \tsfe\ identifies a weighted average of cluster-specific LATEs, whereas the \cts\ generally does not. 
%%%%%%
We further propose a test for detecting \ch. 
\end{abstract}

\noindent%
{\it Keywords:} Causal inference, Cluster-robust standard error, Fixed effects, Instrumental variable, Local average treatment effect 
\vfill

\newpage
\spacingset{1.8} % DON'T change the spacing!
%%%%%%%%%%%%%%%%%%%%%%%%%%%%%%%%%%%%%%%%%%%%%%%%%%%%%%%%%%%%%%%%%%%%%%%%%
%%%% Main text entry area:
%%%%%%%%%%%%%%%%%%%%%%%%%%%%%%%%%%%%%%%%%%%%%%%%%%%%%%%%%%%%%%%

\section{Introduction}\label{sec:intro}
 %Clustered or grouped data are prevalent in social and biomedical research.
%Clustered data---where units of observation are nested within higher-level groups---are ubiquitous in business and economic research.
%%
%For example, user browsing behavior and purchase histories on online platforms or mobile applications consist of repeated transactions or visits by the same users over time and are naturally clustered at the user level.
%%
%Similarly, widely used panel data consist of repeated observations on the same units---such as individuals, firms, industries, or geographic locations---over multiple time periods, and are naturally clustered at the unit level. 

Clustered data, in which units of observation are nested within higher-level groups, are pervasive in empirical economics. 
Common examples include repeated observations on units over time and cross-sectional units grouped by geographic, institutional, or demographic characteristics, such as states, schools, and families.
When the objective is to estimate the causal effect of a potentially endogenous treatment, the {\it canonical \tslsf} (\tsls) estimates a linear regression of the outcome on treatment status using \tsls\ with an instrumental variable (IV) and use \crses\ to account for clustering in inference \citep[see, e.g.,][]{\baumid, cameron2015practitioner, andrews2019weak}. 
% Empirical studies: \citep{dahl}
%%%%%%%%%%%%%%
When both the treatment and IV vary within clusters, 
an alternative {\it \tsfef} (\tsfe) additionally includes cluster indicators in the regression, thereby incorporating the cluster information in point estimation as well \citep[see, e.g.,][]{angrist2004does, acconcia2014mafia, autor2013china}.

Despite the widespread use of these two procedures, their causal interpretation within the local average treatment effect (LATE) framework \citep{imbens1994identification, angrist1996identification} remains underexplored.
Our contributions are threefold.  
First, we establish the validity of both the \cts\ and \tsfe\ for large-sample inference of the LATE when clusters are homogeneous, and clarify their efficiency trade-offs. Specifically, 
\begine[(i)] 
\item Point estimators from the \cts\ and \tsfe\ are both consistent and asymptotically normal for estimating the LATE, and their associated \crses\ \citep{liang1986longitudinal} are consistent for the respective asymptotic variances. 
%%%%%%%%%%%%
%%%%%%%%%%%%
\item \dgp, \tsfe\ is more efficient than the \cts\ only when \condeff\ and \condiv.  
%This cautions against the routine inclusion of cluster indicators in regression specifications without considering efficiency implications. 
%%%%%%%%%%%%
%%%%%%%%%%%
\item Although covariate adjustment does not necessarily improve efficiency with clustered data \citep{su2021model}, we analyze the special case under the above linear IV model, and quantify the efficiency gain from adjusting for cluster-level covariates in the \cts.
In contrast, \tsfe\ absorbs all cluster-level variation through cluster indicators, thereby precluding the inclusion of cluster-constant covariates. 
This suggests a potential advantage of the \cts\ when informative cluster-level covariates are available.
\ende

Second, we show that with heterogeneous clusters, \tsfe\ identifies a weighted average of cluster-specific LATEs, whereas the \cts\ generally does not. 
Consequently, in settings with potentially heterogeneous clusters,
%---{\red such as firms of different sizes, stores in distinct markets, or user cohorts on digital platforms}
\tsfe\ provides a more interpretable estimand than the \cts. 
%%%%%%

Third, to guide empirical choice between the two procedures, 
we derive a joint central limit theorem for their respective point estimators under homogeneous clusters, and propose a test based on the difference between the two estimators for detecting \ch\ \citep{hausman1978specification}. 
Rejection of the test provides evidence of heterogeneity across clusters and suggests the use of \tsfe\ rather than the \cts. 

\paragraph*{Notation.} Let $1_{\{\cdot\}}$ denote the indicator function. 
Let $\rs$ denote convergence in distribution, and let $\op$ denote a sequence that converges to zero in probability. 
For $u_i\in \mbr$, $\{(v_{i1}, \ldots, v_{iK}): v_{ik} \in \mbr^{p_k}\}$, and $\{(w_{i1}, \ldots, w_{iL}): w_{il} \in \mbr^{q_l}\}$ defined for a population indexed by $\otn$, let $\tslst(u_i \sim v_{i1}+\cdots+v_{iK} \mid w_{i1}+\cdots+w_{iL})$ denote the \tsls\ regression of $u_i$ on $v_i = (v_{i1}^\T, \ldots, v_{iK}^\T)^\T$ over $\otn$, instrumented by $w_i = (w_{i1}^\T, \ldots, w_{iL}^\T)^\T$. We allow each $v_{ik}$ and $w_{il}$ to be a scalar or a vector and use + to denote concatenation of regressors. 
Throughout, we use $\tslst(\cdot)$ to denote the numerical outputs of \tsls\ without imposing any assumptions about the corresponding linear model. 
For a random variable $Y \in \mbr$ and a random vector $X \in \mbr^n$, let 
$\proj(Y\mid X)$ denote the linear projection of $Y$ onto $X$, defined as $\proj(Y\mid X) = b_0^\T X$, where $b_0 = \argmin_{b\in \mbr^{n}} \E\{(Y -  b^\T X)^2\} = \{\E(X X^\T)\}^{-1} \E(X Y)$. 

\section{Two 2SLS specifications for clustered data}\label{sec:setup}

\subsection{Clustered data with endogenous treatment}
Consider a study with two treatment levels, indexed by $d = 0,1$, and a study population of $N$ units nested within $G$ groups, or {\it clusters}, indexed by $g = \ot{G}$.
Let $\ng$ denote the size of cluster $g$, with $\ng \geq 1$ and $\sumg \ng = N$.
For a given study population, we view $G$ and $(n_1, \ldots, n_G)$ as fixed design parameters.
 
Index the $j$th unit within cluster $g$ by $gj$.
Let $\mi = \mif$ denote the set of all units, and $\mig = \{gj: \otng\}$ the set of units within cluster $g$.
When cluster membership is not central, we also use a single index $i$ to denote units in $\mi$, with $i = gj$. 
Let $\ci$ denote the cluster index of unit $i \in \mi$, with $\ci = g$ for $\ig$. 
Let $\cci = \ccif$ denote the corresponding vector of cluster indicators.

For each unit $i \in \mi$, we observe its treatment status $\di  \in \{0,1\}$,  outcome $\yi \in \mbr$,  a binary instrument $Z_i \in \{0,1\}$, and a vector of baseline covariates $\xxi\in\mbr^p$.
The goal is to estimate the causal effect of treatment on the outcome when the treatment status is endogenous.  

\subsection{2SLS for clustered data}\label{sec:setup_procedures}
We now formalize the definitions of the \cts\ and \tsfe, along with their corresponding point estimators and \crses. 
%
%As a foundation, \defcrse\ below reviews the numerical formulation of general \tsls\ following \citet[Sections 2.4--2.5]{baum2003instrumental}.
%
%\begin{definition}\label{def:crse}
%For  $u_i \in \mbr$ and $v_i, w_i  \in \mbr^m$ defined for $\otn$, let $\tslst(u_i\sim v_i \mid w_i)$ denote the \tslsr\ of $u_i$ on $v_i$ instrumented by $w_i$.
%Let $\hb \in \mbr^m$ denote the estimated \coeffv\ of $v_i$ from $\tslst(u_i\sim v_i \mid w_i)$. 
%%%%%%%%%%%%%%%%%%%%%%
%Let $r_i = u_i - v_i^\T\hb$ denote the residual for unit $i$, and $\rg = \vjh{r} \in \mbr^{\ng}$ the residual vector for cluster $g$. 
%The cluster-robust covariance of $\hb$ equals
%\begineqs
%\widehat{\cov}(\hb) = \xpwxi (V^\T \pw \ho \pw V )\xpwxi,
%\endeqs
%where $V = \vgj{v} \in \mbr^{N\times m}$, $\pw = \pwf$ with $W = \vgj{w} \in \mbr^{N\times m}$, and $\ho = \diag(r_{[1]} r_{[1]}^\T,  \ldots,  r_{[G]} r_{[G]}^\T) \in \mbr^{N\times N}$ denotes the block diagonal matrix with $\{r_{[g]} r_{[g]}^\T: g = \ot{G}\}$ on the diagonal. 
%The \crses\ of the elements of $\hb$ are the square roots of the diagonal elements of $\widehat{\cov}(\hb)$. 
%\end{definition}

%Definitions \ref{def:ls}--\ref{def:fe} below formalize the \cts\ and \tsfe\ following \defcrse.
%
We begin with the unadjusted specifications to present our main theory and quantify the impact of covariate adjustment in \sec~\ref{sec:x}.  
As a foundation, \defcrse\ below reviews the numerical formulation of general \tsls\ following \citet[Sections 2.4--2.5]{baum2003instrumental}.

\begin{definition}\label{def:crse}
For  $u_i \in \mbr$ and $v_i, w_i  \in \mbr^m$ defined for $\otn$, let $\tslst(u_i\sim v_i \mid w_i)$ denote the \tslsr\ of $u_i$ on $v_i$ instrumented by $w_i$.
Let $\hb \in \mbr^m$ denote the estimated \coeffv\ of $v_i$ from $\tslst(u_i\sim v_i \mid w_i)$. 
%%%%%%%%%%%%%%%%%%%%%
Let $r_i = u_i - v_i^\T\hb$ denote the residual for unit $i$, and $\rg = \vjh{r} \in \mbr^{\ng}$ the residual vector for cluster $g$. 
The cluster-robust covariance of $\hb$ equals
\begineqs
\widehat{\cov}(\hb) = \xpwxi (V^\T \pw \ho \pw V )\xpwxi,
\endeqs
where $V = \vgj{v} \in \mbr^{N\times m}$, $\pw = \pwf$ with $W = \vgj{w} \in \mbr^{N\times m}$, and $\ho = \diag(r_{[1]} r_{[1]}^\T,  \ldots,  r_{[G]} r_{[G]}^\T) \in \mbr^{N\times N}$ denotes the block diagonal matrix with $\{r_{[g]} r_{[g]}^\T: g = \ot{G}\}$ on the diagonal. 
The \crses\ of the elements of $\hb$ are the square roots of the diagonal elements of $\widehat{\cov}(\hb)$. 
\end{definition}

Definition~\ref{def:ls} formalizes the \cts.

\begin{definition}[Canonical \tsls]\label{def:ls}
Fit $\tslst(Y_i \sim 1 + D_i \mid 1 + Z_i)$, the \tslsr\ of $Y_i$ on $v_i = (1, D_i)^\T$ instrumented by $w_i = (1, Z_i)^\T$, following \defcrse. 
%%%%%%%%%%%%%%%%
Let $\htls$ denote the estimated coefficient on $\di$, and let $\hsels$ denote the corresponding \crse. %\ \citep[Sections 2.4--2.5]{baum2003instrumental}.
\end{definition}

The \cts\ in \defls\ uses no cluster information in the regression specification. 
Consequently, clustering is ignored in point estimation and addressed only through the \crse\ $\hsels$.
As noted in \sec~\ref{sec:intro}, an alternative is to include cluster indicators $\cci = \ccif$ in the \tsls\ specification, thereby accounting for clustering in point estimation as well. We call this procedure {\it \tsfef} (\tsfe), formalized in \deffe\ below.

\begin{definition}[Two-stage least squares with fixed effects (\tsfe)]\label{def:fe}
Fit $\tslst(Y_i \sim D_i + \cci \mid Z_i + \cci)$, 
 the \tslsr\ of $Y_i$ on $v_i = (D_i, \cci^\T)^\T$ instrumented by $w_i = (Z_i, \cci^\T)^\T$, following \defcrse. 
%%%%%%%%%%%%%%%%%%%
Let $\htfe$ denote the estimated coefficient on $\di$, and let $\hsefe$ denote the corresponding \crse. % \ \citep[Sections 2.4--2.5]{baum2003instrumental}.
\end{definition}

Following \cite{cameron2015practitioner}, \tsfe\ employs the \crse, $\hsefe$, even after including $C_i$ in the regression specification.
The coefficients on $C_i$ are commonly interpreted as cluster fixed effects in regression-based fixed-effects analyses \citep{wooldridge2010econometric}.
A restriction of \tsfe\ is that it is well defined only when both $Z_i$ and $D_i$ vary within clusters. It is therefore not applicable when the treatment, the IV, or both are constant within clusters.
A notable example is cluster randomized experiments with noncompliance \citep{frangakis2002clustered, jo2008cluster, agbla2020estimating}, in which the random treatment assignment serves as a cluster-level IV.

%\begin{remark}\label{rmk:cd}
%Let $\dyi = \dyif$ denote the demeaned variant of $\yi$, where $\byg = \meanig \yi$ is the within-cluster average of $\yi$ for cluster $g$. Similarly, define $\ddi$ and $\dzi$ as the demeaned variants of $\di$ and $\zi$. 
%%
%The \fwlf\ type results for \tsls\ in \basu\ imply that $(\htfe, \hsefe)$ equal the coefficient on $\ddi$ and associated \crse\ in the \tslsr\ of $\dyi$ on $\ddi$ instrumented by $\dzi$. 
%This ensures that \tsfe\ remains computationally feasible even with a large number of clusters.
%\end{remark}

\subsection{Assumptions under the LATE framework}\label{sec:validity}
%\subsection{Sampling assumptions under the LATE framework}
We formulate the sampling and IV assumptions using the potential outcomes framework \citep{imbens1994identification, angrist1996identification}.  
For $z, d = 0,1$,  
let $\dizz$ denote the potential treatment status of unit $i$ if $\zi=z$, and 
let $\yizd$ denote the potential outcome of unit $i$ if $(\zi, \di ) = (z,d)$.
The observed treatment status and outcome satisfy $\di  = \di (\zi)$ and $\yi = \yi(\zi, \di ) = \yi(\zi, \di(\zi))$.
Following standard terminology, we classify the units into four compliance types, denoted by $U_i$, based on the joint values of $\dio$ and $\diz$. Unit $i$ is 
an {\it always-taker} if $\dio = \diz = 1$, denoted by $\ui = \aa$; 
a {\it complier} if $\dio = 1$ and $\diz = 0$, denoted by $\ui = \cc$; 
a {\it defier} if $\dio = 0$ and $\diz = 1$, denoted by $\ui = \textup{d}$; and 
a {\it never-taker} if $\dio = \diz = 0$, denoted by $\ui = \nn$.

Let $\mqif$ denote the collection of all variables associated with unit $i$, and let $\mqg = \{\mqi: \ig\}$ denote the corresponding collection for cluster $g$.
\assmcs\ below states the standard {\it cluster sampling} assumption \citep{wooldridge2010econometric, hansen2019asymptotic}, requiring independence across clusters while allowing arbitrary dependence within clusters. 

\begin{assumption}[Cluster sampling]\label{assm:cs}
$\mqg \indep \mqgp$ for $ g \neq g' \in \{\ot{G}\}$. 
\end{assumption}

\assmiv\ below states the IV assumptions for clustered data.

\begin{assumption}[IV assumptions for clustered data]\label{assm:iv} 
For $\otn$, the following holds:  
\begine[(i)]
\item\label{it:assm_iv_er} \textit{Exclusion restriction:}
$\yi(0,d) = \yi(1,d) = \yi(d)$  for $d = 0,1$, where $\yi(d)$ denotes the common value.
%%%%%%%%%%%%%%%%%
%%%%%%%%%%%%%%%%%%
\item \label{it:assm_iv_random} \textit{Cluster-level random assignment:} For $g = \ot{G}$,\\ 
$\left\{\zi: \ig
\right\} \indep \left\{\pypd, \xxi: z, d= 0,1; \ \ig \right\}.$
%%%%%%%%%%%%%%%%%%%%%%
%%%%%%%%%%%%%%%%%%%%%%%%
\item \textit{Overlap:} $0 < \pr(\zi = 1) < 1$. 
\item\label{it:assm_iv_relevance} \textit{Relevance:} $\E\{\di(1) - \di(0)\} \neq 0$.
\item\label{it:assm_iv_mono} \textit{Monotonicity:} $\di(1) \geq \di(0)$.  
\ende 
\end{assumption}

\assmiv\ follows the classical IV assumptions in \cite{angrist1996identification} except for \assmiv\eqref{it:assm_iv_random}.
Unlike the classical counterpart $
\zi \indep \{\pypd, X_i:   z, d= 0,1 \}$, 
which requires random assignment at the unit level, \assmiv\eqref{it:assm_iv_random} requires random assignment of the IV at the cluster level, 
such that the IV for a given unit is independent not only of its own potential treatment status and outcomes but also of those of other units within the same cluster.
Aside from this, 
\assmiv\eqref{it:assm_iv_random} imposes no constraints on the joint distribution of $\{Z_i: \ig\}$ within each cluster. 
It therefore accommodates (a) IVs that are independent within clusters, (b) cluster-level IVs that are constant within clusters, and (c) IVs as-if assigned via stratified randomization, so that the proportion of  units with $Z_i = 1$ within each cluster is fixed at prespecified values $e_g$.

\assms~\ref{assm:im}--\ref{assm:hetero_marginal} below state two alternative assumptions on the marginal distributions of $\mqi$ across clusters.
\assmim\ requires a common marginal distribution across all units.
This assumption holds under a {\it two-stage}, or {\it grouped sampling}, design \citep{arkhangelsky2024fixed}, in which clusters are randomly sampled from a superpopulation and, conditional on selection, $\ng$ units are randomly sampled from each sampled cluster $g$. 
Assumption~\ref{assm:hetero_marginal} relaxes \assmim\  by allowing the marginal distribution of $\mqi$ to vary across clusters.
We refer to \assms~\ref{assm:im} and \ref{assm:hetero_marginal} as the {\it homogeneous-clusters}  and {\it heterogeneous-clusters} assumptions, respectively. 

\begin{assumption}[Homogeneous clusters]\label{assm:im}
 $\{\mqi:\otn\}$ are identically distributed.
\end{assumption}

\begin{assumption}[Heterogeneous clusters]\label{assm:hetero_marginal} 
For each $g \in \{\ot{G}\}$, $\{\mqgj: \otng\}$ are identically distributed within the cluster, whereas $\mq_{gj}$ and $\mq_{g'j'}$ may differ in distribution for $g \neq g'$. 
\end{assumption}

We assume \assmhomo\ in Section~\ref{sec:homo} to develop theory under homogeneous clusters, and discuss relaxations of \assmim\ to \assm~\ref{assm:hetero_marginal}  in \sec~\ref{sec:hetero}.

\section{Theory under homogeneous clusters}\label{sec:homo}
Assume \assmhomo\ throughout this section. Recall that $\yid$  denotes the common value of $\yi(0,d) = \yi(1,d)$ under exclusion restriction in \assmiv\eqref{it:assm_iv_er}.
%  
%\assmiv\eqref{it:assm_iv_relevance}--\eqref{it:assm_iv_mono} (relevance, monotonicity) further imply that $\pr(\uc) > 0$.
Define $\tau_i = \yio - \yiz$ as the individual treatment effect of unit $i$, and define 
\begineq\label{eq:tc}
\tc  = \E(\tau_i \mid \uc)
\endeq as the {\it local average treatment effect} (LATE) on compliers under {\assmtth}. 
We establish below the validity and relative efficiency of the \cts\ and \tsfe\ for estimating $\tc$. 
%
%{\red Unless otherwise stated, we view the \cts\ and \tsfe\ as numerical devices for computing $(\htls, \hsels)$ and $(\htfe, \hsefe)$, respectively, without assuming that the associated regression models are correctly specified.}

\subsection{Validity of Wald-type inference}
\propclt\ below establishes central limit theorems for $(\htls, \hsels)$ and $(\htfe, \hsefe)$ under \assmhomo, thereby validating the \cts\ and \tsfe\ for large-sample Wald-type inference on the LATE with homogeneous clusters. 
We adopt the asymptotic framework of \cite{hansen2019asymptotic}, reviewed in \assmngm\ below. 
The framework requires clusters to be asymptotically negligible, implying that the number of clusters $G$ diverges. 
Beyond this, it allows for unequal and diverging cluster sizes $\ng$.

\begin{assumption}\label{assm:ng_main}
As $\ntinf$, $\ds\max_{\otg}  \,  {\ngsq}/N \to 0$ and $ {\ds \limsup_{N\to\infty}} \, N^{-1}\sumg n_g^2   < \infty$. 
\end{assumption}

Let $\sz =  \meani(\zi - \bz_{\ci})^2$ denote the sample within-cluster variance of $Z_i$, where $\bzg = \meanig \zi$.
Under \assmtth, let $
\vz = \var(Z_i) > 0$ and $\pc = \pr(\uc) > 0$ denote the IV probability and complier proportion, respectively, and define 
\beginy\label{eq:kn}
\kn = \esz / \vz.   
\endy
Intuitively, $\kappa_N$ captures  the within-cluster variance of treatment assignments and the degree of balance in treatment proportions, and is related to the balance parameter in the covariate-adaptive randomization literature \citep{bugni2018inference}. 
We can show that
\begineq\label{eq:kn_01}
\kn \in [0,1],
\endeq 
where $\kn = 0$ if and only if $Z_i$ is constant within each cluster, so that \tsfe\ is degenerate; 
and $\kn = 1$ if and only if 
\begineq\label{eq:sre}
\text{$\var(\bzg) = 0$ for all $g$;}
\endeq 
see \lemz\ in the \sm. 
Condition~\eqref{eq:sre} implies that the IV proportion $\bzg$ is fixed within each cluster, so that under \assmiv, the IV is as if randomly assigned via stratified randomization.
Let $\ai = \yi - \di\tc$, $\mua = \E(\ai)$, and $\bag = \meanig \ai$. Let $e = \pr(Z_i = 1)$ denote the IV probability under \assmim.

\begin{proposition}\label{prop:clt}
Under \assmhomo, \ref{assm:ng_main}, and proper moment and rank conditions,  
$$
\sqrtn (\hts - \tc)/\sqrt{\vs}  \rs  \sn,\quad  N\hses^2 = \vs + \op, \quad (\hts - \tc)/\hses  \rs  \sn  
$$
for $* = \fe, \ls$, 
where
\begineq\label{eq:vls_vfe_def}
\beginar{rcl}
\vls 
&=& \ds \dfrac{1}{\vzpcsq} \cdot\nifsumg \var\left\{\sumig (\zi - e) (\cai)\right\},\\
%%%%%%%%%
\vfe
&=& \ds \dfrac{1}{\kvzpcsq} \cdot\nifsumg \var\left\{\sumig (\zi - \bzg) (\ai - \bag)\right\}. 
\endar
\endeq 
\end{proposition}

\propclt\ establishes the consistency and asymptotic normality of $\htls$ and $\htfe$ for estimating $\tc$, and justifies using \crses\ to estimate the corresponding asymptotic variances. 
The result for $(\htfe, \hsefe)$ echoes \cite{cameron2015practitioner}, underscoring the need to employ \crses\ even when cluster indicators are included.
We provide the formal moment and rank conditions in \thmjoint\ in Section~\ref{sec:hetero}, where we establish the joint asymptotic distribution of $(\htls, \htfe)$ for testing potential violations of \assmim. 
Both central limit theorems follow from combining the asymptotic theory for clustered data  \citep{hansen2019asymptotic} with the causal interpretation of \tsls\ coefficients under the LATE framework, without requiring correct specification of the outcome models. 
Establishing the asymptotic distribution of $\htfe$ requires additional technical arguments, as the dimension of $C_i$ diverges.

\subsection{Efficiency comparison}\label{sec:eff}
From \eqref{eq:vls_vfe_def}, the ratio $\vls/ \vfe$ measures the asymptotic efficiency of $\htls$ relative to $\htfe$ under arbitrary potential outcomes model satisfying \assmhomo. 
This ratio depends on the IV assignment mechanism through $(\kn, \zi, \bzg)$ and on the outcome model through $(\aai, \bag)$. In particular, the asymptotic variance of $\htfe$, $\vfe$, has $\kn$ in the denominator, implying that efficiency requires sufficient within-cluster variation in the IV; c.f. \eqref{eq:kn}--\eqref{eq:kn_01}. 
To provide insight, we quantify below the respective contributions of these components to relative efficiency under a linear IV model. 

As we show below, \tsfe\ is more efficient than the \cts\ only when \condeff\ and \condiv, and adjusting for cluster-level covariates improves the efficiency of the \cts.
These results caution against the routine inclusion of cluster indicators in regression specifications without evaluating the associated efficiency trade-offs, particularly when informative cluster-level covariates are available.

\subsubsection{Linear IV model for comparing efficiency}\label{sec:model}
\assmy\ below specifies the outcome model that we use to study relative efficiency. 

\begin{assumption}\label{assm:y}
For $\otn$, the potential outcomes satisfy
\begine[(i)]
\item\label{it:assm_y_linear iv} $\tau_i = \tc$ for compliers;
\item\label{it:assm_y_A} $\yi  = \di \tc + \aci + \epi$, where $\{\alpha_g \in \mbr: \otg\}$ are  unobserved \cfes\ and $\{\ep_i\in \mbr: \otn\}$ are unobserved unit-level errors with 
\begine[-]
\item $\E(\ag) = \mual$ and $\var(\ag) = \va$ for unknown constants $(\mual,\va)$;
\item  $\epg = (\ep_{g1}, \ldots, \ep_{g,\ng})^\T$ are independent across $\otg$;
\item  $ \E(\epg \mid \ag) = 0_{\ng}$ and $ \cov(\epg \mid \ag) = \ve \ing$ for an unknown constant $\sigesq > 0$. 
\ende 
\ende
\end{assumption}

\assmy\eqref{it:assm_y_linear iv} imposes constant treatment effects for compliers, but otherwise places no restrictions on always-takers and never-takers.
It implies $Y_i(D_i(1)) - Y_i(D_i(0)) = \tc\{\dio - \diz\}$, which is the linear IV model proposed by \cite{imbens2005robust}. 
Under \assmy\eqref{it:assm_y_linear iv}, it follows that $\aai = \yi - \di\tc = \yiz + \oua(\tau_i - \tc)$,  so that $\aai$ is a baseline unobserved attribute independent of $\zi$.
\assmy\eqref{it:assm_y_A} then imposes a cluster structure on $\aai$.
Under \assmy\eqref{it:assm_y_A}, the observed outcome shares the same functional form as the fixed effects model \citep{wooldridge2010econometric, arkhangelsky2024fixed}.
However, let $\dg = (D_{g1}, \ldots, D_{g,\ng})^\T$ denote the treatment vector for cluster $g$. The fixed effects model further requires 
\begineq\label{eq:fe_assm}
\E(\epg\mid \dg, \ag) = 0 \for g = \ot{G}  
\endeq
in addition to \assmy. In contrast, \assmy\ places no restrictions on the dependence between $\dg$ and $\{ \ag, \epg : g = \ot{G}\}$.
% 
%Recall that \tsfe\ is the two-stage analog of the \olsr\ of $Y_i$ on $(D_i, \cci)$, denoted by $\ydfe$. 
%In general, $\ydfe$ is valid for estimating $\tau$ under condition~\eqref{eq:fe_assm}, but may be inconsistent when $D_i$ is correlated with $\epg$. 
%%
%In contrast, \propclt\ implies that both $\htls$ and $\htfe$ are consistent for $\tau$ regardless of whether \eqref{eq:fe_assm} holds.
%%
%This illustrates an advantage of the \cts\ and \tsfe\ over $\ydfe$. 
When \eqref{eq:fe_assm} is violated, standard fixed effects estimation via the \olsr\ of $Y_i$ on $(D_i, \cci)$ may be inconsistent. 

\propclt\ implies that both $\htls$ and $\htfe$ are consistent for $\tc$ regardless of whether \eqref{eq:fe_assm} or \assmy\ holds.
We establish below their relative efficiency under \assmy. 

\subsubsection{Results}
Recall that $\kn  = \esz / \vz \in [0,1]$ as defined in \eqref{eq:kn}--\eqref{eq:kn_01}. Theorem~\ref{thm:eff} below provides the simplified forms of $\vls$ and $\vfe$ under \assmy, and establishes conditions for efficiency gain by \tsfe. 
To simplify the presentation, we assume $\inf_N\kn > 0$, so that \tsfe\ is nondegenerate for all $N$. 

\begin{theorem}\label{thm:eff}  
Assume \assmyall. 
Let $
\cn = N\var(\bz)/\vz$, where $\bz = \meani \zi$. Then
\begineq\label{eq:vr_def}
\vls  = \dfrac{\ve + \va \cdot \cn}{\vz \cdot \pcsq}, 
\quad 
\vfe = \dfrac{\ve}{\kn \cdot \vz \cdot \pcsq} ,  
\quad
\vr = \vrf, 
\endeq
where 
\begine[(i)]
\item $\vls > \vfe$ if and only if 
\begineq\label{eq:eff_cutoff}
\vavef  >  \dfrac{1-\kn}{\kn } \cdot \dfrac{1}{ \cn}.
\endeq
%%%%%%%%%%%%%
\item\label{it:eff_equal to 1} $\vls  = \vfe$ if \condsre\ holds.   
\ende
\end{theorem}

From \eqref{eq:vr_def}, under \assmy, the asymptotic variance of $\htls$, $\vls$, depends on both $\ve$ and $\va$, whereas that of $\htfe$, $\vfe$, depends only on $\ve$. 
%%%%%%%%%%%%%%%
From \eqref{eq:kn_01} and \eqref{eq:vr_def}, the ratio $\vls/\vfe$ may be either greater or less than 1, implying that including cluster indicators via \tsfe\ does not necessarily improve efficiency relative to the \cts. 
In particular, \eqref{eq:vr_def} implies that $\vls/\vfe$ depends on both $\vave$, which captures \vaveword\ in the outcome model under \assmy, and 
 $(\kn,\cn)$, which are determined by the IV assignment mechanism.  
We discuss their respective impacts below. 
  
\paragraph*{\bf Impact of $\vave$ on $\vls/\vfe$.} 
From \eqref{eq:vr_def}, the ratio $\vls/\vfe$ increases with $\vave$. Therefore, greater cluster-level heterogeneity increases the efficiency of $\htfe$ relative to $\htls$. 
In particular, \eqref{eq:eff_cutoff} provides a lower bound on $\vave$ that ensures the asymptotic efficiency of $\htfe$ relative to $\htls$. 
As two limiting cases: 
\begini
\item The classical fixed effects model views fixed effects as unrestricted constants \citep{raudenbush2002hierarchical}.
Heuristically, this can be viewed as assuming $\va \approx \infty$ in \assmy.
From this perspective, the condition in \eqref{eq:eff_cutoff} necessarily holds, implying the asymptotic efficiency of $\htfe$ relative to $\htls$. 
In this setting, clusters can also be viewed as heterogeneous with distinct outcome distributions; see \assm~\ref{assm:hetero_marginal} and \sec~\ref{sec:hetero}.
%%%%%%%%%%%
%%%%%%%%%%% 
\item When $\va=0$, \eqref{eq:kn_01} and \eqref{eq:vr_def} imply 
\begin{equation}\label{eq:eff_va=0}
\vr = \kn \leq 1,
\end{equation} where the equality holds if and only if \condsre\ holds, under which the IV proportion $\bzg$ is fixed within each cluster. This implies that $\htfe$ is less efficient than $\htls$ in the absence of \cfes.
\end{itemize} 

%%%%%%%%%%%%%%%%%%%
%%%%%%%%%%%%%%%%%%%

\paragraph*{\bf Impact of $(\kn,\cn)$ on $\vls/\vfe$.} 
\lemz\ in the \sm\ implies that $\kn$ and $\cn$ are both functions of the average pairwise correlation of $\zi$ within cluster $g$, denoted by $\bar\rho_g = \{\ng(\ng-1)\}^{-1}\sum_{i\neq i'\in \sg}\corr(\zi, \zi') \in [-1/(\ng-1),1]$, with both $\vlsn$ and $\vfen$ increase with $\bar\rho_g$. 
The net effect of $\bar\rho_g$ on $\vls/\vfe$ is therefore indeterminate. 
To provide insight, consider a special case in which (a) clusters are of equal sizes with $n_g = \nb$ for all $\otg$; (b) IVs are equicorrelated within clusters with $\corr(Z_i, Z_{i'}) =  \rz$ for all $i \neq i'$ in the same cluster. This is appropriate under exchangeable observations and includes the case in which IVs are independently assigned within clusters.
Then $\kn = (1-\bni)(1-\rz)$ and $\cn = (\bn - 1)\rz + 1$ by \lemz, so that 
 $\vr > 1$ if and only if 
\begineq\label{eq:cor_z}
-\dfrac{1}{\bn-1} < \rz < 1-\dfrac{\ve}{\va}\cdot\dfrac{1}{\bn-1},
\endeq
and $\vr = 1$ when $\rz$ attains the boundary values in \eqref{eq:cor_z}. 
These results characterize the nonlinear relationship between relative efficiency and within-cluster IV variation, as captured by $\rz$.  
The necessary and sufficient condition in \eqref{eq:cor_z} implies that \tsfe\ is more efficient only if $\rz < 1-  \dfrac{\ve}{\va} \cdot \dfrac{1}{\bn-1}$, but this efficiency gain vanishes as $\rz$ approaches $-\dfrac{1}{\bn-1}$, the lower bound that $\rz$ can attain corresponding to the case of fixed within-cluster IV proportions in \condsre.
The interval in \eqref{eq:cor_z} is non-empty if and only if  $\va > \ve/\bn$. 

\subsection{Impact of covariate adjustment}\label{sec:x}
We next examine the impact of covariate adjustment. 
Recall from \assmiv\ that we focus on settings in which the IV is independent of the covariates. Therefore, including covariates does not affect the validity of inference, but may improve precision.
Both the \cts\ and \tsfe\ accommodate individual-level covariates. 
However, only the \cts\ accommodates cluster-level covariates, whereas \tsfe\ absorbs them through 
$C_i$. 
To simplify the exposition, we focus on cluster-level covariates, and show that under Assumption~\ref{assm:y}, adjusting for cluster-level covariates can improve the efficiency of the \cts.
This suggests a potential advantage of the \cts\ over \tsfe\ when informative cluster-level covariates are available. 

Recall that $\xxi \in \mbr^p$ denotes the vector of baseline covariates for $\otn$.
Define
\begineq\label{eq:fex}
\beginar{l}
\ctsxformula,\\
\tsfexformula
\endar
\endeq
as the \ca\ variants of the \cts\ and \tsfe, respectively. %, following \defcrse. 
Let $(\htlsx, \hselsx)$ and $(\htfex, \hsefex)$ denote the resulting estimated coefficients on $\di$ and their corresponding {\crse}s.
Similar to \propclt, we can show that 
under \assmhomo, \ref{assm:ng_main}, and proper moment and rank conditions, 
\begineqs
(\htx - \tc) / \hsex \rs \sn, \quad (\htfex - \tc) / \hsefex \rs \sn. 
\endeqs
This establishes the validity of $(\htlsx, \hselsx)$ and $(\htfex, \hsefex)$ for large-sample Wald-type inference on the LATE under arbitrary potential outcomes models; see \thmlsxfex\ in the \sm\ for details. 

We now examine the impact of covariate adjustment on efficiency when the potential outcomes  model satisfies Assumption \ref{assm:y}. 
As a basis, \assmyx\ below augments \assmy\ to require exogeneity of $\xxi$ conditional on the \cfes.  
\begin{assumption}\label{assm:y_x}
In the setting of \assmy, further assume 
$ \E(\epg \mid \xg, \ag) = 0_{\ng}$ and $ \cov(\epg \mid \xg, \ag) = \ve \ing$, where $\xg = \{\xxi: \ig\}$. 
\end{assumption}

\prop~\ref{prop:x_eff} below quantifies the efficiency gain from adjusting for cluster-level covariates in the \cts.

\begin{proposition}\label{prop:x_eff}
Assume \assmhomo, \ref{assm:ng_main}--\ref{assm:y_x}, and proper moment and rank conditions.
Suppose that $\xxi$ is constant within clusters, and let $\xsg$ denote the common value of $\xxi$ within cluster $g$. 
Then the \ca\ \tsfe\ in \eqref{eq:fex} reduces to the covariate-free \tsfe\ in \deffe, with $\hsefexsq = \hsefesq$, while $\hselsxsq$ from the \ca\ \cts\ satisfies
\begina
\dfrac{\hselsxsq}{\hsesqls} &=& 1 - \var\{\proj(\ag\mid 1, \xsg)\} \cdot\dfrac{\cn}{\ve + \va\cdot\cn} + \op,\\
%%%%%%%%%%%%
\dfrac{\hselsxsq}{\hsesqfe} &=& \dfrac{\hselsxsq}{\hsesqls}\cdot \vrf + \op,
\enda where $\proj(\ag\mid 1, \xsg)$ denotes the linear projection of $\ag$ onto $(1,\xsg)$, with $\var\{\proj(\ag\mid 1, \xsg)\} \in [0, \va]$. 
 
\end{proposition}

\propxeff\ implies that the variance reduction for the \cts\ from adjusting for $\xxi = \xsg$ results from the portion of variation in $\ag$ linearly explained by $\xsg$. 
In particular, the linear projection of $\ag$ onto $(1, \xsg)$, denoted by $\proj(\ag\mid 1, \xsg)$, captures the part in $\ag$ linearly explained by $\xsg$.
When $\proj(\ag\mid 1, \xsg) = 0$, so that $\xsg$ explains no variation in $\ag$, covariate adjustment yields no efficiency gain, and the \ca\ \cts\ is equivalent to its unadjusted counterpart. The efficiency comparison between $\htlsx$ and $\htfe$ reduces to \thmeff:
\begineqs
\dfrac{\hselsxsq}{\hsesqls} = 1  + \op, \quad \dfrac{\hselsxsq}{\hsesqfe} =  \vrf + \op.
\endeqs
Conversely, when $\ag$ in linear in $\xsg$, we have $\var\{\proj(\ag\mid 1, \xsg)\}=\va$ so that 
\begineqs
\dfrac{\hselsxsq}{\hsesqls} = \dfrac{\ve}{\ve + \va\cdot\cn}  + \op, \quad \dfrac{\hselsxsq}{\hsesqfe} =  \kn + \op,
\endeqs
implying the asymptotic efficiency of $\htlsx$ relative to $\htfe$. 
Intuitively, in this case, all \cfes\ are explained by $(1,\xsg)$ under the \cts\ and by cluster indicators $\cci$ under \tsfe, so the efficiency comparison reduces to the setting without \cfes\ considered in \eqref{eq:eff_va=0}.
Together, these results identify when 2SFE dominates and when cluster-level covariate adjustment in canonical 2SLS should be preferred. 

\section{Theory under heterogeneous clusters}\label{sec:hetero}
\subsection{Causal interpretation of $\htls$ and $\htfe$}\label{sec:hetero_plim}
The discussion so far has assumed identical marginal distributions of $\mqi$ across all units under \assmim. 
We now examine the causal interpretation of $\htls$ and $\htfe$ under cross-cluster heterogeneity, as formalized in \assmhm.  

Assume \assmhetero\ throughout, so that the marginal distribution is now common only within cluster. 
Define {\it cluster-specific LATE} as
\begineq\label{eq:tcg}
\tcg = \E(\tau_{gj} \mid U_{gj} = \cc)  
\endeq
for $\otg$, 
where the expectation is taken \wrt\ the common marginal distribution of units $\{gj: j = \ot{\ng}\}$ in cluster $g$.
Let $\eg = \pr(\zgj = 1)$ and $\pcg = \pr(\ugj = \cc)$ denote the cluster-specific counterparts of $e = \pr(\zi = 1)$ and $\pc = \pr(\ui = \cc)$  under \assmim. 
Then $\eg$ defines the IV propensity score conditional on cluster membership.
Let $\muyzg$, $\pag$, and $\tauag$ denote the common values of $\E\{Y_\gj(0)\}$, $\pr(\ugj = \aa)$, and $\E(\tau_\gj \mid \ugj = \aa)$ across $\otng$, respectively, for units within cluster $g$.   
Let $\phigb$ denote the average pairwise correlation of $\zi$ within cluster $g$, including the diagonal. 

\begin{theorem}\label{thm:hetero} 
As $\ntinf$, if \assmhetero\ hold, $\ds\max_{\otg} \ng/N \to 0$, and $ \sup_{\otn}  \E ( \yi^2 ) < \infty$,
then 
\begineqs
\beginar{l}
\htls 
=
\dfrac
{\ds\sumg \ng \eg(\tmug + \pcg\tcg) - \ni\left( \sumg \ng  \eg\right)\left\{ \sumg\ng (\tmug + \eg\pcg\tcg)\right\}}
{\ds\sumg \ng \eg(\pag + \pcg) - \ni\left( \sumg \ng  \eg\right)\left\{ \sumg\ng (\pag + \eg\pcg)\right\}} + \op,\medskip\\
%
%
%
%%%%%%%%%%%%%%%%%%%%%
\htfe 
=
\ds\sumg \kgfe \cdot \tcg + \op,
\endar
\endeqs
where $\tmug= \muyzg +\pag\tauag$ and $\kgfe = \dfrac{ \ng(1-\phig)e_g(1-e_g) \cdot \pcg}{\sumg \ng(1-\phig)e_g(1-e_g) \cdot \pcg}$
with $\kgfe \in [0,1]$ and $\sumg \kgfe = 1$. 
Further assume that $e_g = e$  for all $g$. Then  
\begineqs
\beginar{l}
\ds \htls 
=
\sumg\kgls\cdot \tcg + \op,\quad 
%%%%%%%%%%%%%%%%%%%%%
\htfe 
=
\ds\sumg \kgfe \cdot \tcg + \op,
\endar
\endeqs
where 
\begineqs
\kgls = 
\dfrac
{\ng \pcg }
{\sumg \ng \pcg}, \quad 
%%%%%%%%%%%%
\kgfe = \dfrac{ \ng(1-\phig) \pcg}{\sumg \ng(1-\phig) \pcg} 
\endeqs
with $\kappa_{g,*} \in [0,1]$ and $\sumg \kappa_{g,*} = 1$ for $* = \textup{2sls, 2sfe}$. 
 
\end{theorem}

\thmhetero\ shows that, under cluster heterogeneity, $\htfe$ identifies a weighted average of the cluster-specific LATEs $\{\tcg: \otg\}$, with nonnegative weights summing to one, whereas $\htls$ generally does not. 
In particular, the $\eg(1-\eg)$ component of $\kgfe$ coincides with overlap weighting \citep{li2018balancing}.
% \citep{}
In contrast, even when $\tcg = 0$ for all $g$, the probability limit of $\htls$ has $\sumg \ng \eg \tmug - \ni ( \sumg \ng  \eg )(\sumg\ng  \tmug)$ as the numerator, and is therefore generally nonzero if $e_g$ varies across clusters. 
Therefore, $\htls$ is not level-independent in the sense of \cite{blandhol2022tsls}---meaning it does not necessarily equal zero even when $\tcg = 0$ for all $g$, a property that is arguably a basic necessary condition for a quantity to admit a causal interpretation.
This distinction suggests an advantage of including fixed effects when analyzing data nested within heterogeneous clusters. %For example, stores in different geographic locations or firms of differing sizes can form heterogeneous clusters. 
See \cite{bugni2018inference}, \cite{bugni2023inference} and \cite{ding2021frisch} for related results under stratified randomization with a fixed number of strata as $N$ goes to infinity. 

As a special case, when the heterogeneous clusters have equal IV probabilities $e_g$, $\htls$ also converges in probability to a weighted average of $\{\tcg: \otg\}$, with nonnegative weights summing to one. However, these weights generally differ from those associated with $\htfe$. 
An example is randomized encouragement designs in which the encouragement, serving as the IV, is randomly assigned with equal probabilities across all units \citep{angrist1996identification, hoffmann2025vaccines}.  

\subsection{Test for \ch}\label{sec:test}
\thmhetero\ establishes an advantage of \tsfe\ over the \cts\ when analyzing heterogeneous clusters.
To facilitate empirical choice between the two procedures, 
we propose below a test for violations of \assmim,  motivated by the difference between the probability limits of $\htls$ and $\htfe$ in \thmhetero.  
Rejection of the test provides evidence against homogeneous clusters and favors \tsfe\ with a clear interpretation under heterogeneity.

As a theoretical foundation, 
Theorem~\ref{thm:joint} below establishes a central limit theorem for the joint distribution of $(\htls, \htfe)$ under \assmim.
Recall that $\bz = \meani \zi$ and $\bzg = \meanig\zi$. Similarly, let $\bd = \meani \di$ and $\bdg = \meanig \di$ denote the overall and within averages of $\di$.  
Let $\rils$ and $\rife$ denote the residuals from the \cts\ and \tsfe, respectively. Let $\szdt = \szdtf$ and $\szd = \szdf$ denote the sample covariance and sample within-cluster covariance of $(\zi, \di)$, respectively, normalized by $N$. 
Let $
\hsiglsfe =  
\beginp \hsesq_\ls & \hclsfe \\ \hclsfe & \hsesq_\fe\endp = \sum_{g=1}^{G} \hat{v}_g \hat{v}_g^{\top}$,
where 
$
\hat{v}_g = \dfrac{1}{N}\begin{pmatrix} S_{ZD}^{-1}  \sumig (Z_i - \bar{Z}) \rils \\ \szd^{-1} \sumig (Z_i - \bar{Z}_g) \rife \end{pmatrix}$.
This gives the explicit forms for \crses\ $\hsesqls$ and $\hsesqfe$, as well as the covariance estimator; see \thmlsxfex\ in the \sm\ for details. 
Let 
\begineq\label{eq:t-stat}
t_\diff = \dfrac{\htls - \htfe}{\hsed} 
\endeq  
denote the $t$-statistic associated with $\htls -\htfe$, where  $ \hsed^2  = (1,-1)\hsiglsfe(1,-1)^\T = \sum_{g=1}^{G}\{ (1,\ -1) \hat{v}_g\}^{2}$. 
Under \assmtth, recall the definitions of $(\ai, \mua, \bag)$ from  \prop~\ref{prop:clt}. 
Let $
\omgn = \ni\cov\left\{\dsumi \beginp \ai - \mua  \\ \zi(\ai - \mua) \\ (\zi - \bzci) (\ai - \baci)\endp \right\}$, 
and let $\lm(\omgn)$ denote the smallest eigenvalue of $\omgn$. 

\begin{theorem}\label{thm:joint}
As $\ntinf$, if \assmhomo\ and \ref{assm:ng_main} hold, $\E(Y_i^4) < \infty$, $\inf_N \kn >0$, and $\inf_N\lambda_{\min}(\omgn) > 0$, 
then 
\begineqs
\hsiglsfe^{-1/2}  \beginp 
\htls - \tc\\
\htfe - \tc
\endp 
\rs  
 \mn(0_2, I_2), \quad t_\diff \rs \sn.
 \endeqs
\end{theorem}

\thmjoint\ implies \propclt. 
Together, \thms~\ref{thm:hetero}--\ref{thm:joint} motivate a test of \assmim\ (homogeneous clusters) based on $t_\diff$. 
In practice, we can compute $t_\diff$ either using the explicit expression of $\hsed$  below \eqref{eq:t-stat} or via a cluster bootstrap approximation.
%
%We provide simulation code in the replication files to demonstrate coherence between the two estimates. 

\section{Simulation}\label{sec:simu}
\def\xiunit{X_i'}
\subsection{Validity and efficiency with homogeneous clusters}\label{sec:simu_ca}
We first illustrate the validity of the \cts\ and \tsfe\ under \assmim, and examine the impact of covariate adjustment.
We fix the number of clusters at $G = 200$ and generate the data as follows: (i) $n_1, \ldots, n_G$ are \iid\ Poisson(10); (ii) $e_1, \ldots, e_G$ are \iid\ Uniform(0.4, 0.6); (iii) for $\otn$, 
\begini
\item $U_i$ are \iid\ with $\pr(\ua) = 0.3$,\quad $\pr(\uc) = 0.5$,\quad $\pr(\un) = 0.2$; 
%%%% 
\item $Z_i$ are independent with $\pr(Z_i =1 \mid e_{c(i)}) = e_{c(i)}$; 
%%%% 
\quad $D_i = \oua + Z_i \cdot \ouc $; 
%%%%
\item  $X_i = (\xci, \xiunit)^\T$, where $X_g^*\simiid\mn(0,\sigma_X^2)$ and $\xiunit\simiid\mn(0,1)$; 
%%%% 
\item  $Y_i = D_i + \xci + \xiunit + \aci + \epi$, where
 $\ag = X_g^* + \eta_g$ with $\eta_g\simiid\mn(0, \sigma_\eta^2)$, and $\epi$ are independent $\mn(\mu_{U_i}, 1)$ with $(\mu_\aa, \mu_\cc, \mu_\nn) = (2, 0, -3)$.
\endi
 %%%%%%%%%%%%%%
The data-generating process implies that $\tc=1$, $\va = \sigma_X^2 + \sigma_\eta^2$, and $\sigesq = 4$. The correlation between $D_i$ and $\epi$ through $U_i$ renders $\di$ endogenous. The cluster-constant covariates $\xci$ help explain variation in $\ag$. 

We consider estimation using the \cts\ and \tsfe, indexed by  \{2sls, 2sfe\}, as well as their respective \ca\ counterparts, indexed by \{2sls-x, 2sfe-x\}. %, summarized in the third column of Table~\ref{tb:8} 
%\begin{table}[!t]
%\caption{\label{tb:8} Summary of eight estimation strategies.}
%
%\resizebox{\textwidth}{!}{
%\begin{tabular}{c|c|ll|ll}\hline
%Include $\xxi$? & Include $\cci$?  & \multicolumn{2}{c|}{\tsls} & \multicolumn{2}{c}{\ols} \\\hline
%\multirow{2}{*}{No} 
%& No & 2sls: & $\ctsformula$ & ols: & $\ydols$ \\
%& Yes & 2sfe: & $\tsfeformula$ &  fe: & $\ydfe$ \\\hline
%%%%%%%%%%%%%%
%%%%%%%%%%%%%%
%\multirow{2}{*}{Yes} 
%& No & 2sls-x: &$\ctsxformula$  & ols-x: & $\olst(\yi \sim 1 + \di + \xxi)$\\
%& Yes & 2sfe-x: &$\tslst(\yi \sim \di + \cci + \xiunit \mid \zi + \cci + \xiunit)$ & fe-x: & $\olst(\yi \sim  \di + \cci + \xiunit)$\\
%\hline
%\end{tabular}}
%\end{table}
%
Table~\ref{tb:simu} reports the mean squared errors of the four \tsls\ estimators, along with the coverage rates and average lengths of the corresponding 95\% confidence intervals, over 1,000 independent replications at $(\sigma_X, \sigma_\eta)=(1,1)$, $(0.5,0.5)$, $(0.2,0.2)$. 
The main findings are threefold and align with our theory: 
\begine[(i)]
\item Across all settings, the 95\% confidence intervals from the four procedures attain correct coverage.  
\item The \ca\ variants, 2sls-x and 2sfe-x, deliver shorter average confidence interval lengths than their unadjusted counterparts, 2sls and 2sfe. 
\item The impact of covariate adjustment varies with $\va/\ve = 4^{-1}(\sigma^2_X + \sigma^2_\eta)$:

\item[$\bullet$] At $(\sigma_X, \sigma_\eta)=(1,1)$ with $\va/\ve = 0.5$, the two \tsfe\ procedures with cluster indicators (2sfe, 2sfe-x) yield shorter average confidence interval lengths than their \cts\ counterparts (2sls, 2sls-x), respectively.

\item[$\bullet$] At $(\sigma_X, \sigma_\eta)=(0.5,0.5)$ with $\va/\ve = 0.125$, 
2sfe yields a shorter average confidence interval length than 2sls, but 2sls-x outperforms both 2sfe and 2sfe-x due to adjustment for cluster-level covariates.

\item[$\bullet$] At $(\sigma_X, \sigma_\eta)=(0.2,0.2)$ with $\va/\ve = 0.02$,  2sls and 2sls-x without cluster indicators yield shorter average confidence interval lengths than  2sfe and 2sfe-x, respectively. 
\ende

\begin{table}[!t]\caption{\label{tb:simu} Mean squared errors of the four \tsls\ procedures, along with the coverage rates and average lengths of the 95\% confidence intervals over 1,000 independent replications.
\{2sls, 2sfe\} index the unadjusted \cts\ and \tsfe, and \{2sls-x, 2sfe-x\} index their respective \ca\ counterparts.}
 
\resizebox{\textwidth}{!}{
\begin{tabular}{l|rrrr|rrrr|rrrr}
  \hline
  & \multicolumn{4}{c|}{$(\sigma_X, \sigma_\eta)=(1,1)$} & \multicolumn{4}{c|}{$(\sigma_X, \sigma_\eta)=(0.5,0.5)$}
  & \multicolumn{4}{c}{$(\sigma_X, \sigma_\eta)=(0.2,0.2)$} \\
  %%%%%%%%%%%%%%%%%%%%%%
    & \multicolumn{4}{c|}{$\va/\ve = 0.5$} & \multicolumn{4}{c|}{$\va/\ve = 0.125$}
  & \multicolumn{4}{c}{$\va/\ve = 0.02$} \\\hline
 & 2sls & 2sfe & 2sls-x & 2sfe-x
 & 2sls & 2sfe & 2sls-x & 2sfe-x 
 & 2sls & 2sfe & 2sls-x & 2sfe-x \\ 
  \hline
Mean squared error 
& 0.081 & 0.044 & 0.040 & 0.037
& 0.050 & 0.045 & 0.036 & 0.038
& 0.043 & 0.045 & 0.032 & 0.035 \\ 
Coverage rate 
& 0.957 & 0.951 & 0.947 & 0.947
& 0.949 & 0.952 & 0.944 & 0.941
& 0.950 & 0.948 & 0.950 & 0.952 \\  
Average CI length 
& 1.134 & 0.831 & 0.790 & 0.743
& 0.889 & 0.835 & 0.726 & 0.746
& 0.795 & 0.825 & 0.699 & 0.738 \\ 
   \hline
\end{tabular}}
\end{table}

\subsection{Test and estimation with heterogeneous clusters}
We now illustrate the properties of the test for \ch. 
We fix the number of clusters at $G = 100$ and the cluster size at $n_g = 20$ for all $g$. 
Assume two types of clusters.
The first $50$ clusters, $g = \ot{50}$, are of the first type and have no \cfe. 
The remaining $50$ clusters, $g = 51, \ldots, 100$, are of the second type and have a \cfe\ equal to $\delta$, where $\delta$ is a tuning parameter governing the degree of heterogeneity. 
The clusters are homogeneous if and only if $\delta = 0$.
Define $\alpha_g = 1_{\{g > 50\}}\delta$ for $\otg$. 
When $\delta \neq 0$, $\ag$ represents heterogeneous \cfes\ across different cluster types. 

To facilitate illustration, we assume a constant treatment effect of $\tau = 0$ for all units. For each cluster $\otg$, we generate $(Z_i, D_i, Y_i)$ for $\ig$ as follows:  
\begini
\item $U_i$ are \iid\ with $\pr(\uc) = 0.7$ and $\pr(\un) = 0.3$; 
%%%% 
\item  $Z_i$ are independent $ \textup{Bernoulli}(e_g)$, where $e_g = 1/\{1 + \exp(-0.5 \alpha_g)\}$; 
%%%%%%%%%%%
\item $D_i = Z_i \cdot \ouc $; 
%%%% 
\item  $\yi = \yio = \yiz = \ag + \epi$, where
  $\epi \simiid\mn(0, 1)$. 
\endi
The definition of $e_g$ implies that clusters with larger $\alpha_g$ have higher IV probabilities, so the IV distributions are also heterogeneous across clusters.  

Figure~\ref{fig:t} shows the distributions of the $t$-statistic we proposed in \eqref{eq:t-stat} based on 1,000 independent replications for $\delta = 0, 1, 2$, along with the average values of $\htls$ and $\htfe$ across the three cases. 
For the case $\delta = 0$, we additionally overlay the standard normal density on the histogram.

\begin{figure}[!ht]
\begin{center} 
\begin{tabular}{ccc}
\includegraphics[width = .32\textwidth]{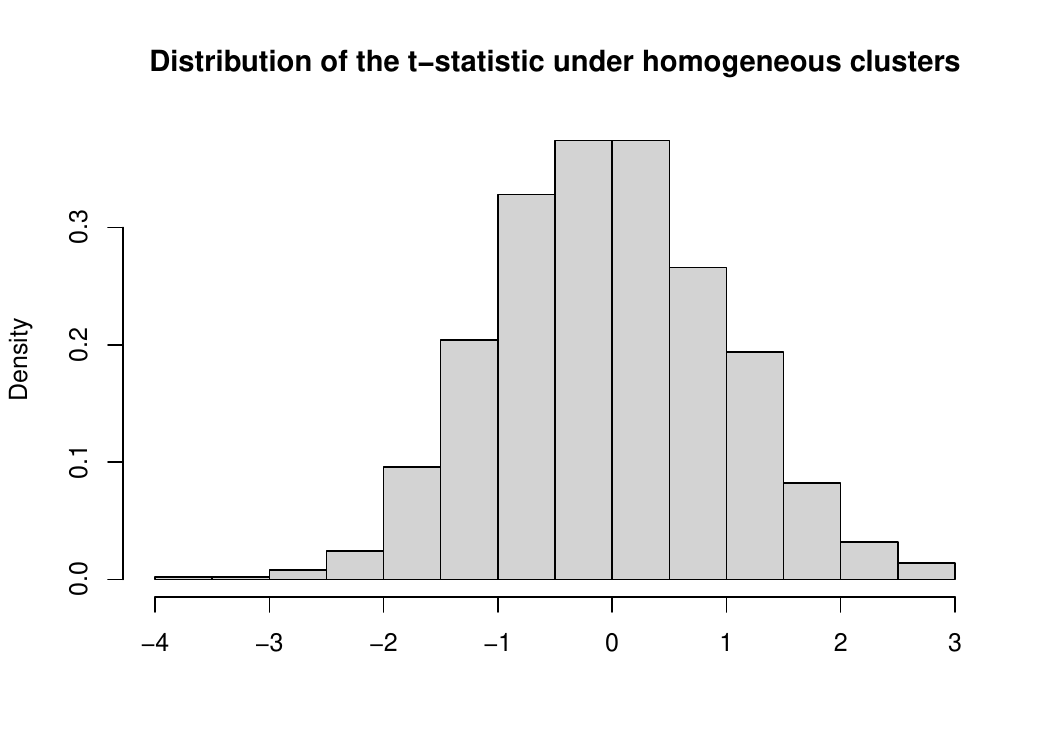}&
\includegraphics[width = .32\textwidth]{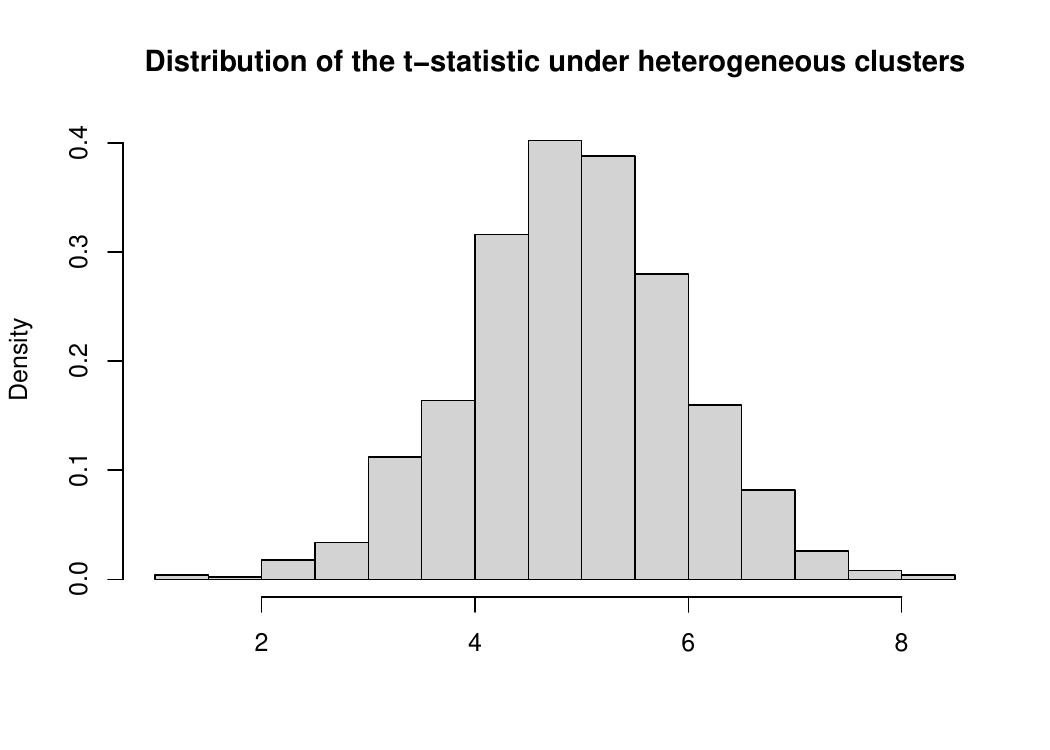} &
\includegraphics[width = .32\textwidth]{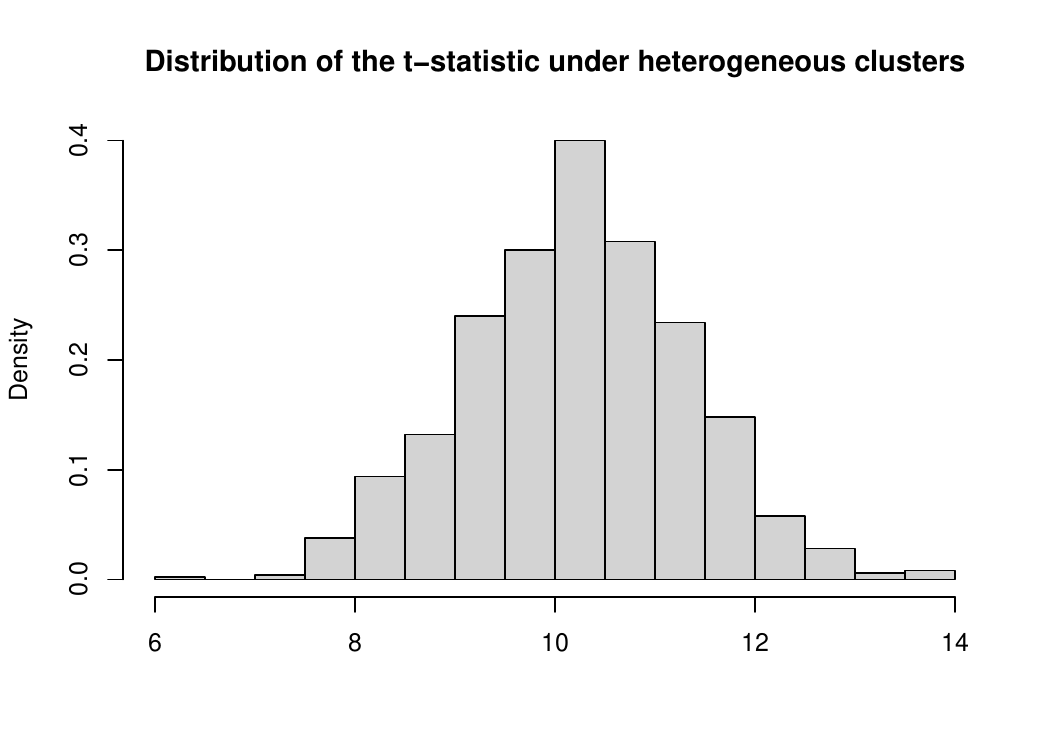} \\
(a) Homogeneous clusters & (b) Heterogeneous clusters & (c) Heterogeneous clusters \\
$\delta = 0$ & $\delta = 1$ & $\delta = 2$ \\
%%%%%%%%%%
Ave. $\beginp
\htls\\
\htfe
\endp = \beginp 0.0004\\ 0.0008\endp$ & Ave.  $\beginp
\htls\\
\htfe
\endp = \beginp 0.175\\ -0.002\endp$ & Ave.  $\beginp
\htls\\
\htfe
\endp = \beginp 0.695\\ -0.0005\endp$
\end{tabular}

\end{center}
\caption{\label{fig:t} Distributions of the $t$-statistic we proposed in \eqref{eq:t-stat} under homogeneous and heterogeneous clusters over 1,000 independent replications. The curve in Figure~\ref{fig:t}(a) represents the standard normal density. 
The table below reports the corresponding average values of $\htls$ and $\htfe$.}
\end{figure}

When $\delta = 0$, we have $\alpha_g = 0$ and $\eg = 0.5$ for all $g$, corresponding to homogeneous clusters. 
The resulting empirical distribution closely resembles the standard normal distribution, and 95.4\% of the $t$-statistics satisfy $|t_\diff| \leq 1.96$. Both observations are coherent with the theoretical results in \thmjoint, supporting the validity of the test.  
In addition, the averages of $\htfe$ and $\htls$  are both close to the true value, $\tau = 0$, coherent with the theoretical results in \propclt. 
The cases $\delta = 1$ and $\delta = 2$ introduce heterogeneity between the first and last 50 clusters, thereby violating \assmim. 
When $\delta = 1$, the corresponding empirical distribution is centered around $4.93$,  deviating substantially from the standard normal distribution. 
99.7\% of the $t$-statistics satisfy $|t_\diff| > 1.96$, illustrating the power of the test under violations of \assmim. 
When $\delta = 2$, the corresponding empirical distribution is centered around 10.23,  deviating even further from the standard normal distribution. 
100\% of the $t$-statistics satisfy $|t_\diff| > 1.96$, indicating increased power as the degree of heterogeneity increases. 
In both cases, the average values of $\htfe$ are close to the true value, $\tau = 0$, whereas those of $\htls$ exhibit clear biases.

\section{Application}
We now illustrate our methods using data from a randomized evaluation of a microcredit program introduced in rural Morocco in 2006 \citep{crepon2015estimating}. 
The sample includes 5,898 households (units) nested within 162 villages. 
Villages were matched into 81 pairs based on observable characteristics, and within each pair, one village was randomly assigned to treatment, and the other to control. 
After randomization, credit agents from the partner microfinance institution started to promote microcredit in treatment villages.

Following \citet[Section~III.C]{crepon2015estimating}, 
we estimate the impact of microcredit take-up on outcomes using the subsample of 4,934 households with a high predicted probability of borrowing, as determined by the propensity score model in \cite{crepon2015estimating}.    
The outcome of interest, $Y_i$, is total household production from agriculture, livestock, and non-agricultural business; see \citet[Table 3]{crepon2015estimating}. 
The treatment $D_i$ is an indicator of microcredit take-up at endline. 
The IV $Z_i$ is an indicator of residence in a treated village.
Covariates $\xxi$ include household size, number of adults, age of the household head, baseline borrowing, and other household characteristics.
In total, 2,448 households (49.6\%) in the sample reside in treated villages ($\zi = 1$), and 410 households (8.3\%) took up microcredit ($\di = 1$).
Following \cite{crepon2015estimating}, we define each village pair as a stratum (cluster) when constructing the cluster indicators $\cci$.
This choice is natural because the matched pair is the randomization block that absorbs the baseline differences used in matching, and aligns with the cluster sampling assumption, which requires IV assignment to be independent across clusters. 
The $t$-statistic from the \ch\ test is 0.183, so we cannot reject \assmim. 

\begin{figure}[!ht]
\centering 

\includegraphics[width = .8\textwidth]{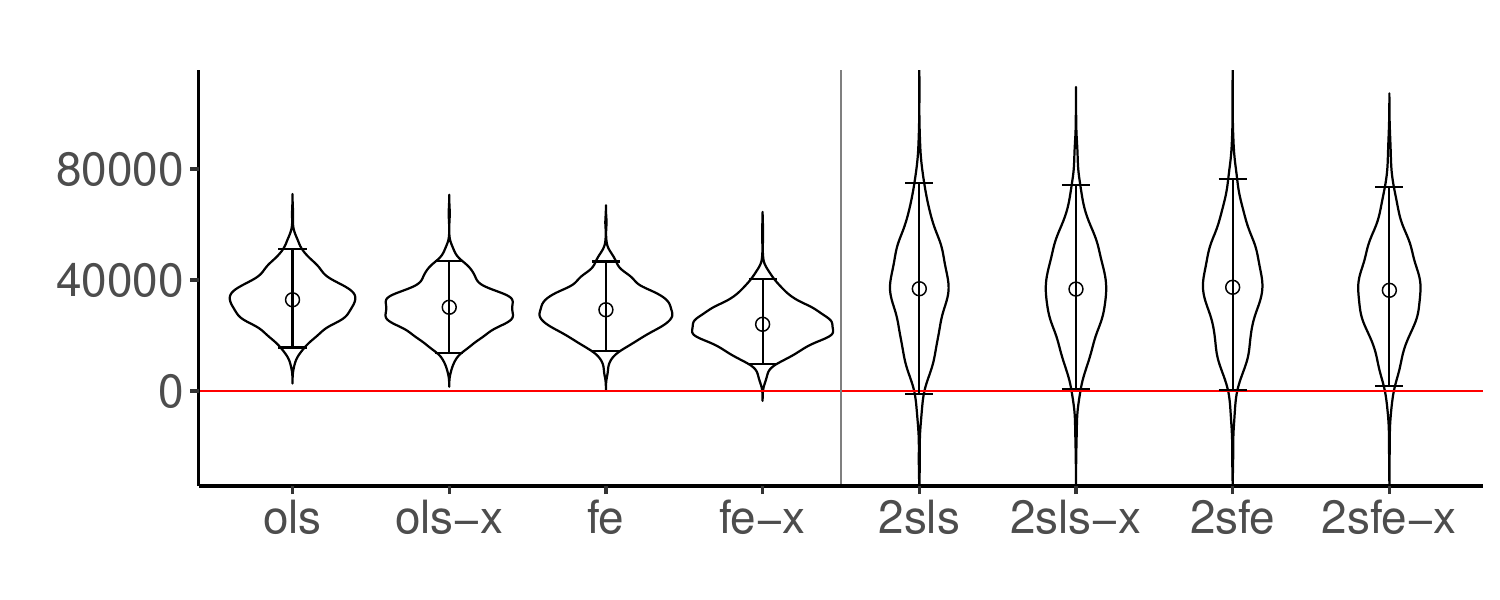} 

\begin{tabular}{l|rrrr}
  \hline
  & 2sls & 2sls-x& 2sfe & 2sfe-x\\  \hline
Point estimate  & 36806.64 & 36689.22& 37342.94 & 36252.57 \\ 
Cluster-robust standard error  & 18882.51 & 17601.90 & 18745.43 & 17508.58\\ 
95\% CI Low & -202.41  & 2190.14 &  602.57 & 1936.39 \\ 
95\% CI High & 73815.69 & 71188.31  & 74083.30 & 70568.75 \\ 
   \hline
\end{tabular}

\caption{\label{fig:app} Violin plots of the cluster bootstrap distributions of the eight point estimators over 1,000 replications. The width of each violin represents the empirical probability density, with wider sections indicating higher data concentration. Circles ($\circ$) denote point estimates from the original data. 
Error bars indicate 95\% confidence intervals computed from the \crses.}
\end{figure}

In addition to the \cts\ and \tsfe\ and their \ca\ counterparts, we also consider their \ols\ counterparts, indexed by \{ols, fe, ols-x, fe-x\}. %,  summarized in the fourth column of Table~\ref{tb:8}. 
Figure~\ref{fig:app} illustrates the point estimates from the eight procedures, %in Table~\ref{tb:8}, 
along with the 95\% confidence intervals using \crses\ and their cluster bootstrap distributions over 1,000 replications. 
The four \olsc\ procedures (ols, fe, ols-x, fe-x) return smaller point estimates than their \tsls\ counterparts.
The table below reports the point estimates, associated \crses, and 95\% confidence intervals from the four \tsls\ procedures.
The \ca\ \tsfe, 2sfe-x, gives the smallest \crse\ and the narrowest confidence interval.
%%%%%%
Note that the confidence interval associated with the unadjusted \cts\ includes zero, whereas the other three confidence intervals do not. This illustrates the efficiency gains from covariate adjustment and the inclusion of cluster indicators in this application.

\section{Discussion}\label{sec:discussion}
We examined the \cts\ and \tsfe\ as two approaches to IV analysis for clustered data when both the treatment and IV vary within clusters. 
Our main contributions are threefold. 
First, under homogeneous clusters, we established the validity of both procedures, and compared their relative efficiency when the potential outcomes model includes \cfes. 
Our results caution against the routine inclusion of cluster indicators without considering efficiency implications. 
Second, under heterogeneous clusters, we showed that the point estimator from \tsfe\ recovers a weighted average of cluster-specific LATEs, whereas that from the \cts\ generally does not. This suggests an advantage of \tsfe\ when analyzing data nested within heterogeneous clusters, such as stores in different geographic locations or firms of differing sizes.
Finally, motivated by the distinct asymptotic behavior of $\htls - \htfe$ under homogeneous and heterogeneous clusters, we developed an asymptotic theory for the joint distribution of $\htls$ and $\htfe$ when clusters are homogeneous, and proposed a test for \ch. 
We summarize the trade-off between \cts\ and \tsfe\ in Table \ref{tb:summary}

\begin{table}\caption{\label{tb:summary}Summary of the trade-off between \cts\ and \tsfe.}
\begin{center}
\resizebox{\textwidth}{!}{
\begin{tabular}{l|c|c}\hline
& Canonical \tsls & \tsfe \\\hline
Cluster-constant treatment, IV, covariates & Yes & No \\\hline
Requires cluster-robust standard errors? & Yes & Yes \\\hline
%%%%%%%%%
Efficiency under homogeneous clusters & \multicolumn{2}{l}{
\begin{tabular}{l}
Under \assmy, \tsfe\ is more efficient than\\ the \cts\ if and only if \eqref{eq:eff_cutoff} holds.
\end{tabular}} 
\\\hline
%%%%%%%%%%
Efficiency gain from cluster-level covariates? & Yes & No\\\hline
\begin{tabular}{l}
Identifies weighted average of cluster-specific LATEs\\ under heterogeneous clusters?
\end{tabular} & No in general & Yes \\\hline
\end{tabular}
}
\end{center}
\end{table}

\medskip

%\paragraph*{Ordinary least squares as a special case.}
Ordinary least squares (\ols) is a special case of \tsls\ with $Z_i = D_i$. 
Our theory therefore encompasses the comparison between the \ols\ regression 
of $Y_i$ on $(1,\di)$ and that of $Y_i$ on $(\di,\cci)$ with cluster indicators---two standard specifications for analyzing clustered data under exogenous treatment \citep{bugni2018inference}. %---when the treatment $D_i$ is as-if randomly assigned within each cluster. 

%%%%%%%%%%%%%%%%%%%%%%%%%%%%%%%%%%%%%%%%%%%%%%
%% Multiple Appendixes:        %%
%%%%%%%%%%%%%%%%%%%%%%%%%%%%%%%%%%%%%%%%%%%%%%
\subsubsection*{Data availability statement}
The data that support the findings of this study are openly available in the replication package of \cite{crepon2015estimating} at https://www.aeaweb.org/articles?id=10.1257/app.20130535.

\bibliography{refs_cIV-FE}

@article{hausman1978specification,
  title={Specification tests in econometrics},
  author={Hausman, Jerry A},
  journal={Econometrica: Journal of the econometric society},
  pages={1251--1271},
  year={1978},
  publisher={JSTOR}
}

@techreport{blandhol2022tsls,
  title={When is TSLS actually late?},
  author={Blandhol, Christine and Bonney, John and Mogstad, Magne and Torgovitsky, Alexander},
  year={2022},
  institution={National Bureau of Economic Research Cambridge, MA}
}

@article{autor2013china,
  title={The China syndrome: Local labor market effects of import competition in the United States},
  author={Autor, David H and Dorn, David and Hanson, Gordon H},
  journal={American economic review},
  volume={103},
  number={6},
  pages={2121--2168},
  year={2013},
  publisher={American Economic Association}
}

@article{acconcia2014mafia,
  title={Mafia and public spending: Evidence on the fiscal multiplier from a quasi-experiment},
  author={Acconcia, Antonio and Corsetti, Giancarlo and Simonelli, Saverio},
  journal={American Economic Review},
  volume={104},
  number={7},
  pages={2185--2209},
  year={2014},
  publisher={American Economic Association 2014 Broadway, Suite 305, Nashville, TN 37203}
}

@article{bugni2023inference,
  title={Inference under covariate-adaptive randomization with imperfect compliance},
  author={Bugni, Federico A and Gao, Mengsi},
  journal={Journal of Econometrics},
  volume={237},
  number={1},
  pages={105497},
  year={2023},
  publisher={Elsevier}
}

@article{andrews2019weak,
  title={Weak instruments in instrumental variables regression: Theory and practice},
  author={Andrews, Isaiah and Stock, James H and Sun, Liyang},
  journal={Annual Review of Economics},
  volume={11},
  number={1},
  pages={727--753},
  year={2019},
  publisher={Annual Reviews}
}

@article{imbens2005robust,
  title={Robust, accurate confidence intervals with a weak instrument: quarter of birth and education},
  author={Imbens, Guido W and Rosenbaum, Paul R},
  journal={Journal of the Royal Statistical Society Series A: Statistics in Society},
  volume={168},
  number={1},
  pages={109--126},
  year={2005},
  publisher={Oxford University Press}
}

@article{hoffmann2025vaccines,
  title={Vaccines at Work: Experimental Evidence from a Firm Campaign},
  author={Hoffmann, Manuel and Mosquera, Roberto and Chadi, Adrian},
  journal={Management Science},
  year={2025},
  publisher={INFORMS}
}

@article{liang1986longitudinal,
  title={Longitudinal data analysis using generalized linear models},
  author={Liang, Kung-Yee and Zeger, Scott L},
  journal={Biometrika},
  volume={73},
  number={1},
  pages={13--22},
  year={1986},
  publisher={Oxford University Press}
}

@article{imbens1994identification,
 ISSN = {00129682, 14680262},
 URL = {http://www.jstor.org/stable/2951620},
 author = {Guido W. Imbens and Joshua D. Angrist},
 journal = {Econometrica},
 number = {2},
 pages = {467--475},
 publisher = {[Wiley, Econometric Society]},
 title = {Identification and Estimation of Local Average Treatment Effects},
 urldate = {2025-11-07},
 volume = {62},
 year = {1994}
}

@article{crepon2015estimating,
  title={Estimating the impact of microcredit on those who take it up: Evidence from a randomized experiment in {Morocco}},
  author={Cr{\'e}pon, Bruno and Devoto, Florencia and Duflo, Esther and Parient{\'e}, William},
  journal={American Economic Journal: Applied Economics},
  volume={7},
  number={1},
  pages={123--150},
  year={2015},
  publisher={American Economic Association 2014 Broadway, Suite 305, Nashville, TN 37203-2425}
}

@article{frangakis2002clustered,
  title={Clustered encouragement designs with individual noncompliance: Bayesian inference with randomization, and application to advance directive forms},
  author={Frangakis, Constantine E and Rubin, Donald B and Zhou, Xiao-Hua},
  journal={Biostatistics},
  volume={3},
  number={2},
  pages={147--164},
  year={2002},
  publisher={Oxford University Press}
}

@article{jo2008cluster,
  title={Cluster randomized trials with treatment noncompliance.},
  author={Jo, Booil and Asparouhov, Tihomir and Muth{\'e}n, Bengt O and Ialongo, Nicholas S and Brown, C Hendricks},
  journal={Psychological methods},
  volume={13},
  number={1},
  pages={1},
  year={2008},
  publisher={American Psychological Association}
}

@book{raudenbush2002hierarchical,
  title={Hierarchical linear models: Applications and data analysis methods},
  author={Raudenbush, Stephen W and Bryk, Anthony S},
  volume={1},
  year={2002},
  publisher={sage}
}

@article{angrist2004does,
  title={Does school integration generate peer effects? Evidence from Boston's Metco Program},
  author={Angrist, Joshua D and Lang, Kevin},
  journal={American Economic Review},
  volume={94},
  number={5},
  pages={1613--1634},
  year={2004},
  publisher={American Economic Association}
}

@article{agbla2020estimating,
  title={Estimating cluster-level local average treatment effects in cluster randomised trials with non-adherence},
  author={Agbla, Schadrac C and De Stavola, Bianca and DiazOrdaz, Karla},
  journal={Statistical Methods in Medical Research},
  volume={29},
  number={3},
  pages={911--933},
  year={2020},
  publisher={SAGE Publications Sage UK: London, England}
}

@article{su2021model,
  title={Model-assisted analyses of cluster-randomized experiments},
  author={Su, Fangzhou and Ding, Peng},
  journal={Journal of the Royal Statistical Society Series B: Statistical Methodology},
  volume={83},
  number={5},
  pages={994--1015},
  year={2021},
  publisher={Oxford University Press}
}

@article{cameron2015practitioner,
  title={A practitioner’s guide to cluster-robust inference},
  author={Cameron, A Colin and Miller, Douglas L},
  journal={Journal of human resources},
  volume={50},
  number={2},
  pages={317--372},
  year={2015},
  publisher={University of Wisconsin Press}
}

@article{bugni2018inference,
  title={Inference under covariate-adaptive randomization},
  author={Bugni, Federico A and Canay, Ivan A and Shaikh, Azeem M},
  journal={Journal of the American Statistical Association},
  volume={113},
  number={524},
  pages={1784--1796},
  year={2018},
  publisher={Taylor \& Francis}
}

@article{basu2024frisch,
  title={Frisch--Waugh--Lovell theorem-type results for the k-Class and 2SGMM estimators},
  author={Basu, Deepankar},
  journal={Statistics \& Probability Letters},
  volume={213},
  pages={110188},
  year={2024},
  publisher={Elsevier}
}

@article{ding2021frisch,
  title={The Frisch--Waugh--Lovell theorem for standard errors},
  author={Ding, Peng},
  journal={Statistics \& Probability Letters},
  volume={168},
  pages={108945},
  year={2021}
}

@article{baum2003instrumental,
  title={Instrumental variables and GMM: Estimation and testing},
  author={Baum, Christopher F and Schaffer, Mark E and Stillman, Steven},
  journal={The Stata Journal},
  volume={3},
  pages={1--31},
  year={2003},
  publisher={SAGE Publications Sage CA: Los Angeles, CA}
}

@book{wooldridge2010econometric,
  title={Econometric analysis of cross section and panel data},
  author={Wooldridge, Jeffrey M},
  year={2010},
  publisher={MIT press}
}

@article{hansen2019asymptotic,
  title={Asymptotic theory for clustered samples},
  author={Hansen, Bruce E and Lee, Seojeong},
  journal={Journal of econometrics},
  volume={210},
  pages={268--290},
  year={2019},
}

@article{angrist1996identification,
 author = {Angrist, J. D. and Imbens, G. W. and Rubin, D. B.},
 journal = {Journal of the American Statistical Association},
 number = {434},
 pages = {444--455},
 title = {Identification of Causal Effects Using Instrumental Variables},
 volume = {91},
 year = {1996}
}

@article{v,
 author = {Vytlacil, E.},
 journal = {Econometrica},
 pages = {331--341},
 title = {Independence, Monotonicity, and Latent Index Models: An Equivalence Result},
 volume = {70},
 year = {2002}
}

@article{arkhangelsky2024fixed,
  title={Fixed effects and the generalized Mundlak estimator},
  author={Arkhangelsky, Dmitry and Imbens, Guido W},
  journal={Review of Economic Studies},
  volume={91},
  number={5},
  pages={2545--2571},
  year={2024},
  publisher={Oxford University Press UK}
}

\spacingset{1.5}

\newpage
\setcounter{equation}{0}
\setcounter{section}{0}
\setcounter{figure}{0}
\setcounter{example}{0}
\setcounter{proposition}{0}
\setcounter{corollary}{0}
\setcounter{theorem}{0}
\setcounter{table}{0}
\setcounter{condition}{0}
\setcounter{definition}{0}
\setcounter{assumption}{0}
\setcounter{lemma}{0}
\setcounter{remark}{0}

\renewcommand {\thedefinition} {S\arabic{definition}}
\renewcommand {\theassumption} {S\arabic{assumption}}
\renewcommand {\theproposition} {S\arabic{proposition}}
\renewcommand {\theexample} {S\arabic{example}}
\renewcommand {\thefigure} {S\arabic{figure}}
\renewcommand {\thetable} {S\arabic{table}}
\renewcommand {\theequation} {S\arabic{equation}}
\renewcommand {\thelemma} {S\arabic{lemma}}
\renewcommand {\thesection} {S\arabic{section}}
\renewcommand {\thetheorem} {S\arabic{theorem}}
\renewcommand {\thecorollary} {S\arabic{corollary}}
\renewcommand {\thecondition} {S\arabic{condition}}
\renewcommand {\thepage} {S\arabic{page}}
\renewcommand {\theremark} {S\arabic{remark}}

% Important: make hyperref use the same form
\renewcommand{\theHassumption}{S\arabic{assumption}}
\renewcommand{\theHlemma}{S\arabic{lemma}}
\renewcommand{\theHsection}{S\arabic{section}}

\setcounter{page}{1}
 
\section*{\centering Supplementary Material}
\sec~\ref{sec:notation} summarizes the key notation. 
\sec~\ref{sec:complete theory} states the complete theory for \ca\ \cts\ and \tsfe.  
\sec~\ref{sec:lem} states the lemmas.
\sec~\ref{sec:proof_homo} provides the proofs of the results in \sec~\ref{sec:homo} of the main paper. 
\sec~\ref{sec:proof_hetero}--\ref{sec:app_proof_of_joint} provide the proofs of the results in \sec~\ref{sec:hetero} of the main paper.
\sec~\ref{sec:proof_complete} provides the proofs of the results in Section~\ref{sec:complete theory}.  

\section{Notation}\label{sec:notation}
For positive integer $m$, 
let $1_m$ and $0_m$ denote the $m\times 1$ vectors of ones and zeros, respectively; let  
$I_m$ denote the $m\times m$ identity matrix, 
$J_m = 1_m 1_m^\T$ denote the $m\times m$ matrix of ones, and  
$
P_m  = I_m  - m^{-1} J_m$ denote the $m\times m$ projection matrix with $
P_m^2 = P_m$ and $P_m 1_m = 0_m$.  
Let $\|\cdot\|$  denote the \enf, with $\|a\| = (a^\T a)^{1/2}$ for a vector $a$.
Let $\lm(\cdot)$ denote the smallest eigenvalue of a symmetric matrix. 
Let $\spn{\cdot}$ and $\fn{\cdot}$ denote the spectral norm and Frobenius norm. 
%For any matrix $B$, we have
%\begineq\label{eq:norm_2_F}
%\spn{B} \leq \fn{B} \leq \sqrt{\textup{rank}(B)}\cdot \spn{B}. 
%\endeq
For sequences of vectors or matrices $(a_n)_{n=1}^\infty$, we use $a_n = \op$ to denote componentwise convergence in probability to zero, and $a_n = O(1)$ and $a_n = \oop$ to denote componentwise boundedness and boundedness in probability.  

For a random variable $Y \in \mbr$ and a random vector $X \in \mbr^n$, let 
$\proj(Y\mid X)$ denote the linear projection of $Y$ onto $X$, defined as $\proj(Y\mid X) = b_0^\T X$, where $b_0 = \argmin_{b\in \mbr^{n}} \E\{(Y -  b^\T X)^2\} = \{\E(X X^\T)\}^{-1} \E(X Y)$; 
let $
\res(Y\mid X) = Y - \proj(Y\mid X)$ denote the corresponding projection residual.

\paragraph*{Clustered population.}
For a scalar or vector $a_i  \in \mbr^{p}$ defined for $\otn = \mif$, denote by 
\begini
% averages
\item $\bar a = \meani a_i \in \mbr^p$ the population average; 
\item $\bar a_g = \meanig a_i \in \mbr^p$ the within-cluster average for cluster $g$;
%%%%%%%%%%%%%%%%%%%%%
% vectorization
\item $a_\sg = \vjh{a} \in \mbr^{\ng\times p}$ the stacked vector or matrix of $a_i$ for cluster $g$; 
%%%%%%%%%%%%%%%
\item $a = \vgh{a} = \vgj{a} \in \mbr^{N\times p}$ the stacked vector or matrix of $a_i$ for all units ordered by clusters;
%%%%%%%%%%%%%%%%%%%%%%
\item $\dsai = \sai - \bsag \in \mbr^p$ the \ccd\ variant of $\sai$; 
%%%%%%%%%%
\item 
$\dsag = \vjh{\dot a} = \sag - \og \bar a_g^\T \in \mbr^{\ng\times 1}$ the \ccd\ variant of $a_\sg$. 
\endi
In particular, 
\begineqs
\beginar{rclclcl}
\bar Y = \meani Y_i, &\quad \byg = \dmeanig \yi, \\
\yg = \vjh{Y}, & \quad Y = \vgh{Y}, &\quad \dyg = \yg - \og\byg.
\endar
\endeqs
Similarly, define $(D, \bdg, \ddg)$, $(Z, \bzg, \dzg)$, and $(X, \bxg, \dxg)$. 
Let 
\begineq\label{eq:sz_szy_szd}
\beginar{rcl}
\sz &=&\dmeani(\zi - \bz_{\ci})^2 = \dmeani\dzi^2,\\ 
 \szy &=& \dmeani \dzifp\dyifp = \dmeani\dzi\dyi,\\
 \szd &=&\dmeani \dzifp\ddifp =\dmeani \dzi\ddi
\endar
\endeq
be the sample within-cluster variances and covariances, normalized by $N^{-1}$. 
Let
\beginy\label{eq:szg}
\szt = \dmeani(Z_i - \bz)^2, \quad \szg = \dsumig (Z_i - \bzg)^2 = \dsumig\dzi^2, 
\endy
with $\sz = \nisumg \szg.$
Let 
\begineq\label{eq:szyt_szdt}
\szyt = \szytf, \quad \szdt = \szdtf.
\endeq 

\paragraph*{Notation under \assmtth.} Under \assmim, let  
\begineqs
\begin{array}{l}
\beginar{l}
e = \pr(Z_i = 1)= \E(\zi), \quad \vz = \var(Z_i) = e(1-e),
\endar
\smallskip\\
%%%%%%%%%%
%%%%%%%%%%
\beginar{lllll}
\muy = \E(\yi), &\quad   \muzy = \E(\ziyi), & \quad\mud = \E(\di), \qquad \muzd = \E(\zidi),\smallskip\\
%%%%%%%%%%%%%
%%%%%%%%%%
\mux = \E(\xxi), & \quad  \vx = \cov(\xxi) , & \quad\muxx = \E(\xxit) = \vx + \mux\muxt,\smallskip\\
\muxz = \E(\xxi\zi), &\quad \mudx = \E(\xxi\di), &\quad \pa = \pr(\ua),  \quad  \pc = \pr(\uc) 
\endar
\end{array}
\endeqs
be shorthand notation for moments of the common distribution.
Under \assmtth, $\pc > 0$ so that $\tc = \E(\ti \mid \uc)$ is well defined. Let 
\begineq\label{eq:ai}
\ai = \yi - \di \tc, \quad
\mua = \E(\ai) = \muy-\mud\tc, \quad \dai = \dyi - \ddi\tc.
\endeq

\section{Central limit theorems for $\htlsx$ and $\htfex$}
\label{sec:complete theory}
We provide in this section the central limit theorems for $\htlsx$ and $\htfex$ from the \ca\ \cts\ and \tsfe. 
Under \assmtth, let
\begineq\label{eq:rrilsx_def}
\bxaa =\covxi \cov(\xxi, \ai), \quad 
\rrilsx = \res(\ai\mid 1, \xxi)  
\endeq
denote the coefficient vector of $\xxi$ in $\proj(\ai\mid 1, \xxi)$ and the corresponding residual, with 
\begineq\label{eq:rrilsx_decomp}
\beginar{rcccl}
\proj(\ai \mid 1, \xxi) &=& \mua +  (\xxi - \mux)^\T\bxaa &=& \bo + \xit\bxaa,\\
\rrilsx &=&  A_i - \bo - \xit\bxaa &=&  Y_i - \bo - \di\tc - \xit\bxaa,
\endar
\endeq  
where $\bo =  \mua - \muxt\bxaa$.
Let 
\begineq\label{eq:rrifex_def}
\beginar{lll}\sx = \dmeani \dxxi, &\quad&\sxa = \dmeani \dxai, \\ 
\gxa = \gxaf, &\quad&\rrifex = \dai -\dxit\gxa.
\endar  
\endeq
Let
\begineq\label{eq:omgnlsx_omgnfex_def}
\omgnlsx = \ni\cov\left\{\dsumi \wilsxf \rrilsx\right\},\quad 
\omgnfex = \ni\cov\left\{\dsumi \wifexf \rrifex\right\}. 
\endeq
\assmasym\ below states the regularity condition we assume.
\begin{assumption}\label{assm:asym}
As $\ntinf$,
\begine[(i)]
\item\label{it:assm_asym_123} \assmhomo\ and \ref{assm:ng_main} hold;  
%%%%%%%%%%%%%%%
\item\label{it:assm_asym_Y} $\E(Y_i^4) < \infty$; 
\item\label{it:assm_asym_lsx} $\E(\|X_i\|^4) < \infty$;\quad \assmlm{\omgnlsx}; 
%%%%%%%%%%%%%%%%%%
\item\label{it:assm_asym_fex} 
there exists a constant $s > 2$ such that $\E(\|X_i\|^{2s}) < \infty$; 

$\inf_N \kn >0$; \quad \assmlm{\omgnfex}; \quad\assmlm{\esx}.
\ende
\end{assumption}

\assmasym\eqref{it:assm_asym_123} ensures that $\omgnlsx$ and $\omgnfex$  in  \eqref{eq:omgnlsx_omgnfex_def} are well defined. 
We assume $r = 2$ to simplify the proofs. 
By \hansen, this can be relaxed to $r \in [2,\infty)$; see \assmng\ in the supplemental appendix for details. 
\assmasym\eqref{it:assm_asym_lsx} and \eqref{it:assm_asym_fex} state the rank conditions under cluster sampling for the \ca\ \cts\ and \tsfe, respectively. 
%The condition \assmlm{\omgnfex} in  \assmasym\eqref{it:assm_asym_fex} implies that for all $N$, $\dzi \neq 0$ for some units, so that $\esz > 0$ and $\kn = \esz/\vz > 0$. 

\thmlsxfex\ below give the explicit forms and central limit theorems for $\htlsx$ and $\htfex$.
Let $\zix$ denote the residual from the least-square regression of $\zi$ on $(1,\xxi)$.
Let $\zicx$ denote the residual from the least-square regression of $\zi$ on $(\cci,\xxi)$.  
Let 
\begina
\szdx = \meani \zix \di, &\quad& \szyx = \meani \zix \yi,\\
\szdcx = \meani \zicx \di, &\quad& \szycx = \meani \zicx \yi.
\enda
Denote by $(\htlsx, \hselsx, \rilsx)$ and $(\htfex, \hsefex, \rifex)$ the coefficients on $\di$, corresponding {\crse}s, and residuals from the \ca\ \cts\ and \tsfe, respectively.

\begin{theorem}\label{thm:lsx}
Consider the \tslsxf, $\tslst(\yi \sim 1 + \di + \xxi \mid 1 + \zi + \xxi)$. 
\begine[(i)]
\item\label{it:lsx_numeric} $
\htx = \dfrac{\szyx}{\szdx}, \quad 
\hsesq_\lsx = \dfrac{1}{N^2}\cdot\dfrac{1}{\szdx^2} \dsumg \left(\dsumig \zix \cdot \rilsx\right)^2.$

%%%%%%%%%%%%%%%%%
%%%%%%%%%%%%%%%%
\item\label{it:lsx_clt} As $\ntinf$, if \assmasymlsx\ hold, then 
\begina
\vlsxn^{-1/2} \cdot \sqrtn(\htx - \tc)\rs \sn, \\
N\hsexsq/\vlsxn = 1+\op,
\\
(\htx - \tc)/ \hsex \rs \sn,
\enda
where $
\vlsxn =   \dfrac{1}{\vzpcsq}\cdot \nisumg \var\{\sumig (\zi-e)\rrilsx\}$
with $\rrilsx = \res(\aai \mid 1, \xxi)$ as defined in \eqref{eq:rrilsx_def}.
\ende
\end{theorem}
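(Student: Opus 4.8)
The plan is to treat \thmlsx\eqref{it:lsx_numeric} as an algebraic (\fwlf) reduction and \thmlsx\eqref{it:lsx_clt} as an application of Hansen's just-identified \tsls\ central limit theorem, \lemtsls. Since the regressor dimension $p+2$ is fixed for the \ca\ \cts\ (the diverging-dimension difficulty is specific to $\htfex$), \lemtsls\ applies directly, and the substantive work lies in identifying the explicit formulas and matching the asymptotic variance.

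For part \eqref{it:lsx_numeric}, I would invoke the \fwlf-type identity for \tsls\ in \basu\ to partial out $(1,\xxi)$ from the outcome, treatment, and instrument. Writing $\zix = \res(\zi\mid 1,\xxi)$ for the \emph{sample} residual of $\zi$ on $(1,\xxi)$, the just-identified structure reduces $\htx$ to the ratio of the sample covariance of $\zix$ with $\yi$ to that of $\zix$ with $\di$; because $\zix$ is sample-orthogonal to $(1,\xxi)$, replacing $\yi,\di$ by their own partialled residuals leaves these covariances unchanged, giving $\htx = \szyx/\szdx$. Likewise $\sumi \zix\di = N\szdx$, so the bread $(V^\T\pw V)^{-1}$ in \defcrse\ collapses to $(N\szdx)^{-1}$; evaluated at the one-dimensional partialled regression, the cluster-robust sandwich reduces to $\hsesq_\lsx = N^{-2}\szdx^{-2}\sumg(\sumig \zix\rilsx)^2$, where $\rilsx$ is the full-model \tsls\ residual (equal to the partialled residual by \fwlf). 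This step is essentially bookkeeping.

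For part \eqref{it:lsx_clt}, I would cast $\tslst(\yi\sim 1+\di+\xxi\mid 1+\zi+\xxi)$ in the framework of \lemtsls\ with $u_i=\yi$, $v_i=(1,\di,\xit)^\T$, $w_i=(1,\zi,\xit)^\T$, and error $\rrilsx=\res(\ai\mid 1,\xxi)$ where $\ai=\yi-\di\tc$; the decomposition \eqref{eq:rrilsx_decomp} shows $\yi=v_i^\T\beta^*+\rrilsx$ with $\di$-coefficient equal to $\tc$. The key structural input is the moment condition $\E(w_i\rrilsx)=0$: orthogonality to $1$ and $\xxi$ is immediate from the linear projection, while orthogonality to $\zi$ reduces to $\E(\zi\rrilsx)=\cov(\zi,\ai)=\cov(\zi,\yi)-\tc\cov(\zi,\di)=0$, the defining Wald/LATE identity under \assmtth\ (using $\zi\indep\xxi$ from \assmiv\eqref{it:assm_iv_random} to kill the $\xxi$ cross term). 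Under \assmasymlsx\ the remaining hypotheses of \lemtsls\ hold: \assmcs\ and \assmng\ with $r=2$ come from \assmasym\eqref{it:assm_asym_123}; the fourth-moment bounds $\E(Y_i^4)<\infty$ and $\E(\|X_i\|^4)<\infty$ together with boundedness of the binary $\di,\zi$ supply condition \eqref{it:hansen_bounded} at $s=r=2$; identical marginals under \assmim\ cover condition \eqref{it:hansen_identical}; and the eigenvalue/rank conditions are exactly \assmlm{\omgnlsx} and \assmlm{\esx} in \assmasym\eqref{it:assm_asym_lsx}.

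Applying \lemtsls\ with the selection vector $\ccn$ extracting the coefficient of $\di$ then yields all three displays, with asymptotic variance $\ccnt\sign\ccn$ and estimator consistency $N\hsexsq/\vlsxn=1+\op$. The final step, which I expect to be the main obstacle, is identifying $\ccnt\sign\ccn$ with $\vlsxn$. Writing $\sign=\gn^{-1}\omgn\gn^{-\T}$ and $a=\gn^{-\T}\ccn$, the vector $a$ is characterized by $\E\{v_i(a^\T w_i)\}=\ccn$, i.e.\ $a^\T w_i$ is orthogonal to $1$ and $\xxi$ and has unit covariance with $\di$. Because $\zi\indep\xxi$ forces the population residual of $\zi$ on $(1,\xxi)$ to be exactly $\zi-e$, one recognizes $a^\T w_i=(\zi-e)/\vzpc$ with $\vzpc=\cov(\zi,\di)=\vz\pc$; substituting into $\ccnt\sign\ccn=\nifsumg\var\{\sumig a^\T w_i\,\rrilsx\}$ recovers precisely $\vlsxn=\frac{1}{\vzpcsq}\nifsumg\var\{\sumig(\zi-e)\rrilsx\}$. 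The delicate points are recognizing the effective post-partialling instrument as $(\zi-e)/\vzpc$ and confirming that the sample-projection correction in $\zix$ versus $\zi-e$ is $\op$, which again uses $\zi\indep\xxi$.
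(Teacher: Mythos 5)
Your route is essentially the paper's: part \eqref{it:lsx_numeric} via the \fwlf\ reduction of \lembasu\ (packaged as \lemfwl\ in the paper, applied with $W_i = (1,\xit)^\T$), and part \eqref{it:lsx_clt} by casting the regression as a just-identified instance of \lemtsls\ with $u_i = \yi$, $\vilsx = \odxt$, $\wilsx = \ozxt$, error $\rrilsx$, and selector $\ccn = (0,1,\zpt)^\T$, then identifying $\ccnt \sign \ccn$ with $\vlsxn$ through the identity $(0,1,\zpt)\glsx^{-1} = \frac{1}{\vzpc}(-e,1,\zpt)$; your characterization of $a = \glsx^{-\T}\ccn$ via the moment equations is the same computation in different clothing. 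Your closing worry about reconciling the sample residual $\zix$ with $\zi - e$ is moot on this route: the second display of \lemtsls\ already gives consistency of $\hsign$ as defined in \defcrse, which by \lemfwl\ coincides with the part \eqref{it:lsx_numeric} formula, so no separate comparison is needed.

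The genuine gap is your claim that the rank and eigenvalue hypotheses of \lemtsls\ ``are exactly'' \assmlm{\omgnlsx} and \assmlm{\esx} in \assmasym\eqref{it:assm_asym_lsx}. First, \assmlm{\esx} is not available here: it belongs to \assmasym\eqref{it:assm_asym_fex}, which is invoked only for \thmfex. Second, and more substantively, what \lemtsls\ actually requires --- \lemtsls\eqref{it:hansen_rank}, full rank of $\gnlsx$, and \lemtsls\eqref{it:hansen_lambda_min}, $\inf_N\lm(\psinlsx) > 0$ --- is assumed nowhere in \thmlsx\ and must be derived from the IV structure; note that even your characterization of $a = \glsx^{-\T}\ccn$ presupposes that $\glsx$ is invertible. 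The paper closes this in two steps: (a) by \lemcefd\ and \lemcefx, $\czd = \vzpc > 0$ and $\cov(\zi,\xxi) = 0_p$, so row reduction gives $\det(\glsx) = \vzpc \cdot \det(\vx) > 0$, whence $\gnlsx = \glsx$ has full rank; (b) if $\lm(\psilsx) = 0$, then $a^\T\wilsx = 0$ almost surely for some $a \neq 0$, forcing $a^\T\glsx = \E(a^\T\wilsx\vilsxt) = 0$ and contradicting (a). (Step (a) also needs $\det(\vx) > 0$; this is where \assmlm{\omgnlsx} earns its keep, since $a^\T\xxi$ being almost surely constant would make $\lm(\omgnlsx) = 0$.) Without these verifications your appeal to \lemtsls\ is unjustified, and adding \assmlm{\esx} as a hypothesis instead would prove a strictly weaker statement than \thmlsx.
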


\begin{theorem}\label{thm:fex}
Consider the \tsfexf, $
\tsfexformula$.  
\begine[(i)]
\item\label{it:fex_numeric} $\htfex = \dfrac{\szycx}{\szdcx}$,
\quad
$\hsefexsq = \dfrac{1}{N^2}\cdot\dfrac{1}{\szdcx^2} \dsumg \left(\dsumig \zicx \cdot \rifex\right)^2$. 
%%%%%%%%%%%%%%%%%%%%%% 
\item\label{it:fex_clt}
As $\ntinf$, if \assmasymfex\ hold, then  
\begina
\vfexn^{-1/2} \cdot \sqrtn(\htfex - \tc)  \rs \sn, \\
N\hsefexsq/\vfexn  = 1+\op, \\
  (\htfex - \tc) / \hsefex  \rs \sn,&&
\enda
where $\vfexn =    \dfrac{1}{(\kvzpc)^2}\cdot \nisumg \var(\sumig\dzi \rrifex)$
with $\rrifex = \dai - \dxit\gxa$ as defined in \eqref{eq:rrifex_def}. 
\ende
\end{theorem}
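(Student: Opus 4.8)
The plan is to reduce the full covariate-adjusted \tsfe\ to a scalar just-identified \tsls\ via Frisch--Waugh--Lovell, then linearize the resulting ratio and invoke the clustered central limit theorem. For part~\eqref{it:fex_numeric}, I would apply the FWL-for-\tsls\ result of \lembasu: since the block $(\cci,\xxi)$ appears in both the regressor and the instrument lists of $\tsfexformula$, the coefficient of $\di$ and its \crse\ coincide with those from the scalar \tsls\ of the least-squares residual of $\yi$ on $(\cci,\xxi)$ regressed on that of $\di$, instrumented by $\zicx$. Crucially, partialling out $\cci$ is exactly cluster-centering (cf.\ \eqref{eq:res_1_C}), so no difficulty from the diverging dimension of $\cci$ enters here. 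The closed form of scalar just-identified \tsls\ then gives $\htfex=\szycx/\szdcx$, where orthogonality of $\zicx$ to $\mathrm{span}(\cci,\xxi)$ lets me pair $\zicx$ with the raw $\yi,\di$; the \crse\ formula follows from \defcrse\ applied to this reduced regression with residual $\rifex$.

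For part~\eqref{it:fex_clt}, writing $\ai=\yi-\di\tc$ yields the exact identity $\htfex-\tc=(\meani\zicx\ai)/\szdcx$. A short algebraic computation, using that the sample coefficient $\hgxa=(\meani\dxxi)^{-1}\meani\dxi\dai$ defines the least-squares residual $\dai-\dxit\hgxa$ of $\ai$ on $(\cci,\xxi)$, shows that the $X$-coefficient on the instrument side cancels, leaving $\meani\zicx\ai=\meani\dzi(\dai-\dxit\hgxa)$. The denominator $\szdcx$ converges, by the clustered weak law (\lemsmlln), to $\kvzpc$. For the numerator I would replace $\hgxa$ by its population limit $\gxa$, incurring after scaling the error $-[\sqrtn\,\meani\dzi\dxit](\hgxa-\gxa)$.

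The key observation is that cluster-level random assignment (\assmiv\eqref{it:assm_iv_random}) makes $\zi$ independent of $\xxi$ within clusters, hence $\E(\dzi\dxit)=0$; therefore $\sqrtn\,\meani\dzi\dxit=\oop$ while $\hgxa-\gxa=\op$, so the replacement error is $\op$ and $\sqrtn\cdot(\text{numerator})=\sqrtn\,\meani\dzi\rrifex+\op$ with $\rrifex=\dai-\dxit\gxa$. This is a clustered sample mean $N^{-1/2}\sumg\sumig\dzi\rrifex$, so \lemsmclt\ applies under \assmasymfex: the bounds $\E(Y_i^4)<\infty$ and $\E(\|X_i\|^{2s})<\infty$ give the uniform integrability, \assmlm{\omgnfex} the nondegeneracy, and \assmlm{\esx} the invertibility needed to define and consistently estimate $\gxa$. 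Dividing by the denominator limit $\kvzpc$ via Slutsky gives $\vfexn^{-1/2}\sqrtn(\htfex-\tc)\rs\sn$. The standard-error consistency $N\hsefexsq/\vfexn=1+\op$ follows from the same substitutions ($\rifex\to\rrifex$ and, using $\E(\dzi\dxit)=0$, $\zicx\to\dzi$) together with the clustered weak law, paralleling \lemtsls; combining the two yields the studentized limit.

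The main obstacle is the interaction between covariate adjustment and the diverging dimension of the cluster fixed effects. The diverging dimension itself is defused by the exact cluster-centering identity, so partialling out $\cci$ contributes no sampling error; but the estimated finite-dimensional coefficient $\hgxa$ of $\dxi$ does fluctuate at the $\sqrtn$ rate, and showing that its contribution is asymptotically negligible---rather than inflating the influence function from $\dzi\rrifex$ to $(\dzi-\dxit\gxz)\rrifex$---rests entirely on the within-cluster independence $\E(\dzi\dxit)=0$ delivered by cluster-level IV assignment. The remaining technical work is verifying the moment and uniform-integrability conditions for the products $\dzi\rrifex$, $\dxi\dai$, and $\dxxi$ so that both the clustered CLT and WLLN apply.
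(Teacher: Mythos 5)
Your proposal is correct in substance, and for part \eqref{it:fex_numeric} it coincides with the paper's proof (\lemfwl\ with $W_i=(\cci,\xxi)$). For part \eqref{it:fex_clt} you take a genuinely different, and somewhat leaner, route. The paper partials out only $\cci$ (\lembasu\ with $W_i=\cci$), keeps $\xxi$ inside a joint $(1+p)$-dimensional estimating equation $(\htfex-\tc,\ (\hbxfex-\gxa)^\T)^\T=\phini\,\bb$ with $\bi=(\dzi,\dxit)^\T\rrifex$, and eliminates the $X$-block only in the limit, via the block-inverse structure of $\{\E(\phin)\}^{-1}$, which collapses $(1,0_p^\T)\phini$ to $\kvzpci(1,0_p^\T)$ because $\E(\sxz)=0$. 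You instead partial out $\cci$ and $\xxi$ simultaneously, use the exact scalar identity $\meani\zicx\ai=\meani\dzi(\dai-\dxit\hgxa)$ (valid since $\hgxzt\sxa=\sxzt\sxi\sxa=\sxzt\hgxa$ by symmetry of $\sxi$), and then dispose of the estimation error of $\hgxa$ through the product bound $(\sqrtn\sxz)^\T(\hgxa-\gxa)=\oop\cdot\op=\op$. Both arguments ultimately rest on the same two facts---$\E(\dzi\dxit)=0$ from cluster-level assignment, and the clustered CLT/WLLN of \cite{hansen2019asymptotic}---and produce the same influence function $\kvzpci\dzi\rrifex$; yours stays scalar throughout, which simplifies the CLT step, while the paper's vector formulation delivers $\hbxfex-\gxa=\op$ as a by-product (which it then needs anyway for the standard-error argument).

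Two points need tightening. First, your Slutsky step divides by $\szdcx=\kvzpc+\op$, which gives the stated limit only if $\kvzpc$ is bounded away from zero, i.e., $\inf_N\kn>0$; \assmasym\eqref{it:assm_asym_fex} only yields $\kn>0$ for each $N$ (\lemfex\eqref{it:lem_fex_kn}). The paper's normalization sidesteps this in the CLT step by keeping $\kvzpci$ inside both the summand $\bis=\kvzpci\dzi\rrifex$ and $\vfexn$, so the factor cancels identically instead of appearing as a random-over-deterministic ratio; you should either restructure the same way or state the extra condition. Second, the standard-error claim $N\hsefexsq/\vfexn=1+\op$ is not just ``the same substitutions plus the clustered weak law'': the replacement errors enter quadratically through $\nisumg\left(\sumig\zicx\rifex\right)^2$, and controlling them requires cluster-level fourth-moment bounds such as $\nisumg\spn{\dxg}^{4}=\oop$, $\nisumg\en{\rrgfex}^{4}=\oop$, and $\nisumg\en{\sxrg}^{2}=\oop$ (the paper's \lemfex\eqref{it:lem_fex_X_Y_R}--\eqref{it:lem_fex_bound_Sg^2}), combined with a term-by-term Cauchy--Schwarz decomposition of $\sumig\zicx\rifex-\sumig\dzi\rrifex$ into pieces proportional to $\htfex-\tc$, $\hbxfex-\gxa$, and $\hgxz$. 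A weak law for sample means does not deliver this on its own; your closing remark about verifying moment and uniform-integrability conditions is where this---the bulk of the work for the second display---actually lives.
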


The unadjusted \cts\ and \tsfe\ are special cases of \thmlsxfex, where $\xxi = \emptyset$.  
We have $\zix = \zi-\bz$ and $\zicx =\zi- \bzci$, with
\begineq\label{eq:num}
\begin{array}{lll}
\htls = \dfrac{\szyt}{\szdt},&\quad&
%%%%%%%%%%%%%%%%%%
%%%%%%%%%%%%%%%%%%
\hsesqls =  \dfrac{1}{N^2} \cdot \dfrac{1}{\szdt^2} \dsumg \left\{\dsumig (\zi - \bz) \rils\right\}^2,\\
%%%%%%%%%%%%%%%
%%%%%%%%%%%%%%%
\htfe   = \dfrac{\szy}{\szd}, &\quad& \hsesqfe   = \dfrac{1}{N^2}
 \cdot \dfrac{1}{\szd^2}  \dsumg \left\{\dsumig(Z_i - \bzg) \rife  \right\}^2, 
 \end{array}
\endeq 
where $\szyt$, $\szdt$, $\szy$, and $\szd$ are the sample total and within-cluster covariances as defined in \eqref{eq:sz_szy_szd} and \eqref{eq:szyt_szdt}.

\section{Lemmas}\label{sec:lem}
\subsection{Review of \hansen}\label{sec:hansen}
We review below the asymptotic theory for clustered units from \hansen.  
To simplify notation, we follow \hansen\ in omitting the sample size $N$ from the subscript when no confusion is likely to arise. \assmng\ below reviews \citet[Assumption~2]{\hansenid} on cluster sizes for the central limit theorem for the sample mean. %
\assm~\ref{assm:ng_main} is the most restrictive case of \assmng\ with $r = 2$.  

\begin{assumption}\label{assm:ng}
As $N \to \infty$, (i) there exists a constant $r \in [2, \infty)$ such that\\ $\limsup_{N\to\infty} N^{-1}(\sumg n_g^r)^{2/r} < \infty$; (ii) $\ds\max_{\otg}  \,  {\ngsq}/N \to 0$.
\end{assumption}

\lemsm\ below reviews the asymptotic theory for sample mean in \citet[\thms~1--3]{\hansenid} that allows for heterogeneous clusters.
A collection of random variables $\{B_n \in \mbr: n \in \mathcal C\}$ is {\it \uni} if  
\begineq\label{eq:uni_def}
\ds\lim_{M\to\infty}\sup_{n \in \mathcal C} \E\left( |B_n| \cdot 1_{\{|B_n| > M\}}\right) = 0.
\endeq 
A sufficient condition for \eqref{eq:uni_def} is that $\sup_{n \in \mathcal C} \E(|B_n|^r) <\infty$ for a constant $r > 1$.

\begin{lemma}\label{lem:hansen_sm}
Let $\bi \in \mbr^p$ be a $p \times 1$ random vector defined for $\otn$. Let $\bbn = \meani \bi$ and $\omgn = \cov(\sqrtn\bbn)$.  
Let $\tomgn = \nisumg \tbg\tbg^\T$, where $\tbg = \sumig \bi$.
Assume \assmcs. As $\ntinf$, 
\begine[(i)]
\item\label{it:lem_sm_wlln} \citep[\thm~1]{\hansenid}: If $\ds\max_{\otg} \ng/N \to 0$, and  
$\{\|\bi\|: \otn\}$ is \uni, 
then $\|\bbn - \E(\bbn)\| = \op$. 
%%%%%%%%%%%%%%%%%%%%%%
%%%%%%%%%%%%%%%%%%%%%%
\item\label{it:lem_sm_clt} \citep[\thm~2--3]{\hansenid}: If (a) \assmng\ holds for a constant $r \in [2,\infty)$,  (b) $\{\|\bi\|^r: \otn\}$ is \uni, and (c) \assmlm{\omgn}, 
then $\omgn^{-1/2}\cdot\sqrtn \left\{\bbn - \E(\bbn)\right\} \rs \mn(0_p, I_p).$

Further assume that $\E(\bi) = 0$ for all $\otn$. Then 
\begineqs
\omgn^{-1/2}\tomgn\omgn^{-1/2} = I_p + \op,\quad
\tomgn^{-1/2}\cdot\sqrtn \left\{\bbn - \E(\bbn)\right\}  \rs \mn(0_p, I_p).
\endeqs
\ende
\end{lemma}

%\begin{remark}\label{rmk:uniform} Conditions~\eqref{eq:uniform_sm} and \eqref{eq:uniform_clt} require \unin\ of $\|\bi\|$ and $\|\bi\|^r$, respectively, accommodating different marginal distributions of $\bi$. 
%When $B_i$ are identically distributed, they simplify to $\E(\|\bi\|) < \infty$ and $\E(\|\bi\|^r) < \infty$, respectively.  
%With distributional heterogeneity, 
%\begini
%\item a sufficient condition for \eqref{eq:uniform_sm} is $\sup_{\otn} \E(\|\xxi\|^s) < \infty$ for some $s > 1$
%\item a sufficient condition for \eqref{eq:uniform_clt} is $\sup_{\otn} \E(\|\xxi\|^s) <\infty$ for some $s \in (r, \infty)$; 
%\endi
%see \lemuniform\eqref{it:lem_uniform_4_suff_cond}. 
%\end{remark}

\lemtsls\ below reviews the asymptotic theory for general \tsls\ in \citet[Theorem 9]{\hansenid} in the context of just-identified cases.
For $\otn$, let 
$u_i \in \mbr$ denote the dependent variable, 
$v_i \in \mbr^p$ denote the $p\times 1$ regressor vector,
and $w_i  \in \mbr^p$ denote the $p\times 1$ IV vector, respectively.
Let $
u_\sg = \vjh{u}$, $\vsg = \vjh{v}$, and $\wsg = \vjh{w}$ 
denote the concatenations of $(u_i, \vi, \swi)$, respectively, for units within cluster $g$. 
Assume that 
\beginy\label{eq:model_hansen}
u_\sg = \vsg \beta + \epg,\quad \vsg = \wsg \gamma + \delta_\sg,\quad 
\E(\wsg^\T \epsilon_\sg) = 0,
\endy
where $\epsilon_\sg = \vjh{\epsilon}$ is an $\ng \times 1$ error vector with $ \wsg^\T \epsilon_\sg = \sumig \wi \epi $.  
Let $\hb\in \mbr^p$ denote the estimated \coeffv\ of $v_i$ from $\tslst(u_i\sim v_i \mid w_i)$. 
Let $\hsign = N\widehat{\cov}(\hb)$ denote the cluster-robust covariance of $\hb$ scaled by a factor of $N$; c.f. \defcrse.
Let 
\beginy\label{eq:hansen_def}
\begin{array}{rclcl}
\ds\gn &=& \ds\nisumg \E(\wsgt\vsg),\quad
%%%%%%%
\ds\psin 
= \ds\nisumg \E(\wsgt\wsg), \\
%%%%%%%%%%%%%%%
\ds\omgn 
&=& \dnisumg \E(\wsgt \epg \epgt \wsg) 
\oeq{\eqref{eq:model_hansen}} \dnisumg \cov(\wsgt \epg), \\
%%%%%%%%%%%%%%
\sign &=&   \gn^{-1}   \omgn (\gnt)^{-1}.
\end{array}
\endy
%where 
%\begine[-]
%\item the equivalent forms of $(\gn, \psin)$ follow from $\wsg^\T \vsg = \sumig w_i \vi^\T$ and $\wsg^\T \wsg = \sumig w_i w_i^\T$ by the definition of $(\vsg, \wsg)$;
%%%%%%%%%%%%%%%%%%
%\item the equivalent form of $\omgn$ follows from $\E(\wsgt\epg) = 0$ and $ \wsg^\T \epg = \sumig \wi \epi $ by \eqref{eq:model_hansen}. 
%%%%%%%%%%%%%%%%%%%%%%%%%%
%\item the equivalent form of $\sign$ follows from $(\gnt\psini\gn)^{-1} = \gn^{-1}\psin(\gnt)^{-1}$ for just-identified cases. 
%\ende  

\begin{lemma}\label{lem:hansen_2sls}
\citep[\thm~9]{\hansenid} If as $\ntinf$,
\begine[(i)]
\item\label{it:hansen_cs} \assmcs\ holds; 
\item\label{it:hansen_ng} 
\assmng\ holds for a constant $r \in [2,\infty)$;
\item\label{it:hansen_rank} $\gn$ has full rank $p$;
\item\label{it:hansen_lambda_min}  \assmlm{\psin}, \quad \assmlm{\omgn}; 
\item\label{it:hansen_bounded} there exists a constant $s \in [r, \infty)$ so that $
\ds \supi\E(|u_i|^{2s}) < \infty$, $\ds \supi\E(\|v_i\|^{2s}) < \infty$, $\ds \supi\E(\|w_i\|^{2s}) < \infty$;  
\item\label{it:hansen_identical} either $(u_i, v_i, w_i)$ have identical marginal distributions or  $r < s$,
\ende
then for any sequence of full-rank $p\times q$ matrices $C_N$,
\begina
(\ccnt \sign\ccnt)^{-1/2} \ccnt \sqrtn (\hb - \beta) &\rs& \mn(0_q, I_q), \\
(\ccnt \sign\ccnt)^{-1/2} \ccnt \hsign\ccn(\ccnt \sign\ccnt)^{-1/2} &=& I_q + \op,\\
(\ccnt \hsign\ccnt)^{-1/2} \ccnt \sqrtn (\hb - \beta) &\rs& \mn(0_q, I_q).
\enda 
The standard errors for elements of $\ccnt \hb$ can be estimated by taking the square roots of the diagonal elements of $\ni\ccnt \hsign\ccnt$. 
\end{lemma}

\subsection{Basic lemmas and useful facts}
%\lemblock\ below reviews some useful properties of block matrix.
Recall that $\png$ denotes the $\ng\times \ng$ projection matrix. For scalars $\sai, \sbi \in \mbr$ defined on $\otn$ and $\dsai = \sai - \bsag$ and $\dsbi = \sbi - \bsbg$, a useful fact is that 
\begineq\label{eq:apb}
 \sumig \dsai \dsbi    = (\png\sag)^\T(\png \sbg) =  (\sagt\png)  \sbg = \dsumig \dsai  \sbi. 
\endeq

\begin{lemma}
\label{lem:block_mat}
%\begine[(i)]
%\item 
Let $\abcd$ denote a block partition of a matrix. If $A$ is invertible, then 
\begineqs
\abcd^{-1} = \beginp
A^{-1} + A^{-1}B S_A^{-1} C A^{-1}
& - A^{-1}B S_A^{-1}\\
%%%%%%%%%
- S_A^{-1}C A^{-1} & S_A^{-1}
\endp, \quad \det\abcd = \det(A) \cdot \det(S_A),
\endeqs
where $S_A = D - C A^{-1} B$.
%\item For square matrices $A$ and $D$, the eigenvalues of $\beginp A & 0 \\ 0 & D \endp$ are those of $A$ and $D$ combined. 
%\ende
\end{lemma}

%\lems~\ref{lem:holder}--\ref{lem:basic} below review  several standard results in probability theory.  %below reviews Holder's inequality and Lyapunov's inequality. 

%\begin{lemma}\label{lem:holder} 
%\begine[(i)]
%\item\label{it:lem_holder_1} Holder's inequality: For random variables $X, Y\in\mbr$ and $p, q \in (1, \infty)$ with $1/p + 1/q = 1$, we have  $
%\E(|XY|) \leq \{\E(|X|^p)\}^{1/p} \cdot \{\E(|Y|^q)\}^{1/q}$. 
%\item\label{it:lem_holder_2} Lyapunov's inequality: For a random variable $X\in \mbr$ and constants $0< r < s <\infty$, we have $\E(|X|^r)\leq \{\E(|X|^s)\}^{r/s}$.
%\ende
%\end{lemma}

\begin{lemma}\label{lem:bound}
\begine[(i)]
\item\label{it:lem_bound_1} For $b_1, \ldots, b_m \in \mbr^p$ and $q \geq 1$, we have \\ $
\left\|\dfrac{b_1 + \cdots + b_m}{m}\right\|^q \le \left(\dfrac{\|b_1\| + \cdots + \|b_m\|}{m}\right)^q \le \dfrac{\|b_1\|^q + \cdots + \|b_m\|^q}{m}.$
%%%%%%%%%%%%%
\item\label{it:lem_bound_2} 
Let $B_1,\dots,B_n \in \mbr^p$ be identically distributed $p\times 1$ random vectors, and let $\bb = \meanin \bi$. For $q \geq 1$, we have  $\|\bi - \bb\|^q \leq
2^{q-1}(\|\bi\|^q +   \meanin \|\bi\|^q)$, 
\begineqs
\E\left(\|\bi - \bb\|^q\cdot 1_{\{\|\bi-\bb\| > M\}}\right) 
\le \E\left(\|\bi - \bb\|^q\right)
\le 2^q \cdot \E\left(\|\bi\|^q\right). 
\endeqs
%%%%%%%%%%%%%
\item\label{it:lem_bound_3}
For a random vector $B$ and constants $0 < p < q$,  $\left\{\E(\|B\|^p)\right\}^{1/p} \leq \left\{\E(\|B\|^q)\right\}^{1/q}$. 
\ende 
\end{lemma}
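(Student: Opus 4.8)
The plan is to treat \lembound\ as a collection of three standard moment inequalities and to prove each part in turn, relying only on the triangle inequality, the convexity of $t \mapsto t^q$ on $[0,\infty)$ for $q \geq 1$, and the Lyapunov inequality already recorded in \lemholder\eqref{it:lem_holder_2}. For part \eqref{it:lem_bound_1}, I would obtain the first inequality from the triangle inequality $\en{b_1 + \cdots + b_m} \leq \en{b_1} + \cdots + \en{b_m}$, then divide by $m$ and raise both sides to the power $q$, which preserves the inequality because $t \mapsto t^q$ is nondecreasing on $[0,\infty)$. The second inequality is Jensen's inequality: since $t \mapsto t^q$ is convex on $[0,\infty)$ for $q \geq 1$, applying it to the uniform average of $\en{b_1}, \ldots, \en{b_m}$ yields $\left(m^{-1}\sum_{k} \en{b_k}\right)^q \leq m^{-1}\sum_k \en{b_k}^q$, which is exactly the claimed bound.

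For part \eqref{it:lem_bound_2}, I would first establish the pointwise inequality. Starting from $\en{\bi - \bb} \leq \en{\bi} + \en{\bb}$ and using the two-term case $m=2$ of part \eqref{it:lem_bound_1} in the form $(a+b)^q \leq 2^{q-1}(a^q + b^q)$, I get $\en{\bi - \bb}^q \leq 2^{q-1}(\en{\bi}^q + \en{\bb}^q)$; bounding $\en{\bb}^q \leq \meanin \en{\bi}^q$ by part \eqref{it:lem_bound_1} with $m = n$ gives the first displayed line. For the second line, the truncated expectation is dominated by the full expectation because the indicator is at most one, and for the final bound I would take expectations in the pointwise inequality and invoke the identical-distribution hypothesis, under which $\E(\en{\bi}^q)$ does not depend on $i$, so the averaged term collapses and the constant simplifies to $2^{q-1}(1+1) = 2^q$.

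Part \eqref{it:lem_bound_3} is an immediate application of Lyapunov's inequality in \lemholder\eqref{it:lem_holder_2} to the scalar random variable $X = \en{B}$ with exponents $0 < p < q$, followed by raising both sides to the power $1/p$.

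None of the three parts presents a genuine obstacle; the argument is entirely bookkeeping. The only point requiring a moment's care is the collapse of the averaged moment in part \eqref{it:lem_bound_2}: it is precisely the identical-distribution assumption on $B_1, \ldots, B_n$ that turns $\meanin \E(\en{\bi}^q)$ into the single moment $\E(\en{\bi}^q)$ and thereby produces the clean constant $2^q$ rather than a bound involving a sum over all units.
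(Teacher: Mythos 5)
Your proposal is correct and follows essentially the same route as the paper's proof: triangle inequality plus Jensen's inequality for the convex map $t \mapsto t^q$ in part (i), the $m=2$ and $m=n$ specializations of part (i) combined with the identical-distribution hypothesis to collapse the averaged moment and yield the constant $2^q$ in part (ii), and Lyapunov's inequality applied to $\|B\|$ in part (iii). The only cosmetic difference is that the paper phrases the Jensen step in part (i) via a discrete uniform random variable on $\{\|b_1\|,\ldots,\|b_m\|\}$, whereas you apply Jensen directly to the finite average; these are the same argument.
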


\begin{lemma}\label{lem:basic}
\begine[(i)]\item\label{it:lem_basic_as}
For a nonnegative random variable $B\geq 0$, $
\E(B) = 0$ if and only if $B = 0$ almost surely.
%%%%%%
%%%%%%%
\item\label{it:lem_basic_oop} 
Let $(B_n)_{n=1}^\infty$ be a sequence of random variables. As $n \to \infty$,
(a) if $
B_n \rs b$
for a constant $b$, then 
$
B_n =  b +\op$;  
%%%%%%%%%
(b) if $\ds \limsup_{n\to\infty}\E(|B_n|) < \infty$, then $B_n = \oop$. % By Markov inequality
\ende 
\end{lemma}

\medskip

\begin{lemma}\label{lem:lm}
Let $A \in \mbr^{p \times p}$ be a symmetric positive semidefinite matrix. 
Let $C \in \mbr^{p \times q}$ be a matrix, and let $\|C\|_2$ denote the largest singular value of $C$, also known as the spectral norm, with $\|C\|^2_2 = \lmax(\ctc)= \lmax(C C^\T)$.  
Then 
\begine[(i)]
\item\label{it:lem_lm_Cx}  $\|Cx\| \le \|C\|_2 \, \|x\|$ for any $q\times 1$ vector $x$; 
\item\label{it:lem_lm_CAC}
$\lm(\cac) \in \big[\lm(A), \lmax(A)\big]\cdot \lm(\ctc)$. 
%%%%%%%%%%%%%%%
%\item\label{it:lem_lm_ECAC}
%If $A$ is a fixed matrix, and $C$ is a random matrix such that $\mathbb{E}(\ctc)$ exists, then 
%$
%\lm\left\{\E(\cac)\right\} 
%\in \big[\lm(A), \lmax(A)\big]\cdot  \lambda_{\min}\left\{\mathbb{E}(\ctc)\right\}.
%$
\ende
\end{lemma}

\subsection{\fwlf\ type results for 2{\normalsize SLS}}
\lem~\ref{lem:basu} below reviews \citet[Theorem 1]{basu2024frisch} on the Frisch--Waugh--Lovell (FWL) type results for \tsls.

\begin{lemma}\label{lem:basu}
Let $\dis, \zis\in \mbr^q$ and $W_i \in \mbr^p$ be  vectors defined for $\otn$.
Consider the \tsls\ regression of $\yi \in\mbr$ on $(\dis, \wi)$ instrumented by $(\zis, \wi)$, 
\beginy\label{eq:tsls_w}
\tslst(\yi \sim \dis + \wi \mid \zis + \wi). 
\endy
Let $\htau$ denote the \coeffv\ of $\dis$, and let $\ri \ (\otn)$ denote the IV residuals. 

Let $(\yiw, \disw, \zisw)$ denote the residuals from $\olst(\yi \sim \wi)$, $\olst(\dis \sim \wi)$, and $\olst(\zis \sim \wi)$, respectively.
Consider the \tslsr\ of $\yiw$ on $\disw$ instrumented by $\zisw$,
\beginy\label{eq:tsls_fwl}
\tslst(\yiw \sim \disw \mid \zisw). 
\endy
Let $\htfwl$ denote the \coeffv\ of $\disw$, and let $\riw \ (\otn)$ denote the IV residuals. 
Then $
\htau = \htfwl$, and $r_i = \riw$ for $\otn$. 
\end{lemma}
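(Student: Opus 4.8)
The plan is to prove this by applying the classical Frisch--Waugh--Lovell block-elimination argument to the instrumental-variables moment (normal) equations rather than to the least-squares ones. Since $\dis, \zis \in \mbr^q$ and $\wi \in \mbr^p$, the system in \eqref{eq:tsls_w} is just-identified, which is exactly what lets me characterize the estimator through the simple orthogonality conditions and perform the elimination cleanly.

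First I would stack the variables into $Y \in \mbr^N$, $D^* \in \mbr^{N\times q}$, $Z^* \in \mbr^{N\times q}$, and $W \in \mbr^{N\times p}$, and write the regressor matrix $V = (D^*, W)$ and instrument matrix $\widetilde W = (Z^*, W)$. Because the system is just-identified, the 2SLS estimator $\hb = (\htau^\T, \hg^\T)^\T$ of \eqref{eq:tsls_w} is pinned down by the IV moment conditions $\widetilde W^\T(Y - V\hb) = 0$, which partition as
\begin{align*}
\zst(Y - D^*\htau - W\hg) &= 0,\\
\wt(Y - D^*\htau - W\hg) &= 0.
\end{align*}
Solving the $W$-block gives $\hg = (\wt W)^{-1}\wt(Y - D^*\htau)$, so that $W\hg = \pw(Y - D^*\htau)$ with $\pw = W(\wt W)^{-1}\wt$. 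Substituting into the first block and writing $M_W = I_N - \pw$ for the annihilator of $W$ collapses it to $\zst M_W(Y - D^*\htau) = 0$.

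Next I would use that $M_W$ is symmetric and idempotent to rewrite this as $(M_W Z^*)^\T(M_W Y - M_W D^*\htau) = 0$, i.e.\ in terms of the partialled-out variables $Y_\fwl = M_W Y$, $\dfwl = M_W D^*$, $\zfwl = M_W Z^*$,
\[
\zfwlt(Y_\fwl - \dfwl\,\htau) = 0.
\]
This is precisely the moment condition defining the estimator $\htfwl$ of the partialled-out regression \eqref{eq:tsls_fwl}; invoking the rank condition (invertibility of $\zfwlt\dfwl$, implicit in \eqref{eq:tsls_fwl} being well defined) yields $\htau = \htfwl$. For the residual claim, I would substitute the solved $\hg$ back into the full residual, $r = Y - D^*\htau - W\hg = (I_N - \pw)(Y - D^*\htau) = Y_\fwl - \dfwl\,\htau$, which is exactly the residual of \eqref{eq:tsls_fwl} evaluated at $\htau = \htfwl$; hence $r_i = \riw$ for all $\otn$.

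The argument is largely routine linear algebra, so there is no deep obstacle; the one point that must be handled correctly is the just-identification structure. It is what permits characterizing 2SLS through the moment conditions $\widetilde W^\T(Y - V\hb)=0$ and performing block elimination without tracking the full projection $P_{\widetilde W}$ (in an over-identified setting one would additionally have to partial $W$ out of the projection, and the symmetry/idempotence step would need more care). I would therefore state the invertibility hypotheses for $\wt W$ and $\zfwlt\dfwl$ explicitly at the outset, since they are what guarantee both that the block elimination is legitimate and that the two moment conditions uniquely determine $\htau$ and $\htfwl$.
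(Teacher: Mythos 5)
Your proof is correct, but note that the paper itself contains no proof of this lemma: it is imported verbatim as Theorem~1 of \basu, so your derivation is a self-contained alternative rather than a reconstruction of the paper's argument. Your route exploits just-identification at the outset---since the regressors $(\dis,\wi^\T)^\T$ and instruments $(\zis,\wi^\T)^\T$ have the same dimension $q+p$, the \tsls\ estimator coincides with the IV estimator solving $\widetilde W^\T(Y-V\hb)=0$ exactly---and then block-eliminates the $W$-equations, using symmetry and idempotence of $I_N-\pw$ to arrive at the moment condition $\zfwlt(Y_\fwl-\dfwl\,\htau)=0$ that characterizes $\htfwl$; the residual identity $r_i=\riw$ falls out of the same substitution. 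This is cleaner and more elementary than what the general FWL-for-\tsls\ theorem in \basu\ must handle, because in an over-identified setting the estimator is not pinned down by exact moment conditions and one has to track the first-stage projection onto the full instrument set. The price is that your argument covers only the just-identified case, which is, however, all that the lemma asserts and all that the paper ever uses (the lemma is invoked in \rmk~\ref{rmk:cd}, \lem~\ref{lem:fwl}, and the proof of \thm~\ref{thm:fex}, always with equal numbers of regressors and instruments). Two bookkeeping points you rightly flag and should state explicitly: (i) invertibility of $\wt W$ and of $\zfwlt\dfwl$; and (ii) that these hypotheses are consistent with well-definedness of the original regression \eqref{eq:tsls_w}, which follows from the Schur-complement identity $\det(\widetilde W^\T V)=\det(\wt W)\cdot\det\{\zst(I_N-\pw)D^*\}$ together with $\zst(I_N-\pw)D^*=\zfwlt\dfwl$, so the two rank conditions are equivalent given $\wt W$ invertible.
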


\lem~\ref{lem:fwl} below builds on \lembasu, and gives the explicit forms of the \tsls\ coefficients and cluster-robust standard errors when $\dis$ and $\zis$ are scalars.

\begin{lemma}\label{lem:fwl}
Assume the setting of \lembasu\ with $q = 1$. 
Let $\hse$ denote the \crse\ of $\htau$ from \eqref{eq:tsls_w} by \defcrse. Let $\hsefwl$ denote the \crse\ of $\htfwl$ from \eqref{eq:tsls_fwl}. 
Let $\szdw = \meani \zisw \disw$ and $\szyw = \meani \zisw \yiw$. Then 
\begine[(i)]
\item\label{it:lem_fwl_explicit form} 
$\htau = \htw = \dfrac{\szyw}{\szdw}$,  
\quad 
$
\ds\hsesq = \hsefwl^2 
= \dfrac{1}{N^2}\cdot\dfrac{1}{\szdw^2} \sumg \left(\sumig \zisw \cdot \rifwl\right)^2. 
$
\item\label{it:lem_fwl_equivalent S}
$\szyw$ and $\szdw$ have the following equivalent forms: 
\begina
\begin{array}{lllll}
\szyw &=& \meani \zisw \yi &=& \meani \zis\yi - \hgwz^\T\meani \wi\yi,\\
\szdw &=& \meani \zisw \dis &=& \meani \zis\dis - \hgwz^\T\meani \wi\dis,
\end{array} 
\enda
where $
\hgwz = \left(\sumi\wi\wit\right)^{-1}\left(\sumi\wi\zis\right)$
denotes the \coeffv\ of $\wi$ from $\olst(\zis \sim \wi)$,  with $\zisw = \zis - \wit\hgwz$.
\ende
\end{lemma}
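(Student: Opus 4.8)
The plan is to reduce the full regression \eqref{eq:tsls_w} to the scalar, just-identified regression \eqref{eq:tsls_fwl} using \lembasu, and then read off both claims from the elementary just-identified \tsls\ algebra. By \lembasu\ with $q=1$, the coefficient of $\dis$ in \eqref{eq:tsls_w} equals the coefficient $\htfwl$ of $\disw$ in \eqref{eq:tsls_fwl}, and the two regressions share the same IV residuals, $r_i = \rifwl$. Since \eqref{eq:tsls_fwl} regresses the scalar $\yiw$ on the scalar $\disw$ instrumented by the scalar $\zisw$ (with no intercept), the just-identified \tsls\ estimator is the ratio of cross-moments, $\htfwl = (\sumi \zisw\yiw)/(\sumi\zisw\disw) = \szyw/\szdw$; combined with $\htau = \htfwl$ this gives the coefficient formula in part~(i).

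For part~(ii) I would exploit that $\zisw = \zis - \wit\hgwz$ is, by construction, the \ols\ residual of $\zis$ on $\wi$ and is therefore orthogonal to $\wi$, i.e.\ $\sumi \wi\zisw = 0$. Writing $\yiw = \yi - \wit\hgwy$ and $\disw = \dis - \wit\hgwd$ for the respective \ols\ fits and substituting, the $\wi$-components are annihilated by this orthogonality, so $\meani\zisw\yiw = \meani\zisw\yi$ and $\meani\zisw\disw = \meani\zisw\dis$. Expanding $\zisw = \zis - \wit\hgwz$ in these last two expressions and using $\wit\hgwz = \hgwz^\top\wi$ then yields the stated forms $\meani\zis\yi - \hgwz^\top\meani\wi\yi$ and $\meani\zis\dis - \hgwz^\top\meani\wi\dis$.

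The remaining and most delicate part of part~(i) is the standard-error identity $\hsesq = \hsefwl^2$, since \lembasu\ transfers coefficients and residuals but not, a priori, the \crse. I would verify that both $\hsefwl^2$ and $\hsesq$ equal the stated explicit expression. Applying \defcrse\ to the scalar regression \eqref{eq:tsls_fwl}, the regressor and instrument ``matrices'' are the single columns $(\disw)$ and $(\zisw)$, so every factor of the sandwich is a scalar multiple of powers of $\sumi\zisw\disw$ and $\sumi(\zisw)^2$; these cancel and leave $\hsefwl^2 = (\sumi\zisw\disw)^{-2}\sumg(\sumig\zisw\rifwl)^2$, which is the claimed form once $\sumi\zisw\disw = N\szdw$ is substituted. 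For the full regression, let $\widetilde V = [\,(\dis),\,\mathbf W\,]$ and $\widetilde W = [\,(\zis),\,\mathbf W\,]$ denote the stacked regressor and instrument matrices, where $\mathbf W$ collects the $\wit$. In the just-identified case the sandwich of \defcrse\ collapses to $(\widetilde W^\top\widetilde V)^{-1}(\widetilde W^\top\ho\widetilde W)(\widetilde V^\top\widetilde W)^{-1}$, whose $(1,1)$ entry is the variance of the $\dis$-coefficient. Using the block-inverse formula of \lemblock\ for the first column $a$ of $(\widetilde V^\top\widetilde W)^{-1}$, its leading entry is $a_1 = 1/\{\mathbf d^\top(I - P_{\mathbf W})\mathbf z\} = 1/(\sumi\disw\zisw) = 1/(N\szdw)$, where $P_{\mathbf W} = \mathbf W(\mathbf W^\top\mathbf W)^{-1}\mathbf W^\top$, and the covariate block combines so that $\widetilde W a = a_1\,(\zisw)$ as a vector; hence the $(1,1)$ entry equals $a_1^2\sumg(\sumig\zisw r_i)^2$. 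Substituting $r_i = \rifwl$ from \lembasu\ matches $\hsefwl^2$, giving $\hse = \hsefwl$ and the explicit form.

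The main obstacle is this last computation: showing that the $(1,1)$ block of the full-regression sandwich collapses to the FWL form. The crux is that, after partialling $\wi$ out of the instrument, the ``effective instrument'' acting on the residuals in the cluster-robust meat reduces to the column $(\zisw)$ up to the scalar $1/(N\szdw)$; this is exactly where the orthogonality $\sumi\wi\zisw = 0$ and the just-identified structure do the work, and it is the only step requiring genuine matrix algebra beyond \lembasu.
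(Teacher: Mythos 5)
Your proof is correct, but it establishes the standard-error identity $\hsesq = \hsefwl^2$ by a genuinely different route than the paper. For the coefficient formula and part~(ii) you do essentially what the paper does: \lembasu\ transfers the coefficient and the residuals, the scalar just-identified ratio gives $\htau = \szyw/\szdw$, and the first-order condition $\sumi \wi \zisw = 0$ yields the equivalent forms of $\szyw$ and $\szdw$. For the standard errors, however, the paper never computes the full-regression sandwich at all: it asserts the equality $\hsesq = \hsefwl^2$ by pointing to the proof of \citet[Theorem~3]{ding2021frisch} (legitimate once $r_i = \rifwl$ is known), and then computes only the FWL-side sandwich, via the first-stage fitted values $\hdisw$, to obtain the displayed formula. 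You instead prove the equality from scratch: you observe that in the just-identified case the sandwich of \defcrse\ collapses to $(\widetilde W^\T \widetilde V)^{-1}(\widetilde W^\T \ho \widetilde W)(\widetilde V^\T \widetilde W)^{-1}$, then use the block-inverse formula of \lemblock\ to show that the first column $a$ of $(\widetilde V^\T \widetilde W)^{-1}$ equals $a_1 (1, -\hgwz^\T)^\T$ with $a_1 = \{\mathbf d^\T(I - P_{\mathbf W})\mathbf z\}^{-1} = (N\szdw)^{-1}$, so that $\widetilde W a$ is $a_1$ times the stacked vector of $\zisw$ and the $(1,1)$ entry of the sandwich is $a_1^2 \sumg (\sumig \zisw r_i)^2$, matching the scalar-side computation exactly. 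I checked the two identities this hinges on, namely $(V^\T P_W V)^{-1} = (W^\T V)^{-1}(W^\T W)(V^\T W)^{-1}$, which makes the inner $(W^\T W)$ factors cancel, and idempotence of $I - P_{\mathbf W}$, which turns $\mathbf d^\T (I - P_{\mathbf W})\mathbf z$ into $\sumi \disw\zisw$; both are sound. The trade-off: your argument is self-contained, requiring no appeal to an external Frisch--Waugh--Lovell result for cluster-robust standard errors, at the price of one block-matrix computation; the paper's route is shorter but outsources exactly the step you rightly identify as the delicate one.
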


\begin{proof}[Proof of \lem~\ref{lem:fwl}] 
We verify below \lemfwl\eqref{it:lem_fwl_explicit form}--\eqref{it:lem_fwl_equivalent S}, respectively.

\paragraph*{Proof of \lemfwl\eqref{it:lem_fwl_explicit form}.}
\lembasu\ ensures that $\htau = \htw$ and $\ri = \rifwl \ (\otn)$.
Given $\ri = \rifwl$, the equivalence between $\hse$ and $\hsefwl$ follows from the proof of \citet[Theorem 3]{ding2021frisch}. 
We verify below the explicit forms of $(\htw,\hsefwl)$. 

Consider the first stage of \eqref{eq:tsls_fwl}, 
$\olst(\disw \sim \zisw)$. Let $\hbzfwl$ denote the coefficient on $\zisw$. 
Properties of least squares ensure
\beginy\label{eq:hbzfwl}
\hbzfwl = \dfrac{\sumi\zisw\disw}{\sumi (\zisw)^2}
 = \dfrac{\szdw}{\szw}, \where \szw = \meani (\zisw)^2. 
\endy 
The fitted values are then
\beginy\label{eq:hdisw}
 \hdisw = \zisw \hbzfwl \oeq{\eqref{eq:hbzfwl}} \zisw \cdot \dfrac{\szdw}{\szw}. 
\endy 
The second stage of \eqref{eq:tsls_fwl} equals $\olst(\yiw \sim \hdisw)$, with $\htw$ equal to the coefficient on $\hdisw$. This implies
\begina
\htw = \dfrac{\sumi \hdisw\yiw}{\sumi (\hdisw)^2} \overset{\eqref{eq:hdisw}}{=} \dfrac{\hbzfwl\cdot \sumi \zisw\yiw}{\hbzfwl^2 \cdot \sumi (\zisw)^2} 
 %%%%%%%%
\oeq{\eqref{eq:hbzfwl}} \dfrac{ \szyw }{\hbzfwl  \cdot \szw}
 %%%%%%%%%%%
 \overset{\eqref{eq:hbzfwl}}{=} \dfrac{ \szyw }{\szdw}. 
\enda

Let
\begina
\zgfwl = (Z^*_{g1 \fwl}, \ldots, Z^*_{g,\ng \fwl})^\T,\quad
\zfwl  = (\zst_{[1] \fwl}, \ \ldots, \ \zst_{[G] \fwl})^\T,\\
%%%%%%%
\dfwl = (D^*_{11 \fwl}, \ \ldots, \ D^*_{G,n_G \fwl})^\T,\\
%%%%%%
\hdgfwl 
 = (\hat D^*_{g1\fwl}, \ldots, \hat D^*_{g,\ng\fwl})^\T,\quad
 \hdfwl
 = (\hdst_{[1] \fwl}, \ \ldots, \ \hdst_{[G] \fwl})^\T
 \enda
denote the vectorization of $\zisw$, $\disw$, and $\hdisw$ over cluster $g$ and all units, respectively, with  
\beginy\label{eq:fwl_zd}
\hdgfwl 
 \overset{\eqref{eq:hdisw}}{=} \zgfwl \hbzfwl,
 \quad
 %%%%%%%%%%%%
 \hdfwl \overset{\eqref{eq:hdisw}}{=} \zfwl \hbzfwl,
 \quad
%%%%%%%%%%%%%%%%%%
 \szw  
 \oeq{\eqref{eq:hbzfwl}} \ni\zst_{\fwl}\zfwl
\endy
from \eqref{eq:hbzfwl}--\eqref{eq:hdisw}. 
This ensures $
\ni \hdfwlt \hdfwl
\overset{\eqref{eq:fwl_zd}}{=}  \ni \hbzfwl^2   \zst_{\fwl}\zfwl
 \oeq{\eqref{eq:fwl_zd}} \hbzfwl^2 \szw$ and $N \hdfwlsi  
= (\hbzfwl^2  \szw)^{-1}$, 
so that
\begineq\label{eq:fwl_se_ddd}
N \hdfwlsi \hdgfwl
\overset{\eqref{eq:fwl_zd}}{=} (\hbzfwl^2 \szw)^{-1}  \zgfwl \hbzfwl 
=(\hbzfwl \szw)^{-1} \zgfwl 
\oeq{\eqref{eq:hbzfwl}}\dfrac{\zgfwl}{\szdw}  .
\endeq
Let $\rgfwl = (\rifwl: \ig)$ denote the residual vector for cluster $g$.
Let $\ho = \diag(\ho_g)_{g=1}^G$, where $\ho_g= \rgfwl\cdot\rgfwlt$.
We have 
\beginy\label{eq:fwl_se_dod}
\hdfwlt \ho \hdfwl  
&=& (\hdst_{[1] \fwl}, \ldots, \hdst_{[G] \fwl})\beginp r_{[1] \fwl} \cdot r_{[1] \fwl}^\T \\ & \ddots \\ && r_{[G] \fwl} \cdot r_{[G] \fwl}^\T \endp \beginp \hds_{[1] \fwl}\\ \vdots\\\hds_{[G] \fwl}\endp \nonumber\\
%%%%%%%%%%
&=& \sumg \hdgfwlt \left(\rgfwl \cdot \rgfwlt\right)\hdgfwl.
\endy

In addition, let $\pz = \pzf$ denote the projection matrix of $\zfwl$, with
\beginy\label{eq:fwl_projection}
\hdfwl = \pz \dfwl, \quad \dpzdi = \hdfwlsqi. 
\endy
By \defcrse, 
\begina
\hsefwl^2 
&\overset{\text{\defcrse}}{=}& \dpzdi\left(\dst_\fwl \pz \ho \pz \dfwl \right)\dpzdi\\
&\oeq{\eqref{eq:fwl_projection}}& \hdfwlsqi (\hdfwlt   \ho \hdfwl )\hdfwlsqi\\
&\overset{\eqref{eq:fwl_se_dod}}{=}& \hdfwlsqi\left(\sumg \hdgfwlt \rgfwl \cdot \rgfwlt \hdgfwl\right) \hdfwlsqi\nonumber\\
%&=& \sumg   \hdfwlsi \cdot \hdgfwlt \rgfwl \cdot  \rgfwlt \hdgfwl \cdot \hdfwlsi \\
%%%%%%%%%%%%%%%%%%%%
&=& \sumg \left\{\hdfwlsi \cdot \hdgfwlt \rgfwl\right\}^2\nonumber\\
%%%%%%%%%%%%%%%%%%%%
&\overset{\eqref{eq:fwl_se_ddd}}{=}& \dfrac{1}{N^2}\cdot\dfrac{1}{\szdw^2} \sumg \left(\zst_{\sg\fwl} \cdot \rgfwl\right)^2 
%%%%%%%%%%%%%%%%%%
= \dfrac{1}{N^2}\cdot\dfrac{1}{\szdw^2} \sumg \left(\sumig \zisw \cdot \rifwl\right)^2.
\enda

\paragraph*{Proof of \lemfwl\eqref{it:lem_fwl_equivalent S}.}
Let $\hgwd$ denote the \coeffv\ of $\wi$ from $\olst(\dis\sim\wi)$,
with $
\disw = \dis - \wit\hgwd$. 
The first-order condition of $\olst(\zis \sim \wi)$ further ensures  $\sumi \wi\zisw = 0. $ 
These results together ensure
\begina
\szdw 
&=& \meani\zisw \disw =
\meani \zisw\dis - \left(\meani \zisw\wi^\T\right) \hgwd\\
&=&
\meani \zisw\dis \qquad\text{--- the first equivalent form}\\
&\oeqt{Def. of $\hgwz$}& \meani (\zis - \hgwz^\T\wi) \dis\\
&=& \meani\zis \dis - \hgwz^\T\meani  \wi  \dis. \quad\text{--- the second equivalent form}
\enda
The proof for $\szyw$ is identical and therefore omitted. 
\end{proof}

\subsection{Lemmas under the IV setting}
%Recall from \eqref{eq:app_yzd_def}--\eqref{eq:szg} that 
%\begina
%Z = \vgjh{Z} = (Z_{[1]}^\T, \ldots, Z_{[G]}^\T)^\T,&& 
%\dzg = (Z_i - \bzg: \ig),\\
%%%%%%%%%%%
%\szt = \meani(\zi - \bz)^2,&& \szg = \sumig (\dzif)^2 = \dzgt \dzg,
%\enda
%with $\sz = \meani (\dzif)^2= \nisumg \szg$.
%%
%Recall that 
%\begineqs
%\begin{array}{l}
%\ds\phigb  \in  [0,1], \quad 
%\ds \kn =  \esz/\vz, \quad
%\ds \cn =  N\var(\bz)/\vz. 
%\end{array}
%\endeqs   

\begin{lemma}\label{lem:Z}
Let $\phigb$. 
Under \assmhomo, we have 
\beginy
&&\phig = \vzi \var(\bzg) \in [0,1],\label{eq:phig_proof}\\
&&
\E(\szt) = \vz  - \var(\bz),
\ \
\eszg  =   \vz \cdot  \ng  (1 - \phig),\nnb\\
%%%%%%%%%%
&&\E(\sz) = \ni\sumg\eszg = \vz \left(1 - \nisumg \ng\phig\right),\nnb\\
%%%%%%%%%%%%%%%%%%
&&\kn = \vzi \cdot \nisumg \eszg = \knf, \qquad 
%%%%%%%%%%%%%%%%%%%%%
%%%%%%%%%%%%%%%%%%%%%%%%%%%%%
 \cn =  \cnf.\qquad \qquad\label{eq:kn_cn} 
\endy
\end{lemma}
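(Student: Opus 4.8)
The plan is to compute each quantity directly from the first two moments of $\zi$, exploiting two structural facts: under \assmim\ every $\zi$ shares the common mean $e$ and common variance $\vz = e(1-e)$, and under \assmcs\ the within-cluster averages $\{\bzg : \otg\}$ are mutually independent. Every identity in the display then reduces to elementary second-moment bookkeeping, carried out in the order below.

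First I would establish \eqref{eq:phig_proof}. Expanding $\var(\bzg) = \ngis \sumiipg \cov(\zi, \zip)$ and using homogeneity to write each $\cov(\zi, \zip) = \corr(\zi, \zip)\{\var(\zi)\var(\zip)\}^{1/2} = \vz\corr(\zi, \zip)$, the double sum factors as $\vz \cdot \ngis\sumiipg\corr(\zi,\zip) = \vz\phig$, giving $\phig = \vzi\var(\bzg)$. The bound $\phig \ge 0$ is immediate from $\var(\bzg) \ge 0$, and $\phig \le 1$ follows because each of the $\ngsq$ correlation terms in the defining sum is at most one.

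Next I would obtain $\eszg$ through the sum-of-squares identity $\szg = \sumig \zi^2 - \ng\bzg^2$. Taking expectations and substituting $\E(\zi^2) = \vz + e^2$ together with $\E(\bzg^2) = \var(\bzg) + e^2 = \vz\phig + e^2$ yields $\eszg = \ng\vz(1 - \phig)$. Summing over $g$, dividing by $N$, and using $\sumg\ng = N$ then gives $\E(\sz) = \ni\sumg\eszg = \vz(1 - \nphi)$, so that $\kn = \E(\sz)/\vz = \knf$ by the definition \eqref{eq:kn} of $\kn$. The parallel overall identity $\szt = \meani\zi^2 - \bz^2$ gives $\E(\szt) = (\vz + e^2) - \{\var(\bz) + e^2\} = \vz - \var(\bz)$ in one line.

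Finally I would compute $\cn$. Writing $\bz = \ni\sumg\ng\bzg$ and invoking \assmcs\ so that the $\bzg$ are independent across clusters, the variance of a weighted sum of independent terms factors as $\var(\bz) = N^{-2}\sumg \ngsq\var(\bzg) = N^{-2}\vz\sumg\ngsq\phig$, whence $\cn = N\var(\bz)/\vz = \ni\sumg\ngsq\phig = \cnf$. The only step requiring genuine care is this appeal to cluster independence: without \assmcs\ the cross-cluster covariances $\cov(\bzg, \bz_{g'})$ would contribute and the clean form of $\cn$ would fail. Everything else is routine moment computation, so I expect no further obstacle.
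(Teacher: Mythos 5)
Your proof is correct and takes essentially the same route as the paper's: the covariance expansion giving $\phig = \vzi\var(\bzg)$, a within-cluster sum-of-squares decomposition for $\eszg$ and $\E(\szt)$ (you work with raw second moments where the paper centers at $e$, a trivial difference), and cluster independence under \assmcs\ to factor $\var(\bz)$ into $\cn = \ni\sumg\ngsq\phig$. No gaps.
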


\begin{proof}[Proof of \lem~\ref{lem:Z}]
We verify below the expressions of $\phig$, $\E(\szt)$, $\eszg$, and $\cn$.
The expressions of $\esz$ and $\kn$ then follow from $\sz = \nisumg \szg$ by definition in \eqref{eq:szg} and $\kn = \E(\sz)/\vz$ by definition in \eqref{eq:kn}.  

\paragraph*{Proof of \eqref{eq:phig_proof}.}
It follows from $\corr(\zi,\zip) \leq 1$ that $\phig\leq 1$. In addition, the correlation matrix is positive semidefinite. This implies $\sum_{i,i'\in\mig} \corr(\zi,\zip) \geq 0$, so that $\phig \geq 0$. 
Under \assmtth, $\corr(\zi,\zip) = \vzi\cov(\zi,\zip)$ so that  
$ \phig = \vzi\cdot \ngsqi\sumiipg \cov(\zi, \zip) 
=\vzi\cdot\ngsqi\var\left(\sumig\zi\right) =  \vzi \var(\bzg)$. 

\paragraph*{Proof of the expression of $\E(\szt)$.}
Direct algebra ensures
\beginy\label{eq:szt_decomp}
\szt = \meani (\zi-e)^2 -  (\bz - e)^2,  
\endy
where the expectations of the first and second terms are
$\E\left\{\meani (\zi-e)^2 \right\} = \vz$ and $\E\{ (\bz - e)^2\} = \var(\bz) = \vz  N^{-1} \cn$.
This implies the result.

\paragraph*{Proof of the expression of $\eszg$.}
Applying \eqref{eq:szt_decomp} to cluster $g$ ensures $
  \szg  \oeq{\eqref{eq:szt_decomp}}  \sumig (\zi-e)^2 - \ng(\bzg - e)^2$,  so that 
$\eszg   =  \ng \vz  - \ng  \var(\bzg)%%%
\oeq{\eqref{eq:phig_proof}}  \ng  \vz  - \ng \cdot \vz \phig 
= \vz  \ng (1-\phig). 
$

\paragraph*{Proof of the expression of $\cn$.}
Given $\phig =\vzi \var(\bzg)$ from \eqref{eq:phig_proof}, we have   
\begina
\cn &=& \vzi N \var(\bz) =\vzi N^{-1} \var\left(\sumi \zi \right)\\ 
&\oeqt{Assm.~\ref{assm:cs}}& \vzi N^{-1} \sumg \var\left(\sumig \zi\right) = \vzi N^{-1} \sumg \ngsq \var(\bzg)
\oeq{\eqref{eq:phig_proof}}   \nisumg \ngsq\phig. 
\enda
%where
%$\dsumg \ngsq\phig 
%\leq\dsumg{\ngsq}  
%\leq  G  \left(\ds\max_{\otg} n_g^2\right) 
%\leq N \left(\ds\max_{\otg} n_g^2\right)$. 
\end{proof}

\begin{lemma}\label{lem:cef}
Assume {\assmtth}.
Let $
\tyi = \yiz + \oua \ti$  denote the potential outcome of unit $i$ if $\zi = 0$. 
Let $\tmu = \E(\tyi)$ denote the common value of $\E(\tyi)$ under \assmim. 
Let $\sxz = \meani \dxi\dzi$ and $\sza = \meani \dzi\dai$. 
Then 
\begine[(i)]
\item\label{it:lem_cef_EYZ} $\E(\yi \mid \zg) = \E(\yi \mid \zi) = \tmu  + \zi \pc\tc$,

$
\muy = \tmu + e\pctc,\quad
\muzy = e(\tmu + \pctc), \quad
\cov(\zi, \yi) =  \vzpc\tc.
$

%%%%%%%%%%%%%%%
% D
%%%%%%%%%%%%%%%
\item\label{it:lem_cef_EDZ} $\E(\di \mid \zg)= \E(\di \mid \zi)= \pa + \zi\pc$,

$\mud = \edf,
\quad 
\muzd = e(\pa + \pc),
\ \
\covzd  = \vzpc$, 
%%%%%%%%%%%%%%%%%%%%

$\E(\ddi \mid \zg) = \dzi \pc$,
\ \ 
%%%%%%%%%%%%
$\E(\dzi \ddi) = \pc \E(\dot Z_i^2), 
\quad 
\E(\szd) = \pc \esz = \kvzpc. 
$
 
%%%%%%%%%%%%%%%%%%%%%%%%
% A
%%%%%%%%%%%%%%%%%%%%%%%%%
\item\label{it:lem_cef_EAZ} $\E(\ai \mid \zg)= \E(\ai \mid \zi)= \tmu - \pa\tc = \mua$, \quad $\E(\aag \mid \zg) = \E(\aag) =\ong\mua$,

$\cov(\zg, \aag) = 0$, \quad $\covza = \E(\dzi\dai) = \E(\sza) = 0$.
 
%%%%%%%%%%%%
\item\label{it:lem_cef_EXZ} $\cov(\xxi,\zi) = \E(\dxi\dzi) = \E(\sxz) = 0_p$.
\ende
\end{lemma}

\begin{proof}[Proof of \lemcef]
We verify below \lem~\ref{lem:cef} \eqref{it:lem_cef_EYZ}--\eqref{it:lem_cef_EXZ}, respectively. 

\paragraph*{Proof of \lemcef\eqref{it:lem_cef_EYZ}.} Write $\di = \oua + \zi \cdot \ouc$. 
\assmiv\eqref{it:assm_iv_er} ensures 
\begineq\label{eq:yi_tyi}
Y_i = \yiz + \di \ti 
= \yiz + \Big(\oua + Z_i \cdot \ouc\Big) \ti
%&=&  \yiz + \oua\ti  + Z_i \cdot \ouc\ti
%\nonumber\\
= \tyi + Z_i \cdot \ouc\ti.
\endeq
%%%%%%%%%%%%%%
%%%%%%%%%%%%%%
 This, together with \assmiv\eqref{it:assm_iv_random}, implies  \lemcef\eqref{it:lem_cef_EYZ} as follows:
%\begina
%\E(\yi \mid \uc, \zg) &=& \E\left( \tyi \mid \uc, \zg\right) + \zi \cdot \E\left( \ti \mid \uc, \zg\right)\\
%%%%%%%%%%%%%%%%%
%&\overset{\text{Assm.~\ref{assm:iv}}}{=}& \E\left(\tyi  \mid \uc\right) + \zi \cdot \E\left( \ti \mid \uc \right)\\
%&=& \E\left( \tyi  \mid \uc\right) + \zi \tc,\\
%%%%%%%%%%%%%%%%%%%%%%%%
%%%%%%%%%%%%%%%%%%%%%%%%%%
%\E(\yi \mid U_i \neq \cc, \zg) &=& \E\left( \tyi  \mid U_i \neq \cc, \zg\right)\\
%%%%%%%%%%%%%%%%%
%&\overset{\text{Assm.~\ref{assm:iv}}}{=}& \E\left( \tyi  \mid U_i \neq \cc \right),
%\enda
%so that 
\begina
\E(\yi \mid \zg) 
&\oeq{\eqref{eq:yi_tyi}}&  
\E (\tyi \mid \zg ) + \zi \cdot \E (  \ouc \tau_i  \mid \zg )\\
&\oeqt{Assm.~\ref{assm:iv}\eqref{it:assm_iv_random}}& 
\E(\tyi) + \zi \cdot  \E ( \ouc \tau_i ) =
\tmu + \zi \cdot \pr(\uc) \cdot \E\left( \ti  \mid \uc  \right). 
\enda

\paragraph*{Proof of \lemcef\eqref{it:lem_cef_EDZ}.}
%%%%%%%%%%%%%%%%%%%%%%
%
The first two lines in \lemcef\eqref{it:lem_cef_EDZ} follows from defining the potential outcomes as $\yi'(d) = d$ for $d = 0,1$ with $\yi' = D_i$ and $\tc' = 1$. 
It then follows from $\E(\di \mid \zg) = \pa + \zi\pc$ that 
\begineq\label{eq:eddi_1}
\E(\bd \mid \zg) = \meanig \E(\di\mid \zg) = \pa + \bz\pc, \quad \E(\ddi \mid \zg) = \dzi \pc. 
\endeq 
This implies 
$\E(\dzi\ddi) = \E\{\E(\dzi\ddi\mid \dzg)\} 
=
\E\{\dzi \E(\ddi\mid \dzg)\} \oeq{\eqref{eq:eddi_1}} \E(\dzi^2 \pc) = \pc \E(\dzi^2).$
%%%%%%%%%%%%%%%%%%%%%%
\paragraph*{Proof of \lemcef\eqref{it:lem_cef_EAZ}.}
\lemcef\eqref{it:lem_cef_EYZ}--\eqref{it:lem_cef_EDZ} ensure  
$\E(\ai\mid \zg) = \E(\yi \mid \zg) -  \tc\cdot\E(\di\mid \zg) = \tmu -  \pa\tc$, 
which is independent of $\zg$. This implies that 
$\E(\aag \mid \zg) = \mua\ong$, where $\mua = \E(\ai) = \tmu -   \pa\tc$. 
Accordingly, by the law of iterated expectations, 
\begina
\cov(\zg, \aag) &=& \E\Big\{ \cov(\zg, \aag \mid \zg) \Big\} +  \cov\Big\{\E(\zg \mid \zg), \E(\aag \mid \zg) \Big\}\\
&=& 0 + \cov(\zg, \mua\ong) = 0,\\
%%%%%%%%%%%%%%%%%%
\E(\dzi \dai) &=& \E\left\{\E(\dzi\dai \mid \zg)\right\}  
= \E\left\{\dzi \cdot \E(\dai \mid \zg)\right\} 
= \E\left\{\dzi \cdot \E(\dai)\right\} = 0. 
\enda

\paragraph*{Proof of \lemcef\eqref{it:lem_cef_EXZ}.}
The renewed \assmiv\eqref{it:assm_iv_random} ensures $\zg \indep (\xxi:\ig)$.
%, so that $\cov(\xxi,\zi)=0$. 
This ensures $\cov(\xxi,\zi) = 0$,  $\E(\dxi\dzi) = \E(\dxi)\cdot \E(\dzi) = 0$, and $\E(\sxz) = \sumi \E(\dxzi) = 0$. 
\end{proof}

\subsubsection{Asymptotics}
\begin{lemma}\label{lem:asym_nox}
Assume \assmhomo. Let $
\bi = \beginp 
\cai\\
\zi(\cai)\\
\dzi \dai 
\endp$ and $\szdg = \dsumig  \dzi \ddi$. 
Then 
\begine[(i)]
\item\label{it:lem_uniform_ZD} $Z_i$, $D_i$, $Z_i\di$, $\dzi\ddi$ are all \uni\ over $\otn$. 
%%%%%%%%%%%%%
%
%%%%%%%%%%%%%%
\item\label{it:lem_asym_nox_S}
If \assm~\ref{assm:ng_main} holds, then  
\begineqs
\beginar{lll}
\bz = e + \op, \qquad \bd = \mud + \op, && \ds\olzd \equiv \meani \zi\di = \muzd + \op,\\
\szdt = \E(\szdt) + \op = \vzpc + \op, &&\szd = \eszd + \op = \kn\cdot \vzpc +\op,\medskip\\
%%%%%%%%%%%%
\ds\nisumg \szdg^2 = \ooo. 
\endar
\endeqs

\item\label{it:lem_asym_nox_AB}
If $\E(Y_i^4) < \infty$ holds, then  $\E(\ai^4) = O(1)$,  $\supi\E(\dai^4) \leq 2^4 \cdot \E(\ai^4) = O(1)$; and 
$\{\|\bi\|^2: \otn\}$ is \uni\ with $\supi\E(\en{\bi}^4) = O(1).$
\ende
\end{lemma}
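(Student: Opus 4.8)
\emph{Proof plan for \lemasymnox.}
The plan is to reduce all three parts to the boundedness of $Z_i,\di$, the clustered law of large numbers in $\lemsmwlln$, the conditional-mean identities in $\lemcef$, and the elementary moment inequalities in $\lembound$ and $\lemuniform$. For Part (i) I would first observe that $Z_i,\di,Z_i\di\in\{0,1\}$ and that, since the within-cluster averages $\bzg,\bdg$ lie in $[0,1]$, we have $\dzi,\ddi\in[-1,1]$ and hence $\dzi\ddi\in[-1,1]$. Each of these four sequences is thus uniformly bounded, so $\supi\E(|B_i|^2)<\infty$ for each choice and $\uni$ follows from $\lemuniform\eqref{it:lem_uniform_4_suff_cond}$ with $r=2$.

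For Part (ii), since $\ng\ge 1$ gives $\ng/N\le\ngsq/N$, $\assmngm$ implies $\max_{\otg}\ng/N\to 0$, so the WLLN $\lemsmwlln$ is available. Feeding in $B_i\in\{Z_i,\di,Z_i\di,\dzi\ddi\}$---each $\uni$ by Part (i)---yields $\bz=e+\op$, $\bd=\mud+\op$, $\olzd=\muzd+\op$, and $\szd=\E(\szd)+\op$, where $\E(\szd)=\pc\,\esz=\kvzpc$ is read off from $\lemcefd$. For the overall covariance I would write $\szdt=\olzd-\bz\bd$; combining the three preceding limits (Slutsky and continuous mapping) gives $\bz\bd=e\mud+\op$, hence $\szdt=\muzd-e\mud+\op=\covzd+\op=\vzpc+\op$, again via $\lemcefd$. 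Finally $\nisumg\szdg^2=\ooo$ is a deterministic bound: $|\szdg|=|\dsumig\dzi\ddi|\le\ng$ because $|\dzi\ddi|\le 1$, so $\nisumg\szdg^2\le\nisumg\ngsq=\ooo$ by $\assmngm$.

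For Part (iii), writing $\ai=\yi-\di\tc$ with $\di$ bounded, $\lembound\eqref{it:lem_bound_1}$ gives $\ai^4\le 8(\yi^4+\tc^4\di^4)$, so $\E(\ai^4)=\ooo$ under $\assmim$ and $\E(\yi^4)<\infty$. I would then apply $\lembound\eqref{it:lem_bound_2}$ \emph{within} each cluster---treating cluster $g$ as the population of $\ng$ identically distributed variables and $\bsag$ as its sample mean---to obtain $\E(\dai^4)\le 2^4\E(\ai^4)$, hence $\supi\E(\dai^4)=\ooo$. For the last claim, using $Z_i^2\le 1$ and $\dzi^2\le 1$ bounds $\en{\bi}^2\le 2(\cai)^2+\dai^2$, so $\en{\bi}^4\le C\{(\cai)^4+\dai^4\}$ for a universal constant $C$; since $\E((\cai)^4)=\ooo$ (again by $\lembound\eqref{it:lem_bound_1}$ and $\E(\ai^4)=\ooo$) and $\supi\E(\dai^4)=\ooo$, this yields $\supi\E(\en{\bi}^4)=\ooo$. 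The $\unin$ of $\{\en{\bi}^2:\otn\}$ then follows from $\lemuniform\eqref{it:lem_uniform_4_suff_cond}$ applied to $\en{\bi}^2$ with $r=2$.

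The only genuinely delicate point, and the step I expect to be the main obstacle, lies in Part (iii): one must keep the two layers of centering straight---population centering $\cai=\ai-\mua$ appearing in $\bi$ versus within-cluster centering $\dai,\dzi$---and verify that $\lembound\eqref{it:lem_bound_2}$ may be invoked cluster by cluster, so that the constant $2^4$ in $\supi\E(\dai^4)\le 2^4\E(\ai^4)$ is uniform in both $i$ and $N$. Everything else is routine bookkeeping built on the boundedness established in Part (i) together with the clustered WLLN and the conditional-mean identities already proved.
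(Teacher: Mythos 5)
Your proposal is correct and follows essentially the same route as the paper's own proof: boundedness of $\zi,\di,\zidi,\dzi\ddi$ for part (i), the clustered WLLN of \lemsmwlln\ combined with \lemcefd\ and the deterministic bound $|\szdg|\le\ng$ for part (ii), and the moment inequalities of \lembound\ feeding into \lemuniform\eqref{it:lem_uniform_4_suff_cond} for part (iii). The two points you flag as delicate---that \assmngm\ implies $\max_{\otg}\ng/N\to 0$ via $\ng\ge 1$, and that \lembound\eqref{it:lem_bound_2} is applied within each cluster (valid under \assmim) with a constant uniform in $i$ and $N$---are exactly the steps the paper uses, just left implicit there.
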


\begin{proof}[Proof of \lemasymnox]
We verify below \lemasymnox\eqref{it:lem_uniform_ZD}--\eqref{it:lem_asym_nox_AB} one by one.

\paragraph*{Proof of \lemasymnox\eqref{it:lem_uniform_ZD}.}
$|\zi|$, $|\di|$, $|\zidi|$, and $|\dzi\ddi|$ are all bounded between $[0,1]$, and therefore \uni\ over $\otn$. 

\paragraph*{Proof of \lemasymnox\eqref{it:lem_asym_nox_S}.}
\lem~\ref{lem:hetero} implies that $
\bz = \E(\bz) + \op = e + \op$,  $\bd = \E(\bd) + \op = \mud + \op$, $
\olzd = \E(\olzd) + \op = \muzd + \op,$ and $\szd = \E(\szd) + \op = \kvzpc + \op.$
Accordingly, $
\szdt = \meani \zidi - \bz \bd = \muzd - e\mud + \op =\cov(\zi,\di) + \op$,
where $\cov(\zi,\di) = \vzpc$ by \lemcefd. 
Lastly, it follows from $|\dzi\ddi| \leq 1$ that $|\szdg| \leq \ng$, so that
$\ds\nisumg \szdg^2 \leq \nisumg \ng^2$, 
which is $\ooo$ under \assm~\ref{assm:ng_main}.

\paragraph*{Proof of \lemasymnox\eqref{it:lem_asym_nox_AB}.} 
By \lembound, 
\begineq\label{eq:bound_dai}
\beginar{rcccl}
 \ai^4  =  (\yi - \di\tc)^4 
 &\oleqt{\lembound\eqref{it:lem_bound_1}}& 2^{4-1}\left( \yi^4 + \di^4\tc^4\right) 
 &\leq& 2^{4-1}\left( \yi^4 +  \tc^4\right), 
 \medskip\\
 %%%%%%%%%%%%
\E(\dai^4) 
 &\oleqt{\lembound\eqref{it:lem_bound_2}}& 2^4 \E(\ai^4) 
 &\leq& 2^4 \cdot 2^{4-1}\big\{\E(\yi^4) + \tc^4\big\}. 
\endar 
 \endeq
 This implies $\E(\ai^4) = O(1)$ and $\supi\E(\dai^4) = O(1)$ provided $\E(\yi^4) = O(1)$. 
In addition, given $\zi, \dzi^2 \in [0, 1]$, 
\begina
\|\bi\|^2 &=& (A_i-\mua)^2 + \zi(A_i-\mua)^2 + (\dzi\dai)^2 
 \leq     2\{(\ai - \mua)^2 +  \dai^2\} \\
 &\oleqt{\csp}& 2(2 \ai^2 + 2\mua^2  +  \dai^2),\nnb\\
%%%%%%%%%%%%%%%%% 
\|\bi\|^4 &\leq&  2^2(2 \ai^2 + 2\mua^2 +  \dai^2)^2 \oleqt{\csp} 2^2\cdot 3 ( 4\ai^4 + 4\mua^4 + \dai^4).  
\enda
Combining this with \eqref{eq:bound_dai} implies $
\supi \E(\en{\bi}^4)\leq 12 \{ 4\E(\ai^4) + 4\mua^4 + \supi \E(\dai^4)\} =O(1)$, so that $\en{\bi}^2$ is \uni\ by the sufficient condition under \eqref{eq:uni_def}. 
\end{proof}

\begin{lemma}\label{lem:fex} 
Recall from \eqref{eq:rrifex_def} that $\rrifex = \ai - \xit\gxa$, where $\gxa = \gxaf$. 
Let $\bi = \dzdx \rrifex$, $\sxzg = \dsumig \dxi \dzi$,   $\sxdg = \dsumig  \dxi \ddi$, 
$\sxg = \dsumig \dxi \dxit$,
$\szrg = \dsumig  \dzi \rrifex$,  and $\sxrg = \dsumig  \dxi \rrifex$.
%%%%%%%
Under \assmasymfex,
\begine[(i)]
\item\label{it:lem_fex_uni_X}
Let $X_{ik}$ denote the $k$th element of $\xxi$ for $k = \ot{p}$.
Under \assmasym\eqref{it:assm_asym_fex}, $\|\dxi\|$, $\|\dxi\dzi\|$, $\|\dxi\ddi\|$, $\|\dx_{ik}\dx_{ik'} \|  \ (k, k' = \ot{p})$, $\|\dxi\dai\|$, and $\|\dxi\dai\|^2$ are all \uni\ over $\otn$, with
\begineqs 
\beginar{l}
\supi\E(\|\dxi\ddi\|) = O(1),\qquad
%%%%%%%%%%%%%%%%%%
\supi\E(|\dx_{ik}\dx_{ik'}|) = O(1) \quad \forall   k, k' = \ot{p},
\medskip\\
%%%%%%%%%%%%
\supi\E(\|\dxi\dai\|) = O(1), \quad \supi\E(\|\dxi\dai\|^{2+2q}) =O(1) \for\ q = \dfrac{s-2}{s+2} \in(0,1)
\endar
\endeqs
as $N \to \infty$, where $s \in (2,\infty)$ is the constant in \assmasym\eqref{it:assm_asym_fex} such that $\E(\|\xxi\|^{2s}) < \infty$.  
%%%%%%%%%%%%%%%%
%%%%%%%%%%%%%%%%%%
\item\label{it:lem_fex_S_plim}
$\sxz  = \E(\sxz) + \op = \op$, \quad $\sxd = \esxd + \op = \oop$,

%%%%%
$\sxa = \E(\sxa) + \op = \oop$, \quad $\sx = \E(\sx) + \op = \oop$,

%%%%%%%%%%%%
where $\E(\sxz) = 0$, and $\esxd$, $\E(\sxa)$, and $\esx$ are  componentwise $O(1)$.
 
%%%%%%%%%%%%
%%%%%%%%%%%%%%%
\item\label{it:lem_fex_uni_B} 
$\gxa$ is componentwise $O(1)$; \quad $\supi \E(\en{\bi}^{2+2q}) = O(1)$.   
%%%%%%%%%%%%%%
%%%%%%%%%%%%%%%
\item\label{it:lem_fex_X_Y_R}
$\ds\nisumg \spn{\dxg}^{4} = \oop$,  
$\ds\nisumg \|\dyg\|^{4} = \oop$, 
$\ds\nisumg \en{\rrgfex}^{4} = \oop$. 
%%%%%%%%%%
\item\label{it:lem_fex_bound_Sg^2} 
$\dnisumg \sxzg \sxzgt = \oop$, \quad 
$\dnisumg \sxdg \sxdgt = \oop$,

$\dnisumg \|\sxg\|_2^2 \leq \dnisumg \fn{\sxg}^2 = \oop$, \quad  
$\dnisumg \sxrg \sxrg^\T = \oop$. 
\ende 

\end{lemma}

\begin{proof}[Proof of \lemfex]
We verify below \lemfex\eqref{it:lem_fex_uni_X}--\eqref{it:lem_fex_bound_Sg^2}, respectively.

\paragraph*{Proof of \lemfex\eqref{it:lem_fex_uni_X}.}
\assmasym\eqref{it:assm_asym_fex} ensures $\E(\|\xxi\|^{2s}) = O(1)$, so that    
\begineq\label{eq:lem_fex_XX_2}
\sup_{\otn}  \E(\|\dxi\|^{2s}) 
\leq
 2^{2s} \cdot \E(\|\xxi\|^{2s}) = O(1) 
\endeq
by \lembound\eqref{it:lem_bound_2}. 
This implies \unin\ of $\|\dxi\|$. Combining \eqref{eq:lem_fex_XX_2} with 
\begineq\label{eq:lem_fex_XX_1}
\|\dxi\dzi\|, \|\dxi\ddi\| \leq \|\dxi\|, 
%%%%%%%
\quad 
|\dot X_{ik} \dot X_{ik'} | \leq 2^{-1} \left( \dot X_{ik}^2 + \dot X_{ik'}^2\right) \leq 2^{-1} \|\dxi\|^2,
\endeq 
implies \unin\ of $\|\dxi\dzi\|$, $\|\dxi\ddi\|$, and $|\dot X_{ik} \dot X_{ik'} |$. 
Also, it follows from 
\begineqs
\{\E(\|\dxi\|)\}^2 \leq \E(\|\dxi\|^2) \leq  \{\E(\|\dxi\|^{2s}) \}^{1/s}
\endeqs
by \lembound\eqref{it:lem_bound_3} that 
\begineqs 
\beginar{l}
\ds\supi \E(\|\dxi\ddi\|) \leq \supi \E(\|\dxi\|)
\oleqt{\text{\lembound\eqref{it:lem_bound_3}}} 
\supi \left\{\E(\|\dxi\|^{2s})\right\}^{\frac{1}{2s}} \oeq{\eqref{eq:lem_fex_XX_2}}
O(1), 
\medskip\\ 
%%%%%%%%%%%%%%%%%%%
\ds\supi \E(|\dot X_{ik} \dot X_{ik'} |)
\oleq{\eqref{eq:lem_fex_XX_1}} 
\supi \E(\|\dxi\|^2)
\oleqt{\text{\lembound\eqref{it:lem_bound_3}}} 
\supi \left\{\E(\|\dxi\|^{2s})\right\}^{1/s} \oeq{\eqref{eq:lem_fex_XX_2}} O(1). 
\endar 
\endeqs
In addition, let $g = \dfrac{s+2}{2} \in (1, \infty)$ and $h  = \dfrac{s+2}{s} \in (1, \infty)$ with  
\begineq\label{eq:holder_gh}
2 + 2q = \dfrac{4s}{s+2}, \quad 1/g + 1/h = 1, \quad (2+2q)g = 2s, \quad (2+2q)h = 4. 
\endeq
Given $
\|\dxi\dai\|  = \sqrt{\dai\dxit \dxi\dai} =  \|\dxi\| \cdot |\dai|$, 
Holder's inequality implies 
\begina
\E(\|\dxi\dai\|^{2+2q})
&\leq&  
\left[\E\left\{(\|\dxi\|^{2+2q})^g\right\}\right]^{1/g} \cdot \left[\E\left\{(|\dai|^{2+2q})^h\right\} \right]^{1/h}\nnb\\
&\oeq{\eqref{eq:holder_gh}}& 
\left\{\E(\|\dxi\|^{2s})\right\}^{1/g} \cdot \left\{\E( \dai ^4) \right\}^{1/h}. 
\enda
This, together with $\sup_{\otn}  \E(\|\dxi\|^{2s}) = O(1)$ from \eqref{eq:lem_fex_XX_2} and $\sup_{\otn}  \E(\|\dai\|^4) = O(1)$ from 
\lemasymnox\eqref{it:lem_asym_nox_AB}, implies 
\begineq\label{eq:bound_dxda_2q}
%%%%%%%%%%%%%
\supi \E(\|\dxi\dai\|^{2+2q}) \oeq{\eqref{eq:lem_fex_XX_2}+\text{\lemasymnox\eqref{it:lem_asym_nox_AB}}}
O(1)    
\endeq 
under \assmasym\eqref{it:assm_asym_Y} and \eqref{it:assm_asym_fex}, 
so that both $\dxi\dai$ and $\|\dxi\dai\|^2$ are \uni\ over $\otn$.
\lembound\eqref{it:lem_bound_3} further ensures $\E(\|\dxi\dai\|) \leq\{\E(\|\dxi\dai\|^{2+2q})\}^{1/(2+2q)}$, so that  
$\supi \E(\|\dxi\dai\|) \oleqt{\lembound\eqref{it:lem_bound_3}} \supi \{\E(\|\dxi\dai\|^{2+2q}\}^{1/(2+2q)} \oeq{\eqref{eq:bound_dxda_2q}}
O(1) $
by \eqref{eq:bound_dxda_2q}. 

\paragraph*{Proof of \lemfex\eqref{it:lem_fex_S_plim}.}
\lemfex\eqref{it:lem_fex_uni_X} ensures that, under \assmasym\eqref{it:assm_asym_fex}, 
$\dxi\dzi, \dxi\ddi, \dxi\dai$, and all elements in $\dxi\dxit$ are all \uni\ over $\otn$, and therefore satisfy the conditions required by \lemsm\eqref{it:lem_sm_wlln}.
Applying \lemsm\eqref{it:lem_sm_wlln} to $\dxi\dzi, \dxi\ddi, \dxi\dai$, and all elements in $\dxi\dxit$ implies  
\begineq\label{eq:fex_1}
\begin{array}{rclcl}
 \sxz &=& \meani \dxi\dzi &=& \E(\sxz) + \op,\\
 \sxd &=& \meani \dxi\ddi &=& \E(\sxd) + \op, \\
 \sxa &=& \meani \dxi\dai &=& \E(\sxa) + \op, \\
\sx &=& \meani \dxi\dxit &=& \E(\sx) + \op. 
\end{array}
\endeq
\lemcef\eqref{it:lem_cef_EXZ} implies $\E(\sxz) = 0$, so the first row in \eqref{eq:fex_1} implies $\sxz = \op$. 

Let $\sx[k,k']$ denote the $(k,k')$th element of $\sx$. 
The triangle inequality of norms implies 
\begineq\label{eq:bound_S_triangle}
\beginar{rclcl}
\|\sxd\|  
&=& \ds \left\|\meani  \dxi\ddi\right\|  
&\leq& \ds\meani  \| \dxi\ddi \|,\\
 %%%%%%%%%%%%%%
 %%%%%%%%%%%%%%
   \|\sxa\|  
 &=&  \ds\left\|\meani \dxi\dai\right\| 
 &\leq& \ds\meani  \|\dxi\dai\|  ,\\
  %%%%%%%%%%%%
  %%%%%%%%%%
  |\sx[k,k']|   
 &=& \ds\left|\meani \dx_{ik}\dx_{ik'}\right| 
 &\leq&
  \ds\meani |\dx_{ik} \dx_{ik'}|.
\endar
\endeq
Combining this with 
\begineqs
\supi \E(\| \dxi\ddi \|) = \ooo, \quad \supi \E(\|\dxi\dai\|) = \ooo, \quad \supi \E(|\dx_{ik} \dx_{ik'}|) = \ooo
\endeqs as $\ntinf$ by \lemfex\eqref{it:lem_fex_uni_X} implies  
\begineq\label{eq:fex_2}
\begin{array}{rcl}
\E(\|\sxd\|) 
&\oleq{\eqref{eq:bound_S_triangle}}&\ds\meani  \E(\| \dxi\ddi \|)  
\oeqt{\lemfex\eqref{it:lem_fex_uni_X}} O(1)
  ,\\
 %%%%%%%%%%%%%%
 %%%%%%%%%%%%%
 \E(\|\sxa\|) 
&\oleq{\eqref{eq:bound_S_triangle}}& \ds\meani \E(\|\dxi\dai\|)
\oeqt{\lemfex\eqref{it:lem_fex_uni_X}} O(1)
  ,\\
  %%%%%%%%%%
  %%%%%%%%%%%%%
 \E(|\sx[k,k']|) 
&\oleq{\eqref{eq:bound_S_triangle}}& \ds\meani \E\left(\left|\dx_{ik}\dx_{ik'}\right|\right) \oeqt{\lemfex\eqref{it:lem_fex_uni_X}} O(1)  
 \end{array}
\endeq
as $\ntinf$. 
Equation~\eqref{eq:fex_2} implies $
\E(\sxd)$, $\E(\sxa)$, and $\E(\sx)$ are all componentwise $O(1)$, so that $\sxd$, $\sxy$, and $\sx$ are all componentwise $\oop$ from \eqref{eq:fex_1}.

\paragraph*{Proof of \lemfex\eqref{it:lem_fex_uni_B}.}
\lemfex\eqref{it:lem_fex_S_plim} ensures $\E(\sxa) = O(1)$. 
This, together with $\lm\{\esx\} \geq \lambda > 0$ under \assmasym\eqref{it:assm_asym_fex}, ensures 
\begineq\label{eq:bound_gxa}
\gxa \oeq{\eqref{eq:rrifex_def}} \{\E(\sx)\}^{-1}\E(\sxa) = O(1).  
\endeq
In addition, given
$\bi  =  \dzdx (\rrifexf) = \beginp \dzi(\dai - \dxit\gxa) \\ \dxi\dai - \dxi\dxit\gxa\endp$, 
it follows from 
\begina
&& |\dxit\gxa| \leq \|\dxi\|\cdot \|\gxa\|,\\
&& \|\dxi\dxit\gxa\| = \|\dxi\| \cdot |\dxit\gxa| \leq \|\dxi\|^2\cdot \|\gxa\|,\\
&& \|\dxi\dai \|  = \|\dxi\| \cdot|\dai| 
\enda
that 
\begineqs
\beginar{rcccl}
\dzi^2(\dai - \dxit\gxa)^2 &\leq& 2(|\dai|^2 + |\dxit\gxa|^2) &\leq& 2(|\dai|^2 + \|\dxi\|^2\cdot \|\gxa\|^2),\\
%%%%%%%%%%%%
\|\dxi\dai - \dxi\dxit\gxa \|^2 &\leq& 2\left( \|\dxi\dai \|^2 +   \|\dxi\dxit\gxa \|^2\right) &\leq& 2 \left( \|\dxi\|^2\cdot|\dai|^2 +   \|\dxi\|^4 \cdot \|\gxa\|^2\right),
\endar
\endeqs
with 
\beginy\label{eq:uni_bi_1}
\en{\bi}^2 &=&  \dzi^2(\dai - \dxit\gxa)^2 +  \|\dxi\dai - \dxi\dxit\gxa \|^2\nnb\\
%%%%%%%%%%%%%%%%%%%
&\leq& 2 \left(|\dai|^2 +  \|\dxi\|^2\cdot \|\gxa\|^2 +  \|\dxi\|^2\cdot|\dai|^2 +   \|\dxi\|^4 \cdot \|\gxa\|^2\right).
\endy
%%%%%%%%%%%%%%%%%
%
%%%%%%%%%%%%%%%%%
Recall from \lemfex\eqref{it:lem_fex_uni_X} that $q = \dfrac{s-2}{s+2} \in (0,1)$ with $1+ q = \dfrac{2s}{s+2}$. 
\lembound\eqref{it:lem_bound_1} ensures 
\begina
\dfrac{\en{\bi}^{2+2q}}{2^{1+q}}
&=& \left(\dfrac{\|\bi\|^2}{2}\right)^{1+q}\\
&\oleq{\eqref{eq:uni_bi_1}}&  \left(|\dai|^2 +  \|\dxi\|^2\cdot \|\gxa\|^2 +  \|\dxi\|^2\cdot|\dai|^2 +   \|\dxi\|^4 \cdot \|\gxa\|^2\right)^{1+q}\\
%%%%%%%%%% 
&\oleqt{\lembound}&  4^q \left(|\dai|^{2+2q} +  \|\dxi\|^{2+2q}\cdot\|\gxa\|^{2+2q} + \|\dxi\|^{2+2q} \cdot |\dai |^{2+2q} + \|\dxi\|^{4+4q}\cdot\|\gxa \|^{2+2q}\right), 
\enda
so that, after dividing both sides by $4^q$ and taking supreme over expectations, 
\begineq\label{eq:bound_B2_0}
\beginar{rcl}
\dfrac{\supi \E(\en{\bi}^{2+2q})}{2^{1+q}\cdot 4^q}
&\leq& \supi \E(|\dai|^{2+2q}) +  \|\gxa\|^{2+2q} \cdot\supi \E(\|\dxi\|^{2+2q})\\
&& + \supi \E\left(\|\dxi\|^{2+2q} \cdot |\dai |^{2+2q}\right) + \|\gxa \|^{2+2q}\cdot  \supi \E(\|\dxi\|^{4+4q}).  
\endar
\endeq
Note that $q < 1$ and $s > 2$,  so that $
2+2q < 4 < 2s$, and $4+4q = 4\cdot \dfrac{2s}{s+2} < 2s$.
By Holder's inequality and \lembound\eqref{it:lem_bound_3}, 
\begineq\label{eq:bound_B2_1}
\beginar{rcl}
\ds\supi \E(|\dai|^{2+2q}) &\leq& \ds\left\{\supi \E(|\dai|^4)\right\}^{(2+2q)/4} \oeqt{\lemasymnox\eqref{it:lem_asym_nox_AB}} O(1), \\
%%%%%%%%%%%%
\ds\supi \E(\|\dxi\|^{2+2q}) &\leq& \ds\left\{\supi \E(|\dxi|^{2s})\right\}^{(2+2q)/2s} \oeq{\eqref{eq:lem_fex_XX_2}} O(1),\\
%%%%%%%%%%%
\ds\supi \E(|\dxi|^{4+4q}) &\leq& \ds\left\{\supi \E(|\dxi|^{2s})\right\}^{(4+4q)/2s} \oeq{\eqref{eq:lem_fex_XX_2}} O(1). 
\endar
\endeq
In addition, \eqref{eq:bound_gxa} and \lemfex\eqref{it:lem_fex_uni_X} imply
\begineq\label{eq:bound_gxa_2}
\en{\gxa}^{2+2q} = O(1), \quad \supi \E\left(\|\dxi\|^{2+2q} \cdot |\dai |^{2+2q}\right) = O(1). 
\endeq
Plugging \eqref{eq:bound_B2_1} and \eqref{eq:bound_gxa_2} into \eqref{eq:bound_B2_0} implies $
\supi \E(\en{\bi}^{2+2q}) = O(1)$ so that $\|\bi\|^2$ is \uni\ over $\otn$.

\paragraph*{Proof of \lemfex\eqref{it:lem_fex_X_Y_R}.}
We verify below the results for $\dxg$, $\dyg$, and $\rrgfex$, respectively. 

\underline{\it Proof of $\nisumg \spn{\dxg}^{4} = \oop$}.
Properties of norms ensure
\begineq\label{eq:oop_X_1}
 \spn{\dxg} \leq \fn{\dxg}. 
\endeq 
Also, it follows from 
$\fn{\dxg}^2 = \fn{(\dx_{g1}, \dx_{g2}, \ldots, \dx_{g,\ng})^\T}^2 = \sumig \en{\dxi}^2$ that 
\begineq\label{eq:oop_X_2}
\fn{\dxg}^4 = \left(\sumig \en{\dxi}^2\right)^2\oleqt{\csp} \ng\left(\sumig\en{\dxi}^{4} \right)
\endeq
by \csf. 
\lembound\eqref{it:lem_bound_2} further ensures that, under \assmim, 
\begineq\label{eq:oop_X_3}
\E\left(\|\dxi\|^4\right) \leq 2^4 \cdot \E\left(\|\xxi\|^4\right).  
\endeq
Combining \eqref{eq:oop_X_1}--\eqref{eq:oop_X_3} implies
\beginy\label{eq:oop_X_4}
 \spn{\dxg}^{4} 
 &\oleq{\eqref{eq:oop_X_1}}& 
  \fn{\dxg}^{4}  \oleq{\eqref{eq:oop_X_2}}
   \ng\left(\sumig\en{\dxi}^{4} \right), \nnb\\
 %%%%%%%%%%%%%%
 \E\left( \spn{\dxg}^{4} \right)
 &\leq& 
 \ng\left\{\sumig \E\left(\en{\dxi}^{4}\right) \right\} 
\oleq{\eqref{eq:oop_X_3}} 
16 \cdot \ng^2 \cdot \E(\en{\xxi}^{4}),
\endy
so that 
\beginy\label{eq:oop_X_5}
\E\left(\nisumg \spn{\dxg}^{4} \right) = \nisumg\E\left( \spn{\dxg}^{4} \right) \oleq{\eqref{eq:oop_X_4}} 
16\cdot \E(\en{\xxi}^{4}) \cdot \ni\sumg \ng^2.
\endy
\assmasym\eqref{it:assm_asym_123} and \eqref{it:assm_asym_fex} ensure $
\E(\en{\xxi}^{4}) \leq \left\{\E(\en{\xxi}^{2s})\right\}^{\frac{4}{2s}} < \infty$ and 
$\limsup_{\ntinf} \ni\sumg \ng^2 < \infty$
in  \eqref{eq:oop_X_5} so that $\limsup_{\ntinf} \E\left(\nisumg \spn{\dxg}^{4}\right) \overset{\eqref{eq:oop_X_5}}{<} \infty$. 
It then follows from \lem~\ref{lem:basic}\eqref{it:lem_basic_oop} that $
 \nisumg \spn{\dxg}^{4}\oeqt{\lem~\ref{lem:basic}}\oop.$

\underline{\it Proof of $\nisumg \|\dyg\|^{4} = \oop$}. Replacing $\dxg$ by $\dyg$ in the above proof with $
\spn{\dyg}^2 =  \lmax\left(\dyg^\T\dyg\right) =\dyg^\T\dyg = \en{\dyg}^2$ and $\spn{\dyg}^4 = \en{\dyg}^4$
ensures 
$\nisumg \en{\dyg}^{4} = \oop$
under \assmasym\eqref{it:assm_asym_123} and \eqref{it:assm_asym_Y}. 

\underline{\it Proof of $\nisumg \en{\rrgfex}^{4} = \oop$}.
Given $|\ddi|\leq 1$, we have $
\en{\ddg}^2 = \sumig\ddi^2 \leq \ng$ and $\en{\ddg}^{4}  \leq  \ng^2$, 
so that $
\nisumg \en{\ddg}^{4} \leq \limsup_{\ntinf} \nisumg n_g^2  = O(1)$
under \assmasym\eqref{it:assm_asym_123}. 
In addition, \lemfex\eqref{it:lem_fex_uni_B} ensures 
\begineq\label{eq:bound_gxa_3}
\gxa  = O(1), \quad \en{\gxa}^{4} = O(1).  
\endeq 
\lemen\ further implies 
\begineq\label{eq:bound_S2_gxa2}
\en{\dxg\gxa}^{4} \leq \spn{\dxg}^{4}\cdot \en{\gxa}^{4}.
\endeq 
It then follows from \lembound\eqref{it:lem_bound_1} 
that
\begina\label{eq:bound_S2_R}
\en{\rrgfex}^{4} 
&=& \|\dyg - \ddg\tc - \dxg\gxa\|^4\\
&\oleqt{\lembound\eqref{it:lem_bound_1}}&  3^{4-1}\cdot \left( \en{\dyg}^{4} + \en{\ddg}^{4}\cdot \tc^{4} + \en{\dxg\gxa}^{4}\right) \nnb\\
&\oleqt{\eqref{eq:bound_S2_gxa2}}&  3^{4-1}\cdot \left( \en{\dyg}^{4} + \en{\ddg}^{4}\cdot\tc^{4} + \spn{\dxg}^{4}\cdot \en{\gxa}^{4}\right),
\enda
so that
%%%%%%%%%%%%%%%%
\begina
\nisumg \en{\rrgfex}^{4} 
\leq  3^{4-1}\left( \nisumg \en{\dyg}^{4} + \tc^{4}\cdot\nisumg \en{\ddg}^{4} + \en{\gxa}^{4}\cdot\nisumg\spn{\dxg}^{4}\right)\oeq{\eqref{eq:bound_gxa_3}} \oop  
%%%%%%%%%%%%%%%%
\enda
from the results of $\en{\dyg}^{4}$ and $\spn{\dxg}^{4}$ and \eqref{eq:bound_gxa_3}.

\paragraph*{Proof of \lemfex\eqref{it:lem_fex_bound_Sg^2}.} We verify below that $\nisumg \sxrg \sxrg^\T = \oop$. The proof of the rest is similar hence omitted. 
Let $\sxrg[k]= \sumig \dot{X}_{ik}\rrifex$ denote the $k$th element of $\sxrg = \sumig \dxi\rrifex = \dxgt\rrgfex$.
\lemen\ implies 
\begineq\label{eq:bound_Sxrg}
\en{\sxrg} = \en{\dxg^\T\rrgfex} \leq \spn{\dxg} \cdot \en{\rrgfex} 
\endeq 
so that 
\begina
\big|\sxrg[k] \cdot \sxrg[k']\big| &\leq& 2^{-1}\left\{\big(\sxrg[k]\big)^2+ \big(\sxrg[k']\big)^2 \right\} \\
%%%%%%%%
&\leq& 2^{-1}\|\sxrg\|^2
 \oleq{\eqref{eq:bound_Sxrg}} 2^{-1} \spn{\dxg}^2 \cdot \en{\rrgfex}^2
\leq  4^{-1}\left(\spn{\dxg}^{4} + \en{\rrgfex}^{4}\right).
\enda
Therefore, the $(k,k')$th element of $\dnisumg \sxrg \sxrg^\T$ satisfies
\begina
\left| \nisumg \sxrg[k] \cdot \sxrg[k']\right| 
%&\leq &  
%\nisumg \big|\sxrg[k] \cdot \sxrg[k']\big|\nnb\\
\leq
4^{-1}\left(\nisumg \spn{\dxg}^{4} + \nisumg \en{\rrgfex}^{4}\right)\oeqt{\lemfex\eqref{it:lem_fex_X_Y_R}}\oop 
\enda
by \lemfex\eqref{it:lem_fex_X_Y_R}.
This implies $\dnisumg \sxrg \sxrg^\T = \oop$.   
\end{proof}

\subsubsection{Lemmas under \assmy}
\begin{lemma}\label{lem:A}
Under \assmy, 
\begineqs
A_i = \yiz + \oua(\tau_i - \tc), \quad
\cov(\aag\mid \zg) = \cov(\aag) = \ve   \ing +  \va   \og \ogt.
\endeqs
\end{lemma}

\begin{proof}[Proof of \lema]
The expression of $\aai$ follows from $
A_i = Y_i - \di\tc = \yiz + \di(\tau_i - \tc) = \yiz + (\oua + \zi\ouc)(\tau_i - \tc)$, which is independent of $\zg$ under \assmiv. 
The expression of $\cov(\aag)$ follows from $\aag = \ong\ag + \epg$ so that
\beginy\label{eq:lem_eff_4} 
\E(\aag \mid \ag) = \og\ag, \quad\cov(\aag\mid \ag) = \cov(\epg \mid \ag) = \ve  \ing,
\endy
with
$\cov(\aag)= \E\{\cov(\aag \mid \ag) \} + \cov\{\E(\aag \mid \ag) \} 
\overset{\eqref{eq:lem_eff_4}}{=}  \ve   \ing +   \cov(\ong \ag). 
$
\end{proof}

\begin{lemma}\label{lem:efficiency} 
Recall that $\phigb$. Under \assmhomo\ and \ref{assm:y}, 
\begina
\var\left\{\sumig(\czi)(\cai)\right\} 
&=& \vz  \left( \ve \cdot \ng + \va \cdot   \ngsq \phig \right),\\
%%%%%%%%
%%%%%%%%
\var  \left\{\sumig(\zi - \bzg)(\ai - \bag) \right\}  
&=& \ve\cdot \eszg = \vz  \ve \cdot \ng  ( 1 - \phig  ).
\enda
\end{lemma}

\begin{proof}[Proof of \lem~\ref{lem:efficiency}]
Let $
\czg = \zg - \og e$ and $\cag = \aag - \og \mua$. 
From \lem~\ref{lem:cef}\eqref{it:lem_cef_EAZ},
\beginy\label{eq:za_1}
\E(\cag \mid \zg) = \E(\cag) =0,
\quad 
\E(\dag\mid \zg) = \E(\dag) = 0, \quad 
 \dzgt\dag \oeq{\eqref{eq:apb}} \dzgt \aag. \qquad
\endy
This implies that 
\beginy\label{eq:za_10}
\var( \czgt \cag)
&=& 
\E\Big\{\var\left( \czgt \cag \mid \zg\right)  \Big\} + \var\Big\{\E\left( \czgt \cag \mid \zg\right)  \Big\} \nnb\\
%%%%%%%%%%%%%%%%%%%%%%
&\overset{\eqref{eq:za_1}}{=}& 
\E\Big\{ \czgt \cdot \cov(\cag \mid \zg)\cdot \czg \Big\}
=
\E\Big\{ \czgt \cdot \cov(\aag \mid \zg)\cdot \czg \Big\}, \qquad\\
%%%%%%%%
\label{eq:za_v_dzda_2}
\var( \dzgt\dag)&=&
\E\left\{\var(\dzgt\dag \mid \zg)\right\} + \var\left\{\E(\dzgt\dag\mid\zg)\right\}\nnb\\
%%%%%%%%%%%%%%
&\oeq{\eqref{eq:za_1}}& 
\E\left\{\var(\dzgt\aag \mid \zg)\right\}
=
\E\Big\{ \dzgt \cdot \cov( \aag \mid \zg) \cdot \dzg \Big\}.
\endy
In addition, the definition of $\czg$ ensures
\begineq\label{eq:lem_eff_6}
\begin{array}{rcl}
\E\left(\czgt \czg\right) 
&=& \E\left\{\dsumig(\zi -e)^2 \right\} = \ng \cdot \vz, \medskip\\
%%%%%%%%%%%%%%%%%%%%%%%%%%%%
\E\left( \czgt \cdot  \og \ogt \cdot \czg \right)
&=& \E\left(\ogt \cdot \czg \czgt  \cdot \og\right) 
= \ogt \cdot  \cov(\zg) \cdot \og \\
&=& \sumiipg \cov(\zi,\zip)
= \vz \cdot \ngsq\phig.
\end{array}
\endeq
Under \assmy, we have $
\cov(\aag\mid \zg) =  \ve   \ing +  \va   \og \ogt$ by \lema. 
Plugging this into \eqref{eq:za_10} and \eqref{eq:za_v_dzda_2} implies that 
\begina
\var\left\{\sumig(\czi)(\cai)\right\} &=& \var( \czgt \cag) 
%%%%%%%%%%%%%%% 
\overset{\eqref{eq:za_10}}{=}  
\E\Big\{ \czgt \cdot \cov(\aag \mid \zg)\cdot \czg \Big\} \\
%%%%%%%%%%%%%%%%%%%%%%%%%
&\overset{\text{\lema}}{=}& 
\E\Big\{ \czgt \cdot \left(\ve  \ing +  \va   \og \ogt\right) \cdot \czg \Big\} \\
%%%%%%%%%%%%%%%%%%%%%%
&=& \ve  \cdot \E\left(\czgt  \czg\right) + \va \cdot \E\left( \czgt \cdot  \og \ogt \cdot \czg \right)\\
%%%%%%%%%%%%%%%
&\overset{\eqref{eq:lem_eff_6}}{=}&   \ve\cdot \ng \cdot\vz  + \va \cdot \ngsq\phig \cdot \vz,\\
%%%%%%%%%%%
%%%%%%%%%%%
\var  \left\{\dsumig(\zi - \bzg)(\ai - \bag) \right\}   
 &=& \var( \dzgt\dag) 
 %%%%%%%%%%%%%
 \overset{\eqref{eq:za_v_dzda_2}}{=} 
\E\Big\{ \dzgt \cdot \cov( \aag \mid \zg) \cdot \dzg \Big\}\\
%%%%%%%%%%%%%%%%%%%%%%
&\overset{\text{\lema}}{=}& 
\E\Big\{ \dzgt \cdot \left(\ve  \ing +  \va   \og \ogt\right) \cdot \dzg \Big\} \\
%%%%%%%%%%%%%%%%%%%%%%
&=& \ve  \cdot \E(\dzgt \dzg)
\oeq{\eqref{eq:szg}}  \ve \cdot \E(\szg) \\
&\overset{\text{\lem~\ref{lem:Z}}}{=}&  \ve \cdot \vz \cdot   \ng  ( 1 - \phig  ),
\enda 
where the last equality follows from  $\eszg =  \vz \cdot   \ng  ( 1 - \phig  )
$
by \lem~\ref{lem:Z}. 
\end{proof}

\section{Proofs of the results in \sec~\ref{sec:homo}}\label{sec:proof_homo}
\begin{proof}[\bf Proof of \propclt.]
\propclt\ follows from \thms~\ref{thm:lsx}--\ref{thm:fex} with $\xxi = \emptyset$. 
\end{proof}

\begin{proof}[\bf Proof of Theorem~\ref{thm:eff}]
Under \assmy, \lem~\ref{lem:efficiency} ensures that 
\beginy\label{eq:vls_vfe_N}
\beginar{rcl}
\ds\sumg\var\left\{\sumig(\czi)(\cai)\right\} 
&\overset{\text{Lem.~\ref{lem:efficiency}}}{=}& \ds\vz  \sumg  \left( \ve \cdot \ng + \va \cdot \ngsq\phig\right) = \ds N\vz  (\ve + \cn \va),\medskip\\
%%%%%%%%%%%%%%%%
%
% fe
%%%%%%%%%%%%%%%%%% 
\ds  \sumg\var  \left\{\sumig(\zi - \bzg)(\ai - \bag) \right\} 
&\overset{\text{Lem.~\ref{lem:efficiency}}}{=}&  \ds\ve  \sumg \E(\szg)
\overset{\eqref{eq:szg}}{=} \ds  \ve N  \E( \sz)
\oeq{\eqref{eq:kn}} \ds N \vz   \kn \ve. 
\endar
\endy
This implies the simplified expressions of $\vls$ and $\vfe$, along with the ratio and the necessary and sufficient condition for $\vlsn/\vfen > 1$ in \eqref{eq:vr_def} and \eqref{eq:eff_cutoff}. 
%$
%\vr
%=   
%\dfrac
%{\dfrac{1}{\vz \cdot \pcsq} \cdot (\ve + \va\cdot\cn)}
%{\dfrac{1}{\kn \cdot \vz \cdot \pcsq} \cdot \ve}
%= 
%\vrf$ and $
%\dfrac{\vlsn}{\vfen} > 1$ iff $\vavef  \cdot \cn  > \dfrac1\kn -1$. 
%
\thmeff\eqref{it:eff_equal to 1} then follows from \eqref{eq:vr_def} and 
$
\{\cn = 0\} 
\Longleftrightarrow \{\kn = 1\} 
\Longleftrightarrow \{\text{$\phig = 0$ for all $g$}\}$ by \lemz.  
\end{proof}

\begin{proof}[\bf Proof of \propxeff]

\thmlsx\eqref{it:lsx_clt} ensures that 
\begineq\label{eq:effx_thm}
\dfrac{N\hsexsq}{\vlsxn} = 1+\op, \quad \vlsxn =   \dfrac{1}{\vzpcsq}\cdot \nifsumg \var\left\{\sumig (\zi-e)\rrilsx\right\},
\endeq 
with $\rrilsx = \res(\aai \mid 1, \xxi)$ as defined in \eqref{eq:rrilsx_def}.
Under \assmy, let 
\begineqs
\proj(\ag \mid 1, \xsg) =\boa + \xsgt\bxa, \quad
\agp  = \res(\ag\mid 1, \xsg) =\ag - \boa - \xsgt \bxa
\endeqs
denote the linear projection of $\ag$ onto $(1, \xsg)$ and the corresponding residual, with 
\begineq\label{eq:agp_e_var}
\E(\agp) = 0, \quad 
\E(\xsg \agp) = 0,\quad  \vap = \var(\agp) = \va - \var\left\{ \proj(\ag \mid 1, \xsg) \right\}. 
\endeq
Let $
\ds\czg = \zg - \ong e$ and $\rrglsx = (R_{g1,\lsx}, \ldots, R_{g,\ng,\lsx})^\T$
with
\begineq\label{eq:effx_z}
\beginar{l}
\ongt \cdot\cov(\czg)\cdot \ong = \ongt \cdot\cov(\zg)\cdot \ong = \dsumiipg \cov(\zi, \zip) = \vz \cdot \ngsq\phig,
\smallskip\\
%%%%%%%%%%%%
%%%%%%%%%%%%
\czgt \czg = \dsumig (\czi)^2, \qquad \E(\czgt \czg) = \dsumig \E\left\{(\czi)^2\right\} = \vz\cdot\ng. 
\end{array}
\endeq
We show below that, under \assms~\ref{assm:y}--\ref{assm:y_x},
\begineq\label{eq:effx_goal} 
\nifsumg \var\left\{\sumig (\zi-e)\rrilsx\right\}
=\nifsumg \var\left(\czgt\rrglsx\right)= \vz (\vap \cdot \cn + \ve ). 
\endeq
Combining \eqref{eq:effx_thm} and \eqref{eq:effx_goal} ensures $
\dfrac{\hselsx^2}{\hsels^2} = \dfrac{\ve + \vap \cdot \cn }{\ve + \va \cdot \cn }+ \op,$
where $
\dfrac{\ve + \vap \cdot \cn }{\ve + \va \cdot \cn }
= 1 +   \dfrac{(\vap-\va) \cdot\cn }{\ve + \va \cdot \cn}  
%%%
 \oeq{\eqref{eq:agp_e_var}}  1 - \var\left\{\proj(\ag\mid1, \xsg)\right\} \cdot\dfrac{\cn}{\ve + \va \cdot \cn}.$

\paragraph*{Proof of \eqref{eq:effx_goal}.}

 \assm~\ref{assm:y_x} ensures  $\E(\epg) = \E\big\{\E(\epg \mid \xg,\ag)\big\} = 0$, $\cov(\epg, \xxi) = \E(\epg\xit) = \E\big\{\E(\epg\mid \xg, \ag) \cdot \xit\big \} = 0$, and $\cov(\epg, \ag) \oeqt{symmetry} 0$, 
so that 
\begineq\label{eq:proj_epi}
\proj(\epi \mid 1, \xxi) = 0, 
\qquad 
\cov(\epg, \agp) =\cov(\epg, \ag) - \cov(\epg, \xsgt)\bxa = 0. 
\endeq
Combining this with $\ai = \aci + \epi$ implies that under \assmyyx,
\begineqs
\beginar{l}
\proj(\ai\mid \xxi) = \proj(\aci \mid \xxi) + \proj(\epi \mid \xxi) \oeq{\eqref{eq:proj_epi}} \proj(\aci \mid \xxi),\nnb\\
\rrilsx 
 \oeq{\eqref{eq:rrilsx_def}} \ai - \proj(\ai\mid \xxi)  
 =  \aci + \epi  - \proj(\aci\mid \xxi) 
 \oeq{\eqref{eq:agp_e_var}}   \acip + \epi,\nnb\\
 %%%%%%%%%%%%%%
\rrglsx = (R_{g1,\lsx}, \ldots, R_{g,\ng,\lsx})^\T = \ong\agp + \epg,  \endar
\endeqs
so that 
%%%%%%%%%%%%%%%%%%%%%%%%%%%
% R_g
%%%%%%%%%%%%%%%%%%%%%%%%%%%%%
\begineq\label{eq:rrglsx_e_cov}
\beginar{l}
\rrglsx \indep \zg,
\quad
\E(\rrglsx) \oeq{\eqref{eq:agp_e_var}} \zng, \quad
\cov(\rrglsx) \oeq{\eqref{eq:proj_epi}+\eqref{eq:agp_e_var}}  \vap\jng + \ve\ing. 
 \endar
\endeq 
Combining \eqref{eq:rrglsx_e_cov} and \eqref{eq:effx_z} ensures
\begineqs
\beginar{rcl}
\E(\czgt\rrglsx \mid \zg) 
&\oeq{\eqref{eq:rrglsx_e_cov}}&
 \czgt\cdot \E( \rrglsx) \oeq{\eqref{eq:rrglsx_e_cov}} 0,
\smallskip\\
%%%%%%%%%%%%%
\var(\czgt\rrglsx \mid \zg) 
&\oeq{\eqref{eq:rrglsx_e_cov}}& \czgt \cdot \cov(\rrglsx ) \cdot \czg
\oeq{\eqref{eq:rrglsx_e_cov}} \czgt \cdot (\vap \jng + \ve \ing)\cdot \czg\\
%%%%%%%%%%%%%%%
&=& \vap\ongt\czg \czgt \ong + \ve \czgt\czg,
\smallskip\\
%%%%%%%%%%%%%%%
%%%%%%%%%%%%%%%%%%
\E\{\var(\czgt\rrglsx \mid \zg)\} 
&=& \vap \cdot \ongt \cdot \cov(\zg)\cdot \ong + \ve \cdot \E(\czgt \czg)\\
%%%%%%%%%%%%%%%
&\oeq{\eqref{eq:effx_z}}& \vtgf,
\endar
\endeqs
so that $\var(\czgt\rrglsx) 
=\E\{\var(\czgt\rrglsx \mid \zg)\}+\var\{\E(\czgt\rrglsx\mid \zg)\}= \vtgf.$
This implies \eqref{eq:effx_goal}. 
\end{proof}

\section{Proof of \thmhetero\ in \sec~\ref{sec:hetero_plim}}\label{sec:proof_hetero}

\begin{lemma}\label{lem:hetero}
Let $\olzd = \meani \zidi$ and $\olzy = \meani \ziyi$. 
As $\ntinf$, if (i) \assmhetero\ hold; (ii) $\ds\max_{\otg} \ng/N \to 0$; (iii) $ \sup_{\otn}  \E ( \yi^2 ) < \infty$, then 
\begineqs
\beginar{lllll}
\bz = \E(\bz) + \op, &\quad& \bd = \E(\bd) + \op, && \by = \E(\by) + \op,\\
\olzd = \E(\olzd) + \op, &\quad& \olzy = \E(\olzy) + \op,\\
\szd = \E(\szd) + \op, &\quad& \szy = \E(\szy) + \op.  
\endar
\endeqs
\end{lemma}

\begin{proof}[Proof of \lemhetero]
Recall from \eqref{eq:sz_szy_szd} that $\szd$ and $\szy$ are the sample means of $\dzi\ddi$ and $\dzi\dyi$, respectively. 
First, $|\zi|$, $|\di|$, $|\zidi|$, and $|\dzi\ddi|$ are all bounded between $[0,1]$, and therefore \uni\ over $\otn$. 
In addition, 
\lem~\ref{lem:bound}\eqref{it:lem_bound_2} ensures
$\E (|\dzi\dyi|^2 )\leq \E ( \dyi ^2 ) 
\le 2^2 \E( \yi^2),$
so that condition (iii) $ \sup_{\otn}  \E ( \yi^2 ) < \infty$  ensures 
that $\ziyi$, $\yi$, and $\dzi\dyi$ are all \uni\ over $\otn$.
The results then follow from applying \lemsm\eqref{it:lem_sm_wlln} to 
$
\bi = \zi,\, \di,\, \yi,\, \zidi,\, \ziyi,\, \dzi\ddi,\, \dzi\dyi,
$
respectively. 
\end{proof}

\begin{proof}[\bf Proof of \thmhetero]
Let $
\tyi = \yiz + \oua \ti$.
Let $\tmug$ denote the common value of $\E(\tyi)$ for units within cluster $g$.
The cluster-specific results in \lemcef\ in the \sa\ ensure that 
\begineq
\label{eq:hetero_eyz}
\E(\yg \mid \zg) = \og \tmug + \zg\pcg\tcg, \quad
\E(\dg \mid \zg) = \og \pag + \zg \pcg.
\endeq
%%%%%%%%%%%%%
%
% FE

\paragraph*{\bf Probability limit of $\htfe$.} 
Recall from \eqref{eq:szg} that $\szg = \sumig (Z_i - \bzg)^2 = \zgt\png \zg$. 
Let $\szyg = \sumig(\zi - \bzg)(\yi - \byg) = \zgt \png \yg$. 
\eq~\eqref{eq:hetero_eyz} ensures that 
\begina
\E(\szyg \mid \zg) 
&=& \E(\zgt \png \yg \mid \zg)\\
&=& \zgt \png \cdot \E( \yg \mid \zg) 
\oeq{\eqref{eq:hetero_eyz}} \zgt\png\zg\cdot \pcg\tcg= \pcg \tcg \cdot \szg, 
\enda
so that $\E(\szyg) = \E\{ \E(\szyg \mid \zg)\} = \pcg \tcg \cdot \eszg$ and 
\begineq\label{eq:hetero_zpiny}
\E(\szy) = \ni \sumg \E\left(\szyg\right) = \nisumg \pcg\tcg \cdot \eszg. 
\endeq
Identical reasoning ensures 
\beginy\label{eq:hetero_zpind}
\E(\szd) = \nisumg \pcg \cdot \eszg
\endy
from \eqref{eq:hetero_eyz}.
It then follows from \eqref{eq:num} and \lemhetero\ that
\begina
\htfe 
\oeqt{\eqref{eq:num}}  \dfrac{\szy}{\szd}\oeqt{\lemhetero}  \dfrac{\E(\szy)}{\E(\szd)} + \op \oeq{\eqref{eq:hetero_zpiny} + \eqref{eq:hetero_zpind}} \sumg \kgfe \cdot \tcg + \op,
\enda
where
\begineq\label{eq:kgfe_def}
\kgfe = \dfrac{ \pcg \cdot \E(\szg)}{\sumg \pcg \cdot \E(\szg)}. 
\endeq
In addition, \lemz\ in the \sa\ implies that 
\beginy\label{eq:eszg_hetero}
\eszg \oeqt{\lemz} \vzg \cdot \ng(1-\phig)
\endy
under \assmhetero, so that $ 
\kgfe \overset{\eqref{eq:kgfe_def}+\eqref{eq:eszg_hetero}}{\propto}
\ng (1-\phig)\vzg \cdot  \pcg.$

%%%%%%%%%%%%
%
% OLS

\paragraph*{\bf Probability limit of $\htls$.} 
Recall from \eqref{eq:num} that  $\htols = \szyt / \szdt$. 
We compute below the probability limits of $\szyt$ and $\szdt$, respectively, which together imply the results. 
 
First, \lemcef\eqref{it:lem_cef_EYZ}--\eqref{it:lem_cef_EDZ} in the \sa\ imply that
\beginy\label{eq:ey_ed_hetero}
\begin{array}{llllll}
\E(\yi)= \tmug + \eg \pcg\tcg, &&
\E(\ziyi) = \eg (\tmug + \pcg\tcg),&&
\covzy = \vzg \cdot \pcg\tcg ,\\
%%%%%%%%%%%%%%%%%%
 \E(\di) = \pag + \eg \pcg, &&
\E(\zidi) =  \eg (\pag + \pcg), &&
\covzd  = \vzg \cdot \pcg   
\end{array}\qquad
\endy
for $\ig$ under \assmhetero.
This ensures
\beginy\label{eq:hetero_zpd_0}
\begin{array}{rcl}
\E(\bz) &=&  \ds \meangi \E(\zi) = \nisumg \ng\eg,\medskip\\
%%%%%%%%%%%%%%
\E(\bd) &=&  \ds \meangi \E(\di) \oeq{\eqref{eq:ey_ed_hetero}} \nisumg \ng(\pag + \eg \pcg),\\
%%%%%%%%%%%%%
\E\left(\dmeani \zidi\right) &=& \ds \meangi \E(\zidi)  \oeq{\eqref{eq:ey_ed_hetero}} \dnisumg \ng \eg (\pag + \pcg).
\end{array}
\endy
The probability limit for $\szdt$ in the denominator then follows from plugging \eqref{eq:hetero_zpd_0} into
\begineqs
\szdt =  \meani \zidi - \bz \bd 
\oeqt{\lemhetero} \E\left(\meani\zidi\right) - \E(\bz) \cdot \E(\bd) + \op. 
\endeqs
Identical reasoning ensures the expression for $\szyt$ from \eqref{eq:ey_ed_hetero}. 
\end{proof}

\section{Proof of Theorem~\ref{thm:joint} in \sec~\ref{sec:test}}\label{sec:app_proof_of_joint}

Under \assmtth, recall   from \thmjoint\ that $
\hsiglsfe =  
\beginp \hsesq_\ls & \hclsfe \\ \hclsfe & \hsesq_\fe\endp = \sum_{g=1}^{G} \hat{v}_g \hat{v}_g^{\top}$,
where 
$
\hat{v}_g = \dfrac{1}{N}\begin{pmatrix} S_{ZD}^{-1}  \sumig (Z_i - \bar{Z}) \rils \\ \szd^{-1} \sumig (Z_i - \bar{Z}_g) \rife \end{pmatrix}$. 
%Under \assmtth,
%let $\ai = \yi - \di\tc$, $\mua = \E(\ai)$, and $\bag = \meanig \ai$. 
Let 
$
V_N =
\beginp \vls & \vlsfe \\ \vlsfe & \vfe\endp
= 
N^{-1}\sumg \cov(v_g),  
$
where
$v_g = \begin{pmatrix} (\vzpc)^{-1}  \sumig (Z_i - e) (A_i-\mua) \\ (\kvzpc)^{-1} \sumig (Z_i - \bar{Z}_g) (A_i - \bar{A}_g) \end{pmatrix}$. 
Intuitively, $(V_N, v_g)$ define the population analogs of $(\hsiglsfe, \hat v_g)$, respectively, scaled by a factor of $N$.
We will show that
\begineqs
\beginp 
\htls - \tc\\
\htfe - \tc
\endp 
\approx N^{-1}\sumg v_g,
\endeqs
so that $v_g$ is essentially the influence function of cluster $g$. 
%%%%%%%%%%
We verify in \secs~\ref{sec:joint_1}--\ref{sec:joint_2} that 
\begineq\label{eq:goal_joint}
\vnsq \cdot \sqrtn \beginp 
\htls - \tc\\
\htfe - \tc
\endp 
\rs
 \mn(0_2, I_2),\quad 
  \vvhv = I_2 + \op, 
\endeq
respectively, which together imply $
\hsiglsfe^{-1/2} \beginp 
\htls - \tc\\
\htfe - \tc
\endp 
\rs
 \mn(0_2, I_2).$

\subsection{Verification of $\vnsq \cdot \sqrtn (\htls - \tc, \htfe - \tc)^\T\rs \mn(0_2, I_2)$ in \eqref{eq:goal_joint}}
\label{sec:joint_1}

Let $\hbo$ denote the coefficient on the constant term from $\tslst(\yi \sim 1 + \di \mid 1 + \zi )$.
Given that $\tslst(\yi \sim 1 + \di \mid 1 + \zi )$ is just-identified, the estimation equation implies
\begineq\label{eq:ee_ols}
\hbolsf
=
\left\{ \meani \oz \od \right\}^{-1}  
 \left\{\meani \oz Y_i\right\}\\
 %%%%%%%%%%
=
 \phini
 \left\{\meani \oz Y_i\right\}, 
\endeq
where $\phin =  \phinf$. 
Recall from \eqref{eq:ai} that $\ai = \yi - \di\tc$ with $\mua = \E(\ai)$. 
\eq~\eqref{eq:ee_ols} ensures
\begina
\dbolsf
&=& 
\hbolsf - \bolsf\\
%%%%%%%%%%%%%%
%%%%%%%%%%%%%%
&\overset{\eqref{eq:ee_ols}}{=}& 
\phini \left\{\meani \oz Y_i\right\} 
-
\phini  \underbrace{\left\{\phinf \right\}}_{=\phin}  \bolsf\\
%%%%%%%%%%%%%%%%%
%%%%%%%%%%%%%%%%%%%%
&=& 
\phini \cdot \meani \oz \left\{ Y_i -  \od\bolsf \right\}
\nonumber\\
%%%%%%%%%%%%%%%%%%%%
%%%%%%%%%%%%%%%%%%%%
&\overset{\eqref{eq:ai}}{=}&  \phini \cdot \meani \oz (\aai - \mua),
\enda
so that 
\begineq\label{eq:htls_joint}
\htls - \tc
=
(0,1)\dbolsf
=
(0,1)\phini 
 \cdot \dmeani \beginp\aai - \mua\\ \zi(\aai-\mua)\endp.
\endeq
In addition, \eqref{eq:num} ensures that 
\begineq\label{eq:htfe_joint}
\htfe - \tc =
 \dfrac{\meani \dzi \dyi  }{\meani \dzi \ddi} -\tc \overset{\eqref{eq:ai}}{=}  \szd^{-1} \left( \dmeani\dzi \dai\right).
\endeq
%%%%%%
Let 
\begineqs
\bi = \beginp 
\cai\\
\zi(\cai)\\
\dzi \dai 
\endp,
\quad 
\bb = \dmeani \bi,
\endeqs 
with
\begineq\label{eq:omgn_def_app}
\ds\omgn \oeqt{Def.} \ni\cov\left\{\sumi \beginp 
\cai\\
\zi(\cai)\\
\dzi \dai 
\endp \right\} = \cov(\sqrtn \bb). 
\endeq
\eqs~\eqref{eq:htls_joint}--\eqref{eq:htfe_joint} ensure that
\begineq\label{eq:dhtlsfe}
\beginp 
\htls - \tc\\
\htfe - \tc
\endp 
%%%%%%%
 \overset{\eqref{eq:htls_joint}+\eqref{eq:htfe_joint}}{=} 
\beginp
(0,1)\phini  & 0\\
0 & \szd^{-1}
\endp 
\cdot \meani 
\beginp 
\cai\\
\zi(\cai)\\
\dzi \dai 
\endp
%%%%
=
\gn  \bb,
\endeq
where $\gn =  
\beginp
(0,1) \phini  & 0\\
0 & \szdi
\endp.$
On the other hand, 
\begineq\label{eq:vg}
\nisumg v_g = \nisumg\begin{pmatrix} (\vzpc)^{-1}  \sumig (Z_i - e) (A_i-\mua) \\ (\kvzpc)^{-1} \sumig (Z_i - \bar{Z}_g) (A_i - \bar{A}_g) \end{pmatrix} = \gns \bb,
\endeq 
where
\begineq\label{eq:gns_joint}
\gns = \gnsf. 
\endeq 
We show below that
\begineq \label{eq:joint_goal}
\vn^{-1/2}\cdot\sqrtn\gns\bb \rs \mn(0_2, I_2), \qquad
\vn^{-1/2}(\gn - \gns)\sqrtn\bb = \op,
\endeq
respectively, based on the building blocks
\begineq
\gn - \gns = \op, \qquad \omgn^{-1/2} \cdot \sqrtn \bb \rs \mn(0_3, I_3). \label{eq:joint_goal_clt_bi}
\endeq 
Combining \eqref{eq:joint_goal} with \eqref{eq:dhtlsfe} implies $
\vn^{-1/2}\sqrtn\beginp 
\htls - \tc\\
\htfe - \tc
\endp 
\oeq{\eqref{eq:dhtlsfe}}
\vn^{-1/2}\sqrtn\gn\bb =
\vn^{-1/2}\sqrtn\gns\bb + \vn^{-1/2}(\gn - \gns)\sqrtn\bb \rs \mn(0_2, I_2)$
by Slutsky's theorem. 

\subsubsection{Proof of $\gn - \gns = \op$ in \eqref{eq:joint_goal_clt_bi}.}
\lemasymnox\eqref{it:lem_asym_nox_S} implies  $
\phin \oeq{\eqref{eq:ee_ols}} \beginp 1 & \bd \\ \bz & \meani \zidi\endp =
\eozodf + \op$. 
Combining this with $\cov(\zi,\di) = \vzpc$ from \lemcef\eqref{it:lem_cef_EDZ} implies that
$
\text{plim} \phin^{-1} 
=\eozodf^{-1}
= \vzpci\beginp
 \muzd  & -\mud\\
 -e & 1
 \endp 
$
with
\beginy\label{eq:einv}
\text{plim} (0,1)\phin^{-1} 
=  \vzpci(- e, 1). 
\endy
In addition,  \lemcef\eqref{it:lem_cef_EDZ} ensures 
\begineq\label{eq:joint_szd}
\szd =\kvzpc + \op.  
\endeq
Plugging \eqref{eq:einv} and \eqref{eq:joint_szd} into \eqref{eq:dhtlsfe} implies the result. 

\subsubsection{Proof of $\omgn^{-1/2} \cdot \sqrtn \bb \rs \mn(0_3, I_3)$ in \eqref{eq:joint_goal_clt_bi}.}
\lemasymnox\eqref{it:lem_asym_nox_AB} ensures that $\{\en{\bi}^2: \otn\}$ is \uni\ under \assmhomo\ and $\E(Y_i^4)<\infty$.
In addition, \eqref{eq:omgn_def_app} ensures $\cov(\sqrtn \bb) = \omgn$, with $\inf_N\lm(\omgn) > 0$ by assumption.  
Therefore, $\bi$ satisfies the conditions in \lemsm\eqref{it:lem_sm_clt} with $r = 2$.  
\lemcefa\ ensures 
\begineq\label{eq:eb = 0}
\E(\bi) = \E\beginp 
\cai\\
\zi(\cai)\\
\dzi \dai 
\endp = \E\beginp 
0\\
\cov(\zi,\ai)\\
\E(\dzi \dai)
\endp \oeqt{\lemsm\eqref{it:lem_sm_clt}} 0, \qquad \E(\bb) = 0. 
\endeq
Applying \lemsm\eqref{it:lem_sm_clt}\ to  $\{\bi: \otn\}$ ensures
$\omgn^{-1/2} \cdot \sqrtn \bb   \rs  \mn(0_3, I_3)$. 

\subsubsection{Proof of $\vn^{-1/2} \cdot \sqrtn \gns\bb \rs \mn(0_2, I_2)$ in \eqref{eq:joint_goal}.} 
Let 
\begineq\label{eq:bis_joint}
\bis  = \gns \bi =  \dfrac1\vzpc\beginp (\czi)(\cai) \\ \kni \dzi\dai\endp,
\quad 
%%%%%%
\bbs = \dmeani \bis = \gns \bb, 
\endeq 
with 
\beginy
\E(\bbs)  \oeq{\eqref{eq:eb = 0}+\eqref{eq:bis_joint}} 0,  \quad
%%%%%%%%%%%%%%
%
%%%%%%%%%%%%%%%
\cov(\sqrtn\bbs)  \oeq{\eqref{eq:vg}} \cov\left(\dfrac{\sumg v_g}\sqrtn  \right) \oeqt{\text{Assm.~\ref{assm:cs}}} \nif \sumg \cov ( v_g )  = \vn. \quad\label{eq:cov_bi*}
\endy 
We verify below that, under the regularity conditions in \thmjoint\ with $\lm(\omgn) \geq \lmd > 0$ for some $\lmd >0$, we have
\begineq\label{eq:bis_clt_goal}
\supi\E(\en{\bis}^4) = O(1), \quad
\lm(\vn) \geq \dfrac{\lmd}\vzpcsq > 0,  
\endeq
so that $\{\bis: \otn\}$ satisfies the conditions in \lemsm\eqref{it:lem_sm_clt}. Applying \lemsm\eqref{it:lem_sm_clt} to $\{\bis: \otn\}$ implies the result.

\paragraph{Proof of $\supi\E(\en{\bis}^4) = O(1)$.} Direct algebra ensures
\begineq\label{eq:gns_gnst}
\gns\gnst = \dfrac1\vzpcsq \beginp e^2 + 1 & 0 \\ 0 & \kn^{-2}\endp.  
\endeq 
Given $\inf_N \kn >0$, 
\begineq\label{eq:lm_max}
a \equiv \spn{\gns}^2 = \lmax(\gns\gnst) = \dfrac1\vzpcsq\cdot \max(e^2+1, \kn^{-2})  = O(1).   
\endeq
\lemen\ ensures 
\begineq\label{eq:bis_bound}
\|\bis\| \oeq{\eqref{eq:bis_joint}} \|\gns\bi\| \oleqt{\eqref{eq:lm_max}+\lemen} a^{1/2} \|\bi\|.
\endeq 
Under \assmhomo\ and $\E(\yi^4) = O(1)$,  \lemasymnox\eqref{it:lem_asym_nox_AB} implies that $ \supi\E(\en{\bi}^4) = O(1)$. 
Combining this with \eqref{eq:bis_bound} implies the result. 

\paragraph{Proof of  the uniform boundedness of $\lm(\vn)$  in \eqref{eq:bis_clt_goal}.}
Recall from \eqref{eq:omgn_def_app} that $\omgn = \cov(\sqrt N\bb)$, so that  
\begineq\label{eq:cov_bbs_joint}
\vn \oeq{\eqref{eq:cov_bi*}}
\ds\cov(\sqrtn\gns\bb) 
 \oeq{\eqref{eq:omgn_def_app}}\gns \omgn \gnst. 
\endeq
Combining \eqref{eq:gns_gnst} with $\kn \leq 1$  from \lemz\  ensures   $
\lm(\gns\gnst) = \dfrac1\vzpcsq\cdot\min(e^2+1, \kn^{-2})  \geq  \dfrac1\vzpcsq$, 
so that, when $\lm(\omgn) \geq \lmd > 0$, \lemlm\ implies that   
$\lm(\vn) = \lm(\gns \omgn \gnst)  \ge \lm(\omgn)\cdot\lm(\gns \gnst)  \oeqt{\text{Assm.~\ref{assm:asym}}} \dfrac{\lmd}\vzpcsq$. 

\subsubsection{Proof of $\vn^{-1/2}(\gn - \gns)\sqrtn\bb = \op$ in \eqref{eq:joint_goal}.}
The second part of \eqref{eq:bis_clt_goal} implies that  $\vn^{-1/2} = \ooo$. 
Combining this with $\gn-\gns = \op$ and $\omgn^{-1/2} \cdot \sqrtn \bb \rs \mn(0_3, I_3)$ from \eqref{eq:joint_goal_clt_bi} implies that 
\begineqs
\ses  (\gn - \gns)   \sqrtn \bb 
= \underbrace{\ses}_{\textup{$\ooo$ by \eqref{eq:bis_clt_goal}}}  \underbrace{(\gn - \gns)}_{\textup{$\op$ by \eqref{eq:joint_goal_clt_bi}}}  
\omgn^{1/2}
\underbrace{\left( \omgn^{-1/2} \cdot \sqrtn \bb\right)}_{\textup{$\rs \mn(0_3, I_3)$ by \eqref{eq:joint_goal_clt_bi}}} 
%%%%%%%%%%%%%%%%%
\overset{\eqref{eq:joint_goal_clt_bi}+\eqref{eq:bis_clt_goal}+\eqref{eq:joint_goal_clt_bi}}{\rs} 0_2 
\endeqs
by Slutsky's theorem, so that 
$
\ses  (\gn - \gns)  \sqrtn \bb = \op 
$
by \lem~\ref{lem:basic}.

\subsection{Verification of \eqref{eq:goal_joint}: $\vvhv = I_2 + \op$}
\label{sec:joint_2}
Note from \eqref{eq:bis_joint} that $v_g = \sumig \bis$. Let 
\begineq\label{eq:tvn_def}
\tvn = \nisumg \tbgs  \tbgst 
\endeq
be a variant of $\vn = \nisumg \cov(v_g)$ to write
\begineq\label{eq:joint_cov_goal}
 \vvhv  = \vnsq (\hvn - \tvn) \vnsq + \vnsq  \tvn \vnsq. 
\endeq
Given that $\vn = \cov(\sqrtn \bb^*)$ from \eqref{eq:cov_bi*} and 
$
\E(\bi^*) \oeq{\eqref{eq:bis_joint}} \E(\gns \bi) = \gns \E(\bi) \oeq{\eqref{eq:eb = 0}} 0
$ from \eqref{eq:bis_joint} and \eqref{eq:eb = 0}, applying \lemsm\eqref{it:lem_sm_clt}\ to $\{\bi^*:\otn\}$ ensures 
\begineq\label{eq:tvn_plim}
\vn^{-1/2}\tvn\vn^{-1/2} = I_2 + \op. 
\endeq
Combining \eqref{eq:joint_cov_goal}, \eqref{eq:tvn_plim}, and $\lm(\vn) \geq \lmd/\vzpcsq > 0$ from \eqref{eq:bis_clt_goal} implies that 
to verify $\vvhv = I_2 + \op$, it suffices to verify that
\begineq\label{eq:sc_joint_1}
\hvn - \tvn = \op. 
\endeq
Write
\begineqs
\hvn = 
\beginp N\hsels^2 & N \hclsfe\\ N\hclsfe & N\hsefe^2 \endp, \quad \tvn = \beginp \tvn[1,1] & \tvn[1,2] \\ \tvn[1,2] & \tvn[2,2] \endp.  
\endeqs 
%%%%%%%%%%%%
%%%%%%%%%%%% 
We verify in \sec~\ref{sec:joint_cov_1} that 
\begineq\label{eq:goal_joint_hse}
N \hclsfe = \tvn[1,2] + \op.
\endeq
The proofs for $N\hsels^2 = \tvn[1,1] + \op$ and $N\hsefe^2 = \tvn[2,2] + \op$ are similar and therefore omitted. Together, these componentwise convergence results imply \eqref{eq:sc_joint_1}.

\subsubsection{Proof of $N\hclsfe = \tvn[1,2] + \op$ in \eqref{eq:goal_joint_hse}.}\label{sec:joint_cov_1}
Let $\czg = (\czi:\ig) = \zg - \ong e$ and $\cag = (\cai:\ig) = \ag - \ong\mua$
to write  
\begineq\label{eq:tbg*}
\tbgs  
= \sumig\dfrac1\vzpc\beginp (\czi)(\cai) \\ \kni \dzi\dai\endp
= \dfrac1\vzpc \beginp
\czgt \cag\\
   \kni   \dzgt \dag
\endp. 
\endeq
Plugging \eqref{eq:tbg*} into the definition of $\tvn$ in \eqref{eq:tvn_def} ensures that 
\beginy\label{eq:tvn_12}
\tvn[1,2]
\oeq{\eqref{eq:goal_joint_hse}}
(1,0)\tvn\beginp 0 \\ 1\endp
&\oeq{\eqref{eq:tvn_def}}& 
(1,0)\left( \nifsumg \tbgs  \tbgst \right)\beginp 0 \\ 1\endp
\nnb\\
%%%%%%%%%%%%%%%%
&\oeq{\eqref{eq:tbg*}}& 
 \nifsumg
 (1,0)v_g v_g^\T\beginp 0 \\ 1\endp 
\nnb\\
%%%%%%%%%%%%%%%%%%%%%%%%%%%
&=& 
\dfrac1{\vzpc} \cdot \dfrac1{\kvzpc} \cdot \nifsumg  \left(\czgt \cag\right)\left(\dzgt \dag\right)  
\nnb\\
&=& 
\dfrac1{\vzpc} \cdot \dfrac1{\kvzpc} \cdot \ttt,
\endy
where $
\ttt = \nisumg \tttg$ and $\tttg =  (\czgt \cag ) (\dzgt \dag)$. 
On the other hand, 
\beginy\label{eq:hclsfe_app}
N\hclsfe 
&=& 
\dfrac1\szdt\cdot \dfrac1\szd\cdot \nifsumg\left\{\sumig (\zi - \bz)\riols\right\} \left(\sumig  \dzi \rife \right)\nnb\\
%%%%%%%%%%%%%%%%
%%%%%%%%%%%%%%%%
&=& 
\dfrac1\szdt\cdot \dfrac1\szd\cdot \nifsumg\left\{(\zg - \ong \bz)^\T \rgols\right\} \left(\dzgt\rgfe \right)\nnb\\
&=& 
\dfrac1\szdt\cdot \dfrac1\szd\cdot \htt, 
\endy
where $\htt = \nisumg \httg$ and $\httg = \{(\zg - \ong \bz)^\T \rgols\} (\dzgt\rgfe)$. 
Given that $
\szdt = \vzpc + \op$ and $\szd = \kvzpc + \op$
by \lemasymnox\eqref{it:lem_asym_nox_S}, from  \eqref{eq:tvn_12}--\eqref{eq:hclsfe_app}, it suffices to verify that 
\beginy\label{eq:goal_T}
\htt - \ttt = \op. 
\endy

\paragraph{Proof of $\htt - \ttt = \op$ in \eqref{eq:goal_T}.} 
Let $\dols = \dbolsf$ and $\dfe = \htfe-\tc$. 
It follows from $\rils = \yi - \hbo - \di\htls$, $\ai = \yi - \di\tc$, and $\rife = \dyi - \ddi\htfe$,   $\dai = \dyi - \ddi\tc$
that 
\begina
\rils - (\cai) = - (\hbo - \mua) - \di(\htls - \tc) = - (1, D_i) \dols , \quad
\rife - \dai = - \ddi\dfe,
\enda
\begineq\label{eq:rgols_rgfe}
\rgls - \cag 
= -(\ong,\dg) \dols,
\quad
\rgfe - \dag  =  - \ddg \dfe. 
\endeq
Combining \eqref{eq:rgols_rgfe} with 
\begineq\label{eq:zg}
\zg - \ong \bz = \czg - \ong(\bz - e) 
\endeq
yields
\beginy\label{eq:zr_joint}
\beginar{rcl}
(\zg - \ong \bz)^\T \rgols
&\oeq{\eqref{eq:zg} + \eqref{eq:rgols_rgfe}}& 
\left\{\czgt  - (\bz - e)\ongt\right\}\left\{\cag - (\ong,\dg) \dols\right\}\\
&=& 
 \czgt \cag - \czgt(\ong,\dg) \dols \\
&& - (\bz - e)\ongt \cag + (\bz - e)\ongt (\ong,\dg) \dols \\
&=& \czag + \xigols,
\medskip\\
%%%%%%%%%%%%%
% fe
%%%%%%%%%%%%%
\dzgt\rgfe 
&\oeq{\eqref{eq:rgols_rgfe}}&
\dzgt (\dag -  \ddg \dfe) 
= \dzag + \xigfe,
\endar
\endy
where 
$\xigols = - \czgt(\ong,\dg) \dols  - (\bz - e)\ongt \cag + (\bz - e)\ongt (\ong,\dg) \dols$ and $\xigfe = - \dzgt\ddg\dfe$.
%%%%%%%%%%
Plugging \eqref{eq:zr_joint} into the definition of $(\htt, \httg)$ in \eqref{eq:hclsfe_app} ensures
\beginy
\httg 
&\oeq{\eqref{eq:hclsfe_app}}& \left\{(\zg - \ong \bz)^\T \rgols\right\} \left(\dzgt\rgfe \right)
=
\left(\czag + \xigols\right)\left(\dzag + \xigfe\right)
\nnb\\
%%%%%%%%%%%
&=& 
\underbrace{\czagp \dzagp}_{\text{$\tttg$ by \eqref{eq:tvn_12}}} + \xigfe \czag 
+ \xigols \dzag  + \xigols \xigfe,
\nnb\\
%%%%%%%%%%%%%%%
%%%%%%%%%%%%%%%
\htt 
&=& \nifsumg \httg = 
\ttt + S_1 + S_2 + S_3, \label{eq:htt}
\endy
where  
\begineqs
S_1 = \nisumg\xigfe \czag, \quad S_2 = \nisumg\xigols \dzag, \quad S_3 = \nisumg\xigols \xigfe.
\endeqs 
From \eqref{eq:htt}, a sufficient condition for $\htt-\ttt = \op$ in \eqref{eq:goal_T} to hold is that 
the three terms on the right-hand side of \eqref{eq:htt} are all $\op$. 
We verify below that $S_1 = \op$. The proofs of the remaining two equalities are analogous and therefore omitted.

\paragraph{Proof of $S_1 = \op$ in \eqref{eq:htt}.}
Recall from \eqref{eq:zr_joint} that $\xigfe = - \dzgt\ddg\dfe$ with $\dfe = \htfe-\tc$ from \eqref{eq:rgols_rgfe} to write
\begineq\label{eq:joint_T3}
S_1 = - \dfe \cdot \nisumg \dzdgp \left(\czag\right).  
\endeq
The central limit theorem for $(\htols -\tc, \htfe-\tc)^\T$ in \eqref{eq:goal_joint} that we just verified implies 
\begineq\label{eq:htfe_plim}
\dfe  = \op.
\endeq
We show below that  
\begineq\label{eq:goal_joint_bounded}
T = \nisumg  \dzdgp \czagp = \oop. 
\endeq
Combining \eqref{eq:joint_T3}--\eqref{eq:goal_joint_bounded} implies the result. 

\paragraph{Proof of \eqref{eq:goal_joint_bounded}.}
Given $(\dzdg)(\czag)\leq 2^{-1}\{(\dzdg)^2 +  (\czag)^2\}$, 
\begineq\label{eq:bound_joint_1}
T = \nisumg \dzdgp \czagp  \leq 2^{-1}(T_1 + T_2), 
\endeq
where $T_1 = \nisumg\dzdgp^2$ and $T_2= \nisumg\czagp^2$.
It follows from $\dzi, \ddi \in (-1,1)$ that $|\dzdg | =  |\sumig \dzdi|  \leq \ng$, 
so that 
\begineq\label{eq:bound_joint_T1}
T_1 = \nisumg\dzdgp^2 \leq \nisumg\ngsq = O(1)
\endeq
under \assm~\ref{assm:ng_main}. 
Similarly, it follows from $|\zi - e| < 1$ that 
\begineqs
\left(\czag\right)^2 = \left|\sumig \czai\right|^2 
< \left(\sumig |\cai|\right)^2 
\leq \ng  \sumig  (\cai)^2
\endeqs
by \csf,
so that 
\begineqs
\beginar{c}
T_2 = \nisumg\czagp^2   \leq \nisumg \ng \left\{ \sumig  (\cai)^2\right\},\\
\E(|T_2|) 
 \leq  \nisumg \ng \left[\sumig \E\left\{(\cai)^2\right\} \right] = \var(A_i) \cdot \nisumg\ngsq = \ooo
 \endar
\endeqs
under \assm~\ref{assm:ng_main}. This implies 
\begineq\label{eq:bound_joint_T2}
T_2 = \oop
\endeq 
by \lem~\ref{lem:basic}. 
Plugging \eqref{eq:bound_joint_T1}--\eqref{eq:bound_joint_T2} into \eqref{eq:bound_joint_1} ensures $T = \oop$ in \eqref{eq:goal_joint_bounded}.

\section{Proof of the results in Appendix~\ref{sec:complete theory}}
\label{sec:proof_complete}
\subsection{Proof of \thmlsx}
Recall that $\zix$ denotes the residual from $\olst(\zi\sim 1+\xxi)$, with $
\szyx = \meani \zix \yi$ and $\szdx = \meani \zix \di$.
Letting $\dis = \di$, $\zis = \zi$, and $W_i = (1, \xxi)$ in \lemfwl\ ensures the numeric expressions of $(\htlsx, \hselsx)$ in \thmlsx\eqref{it:lsx_numeric}. 
We verify below \thmlsx\eqref{it:lsx_clt}. 

\paragraph*{Proof of \thmlsx\eqref{it:lsx_clt}.} Recall from \eqref{eq:rrilsx_decomp} that $
\rrilsx =  Y_i - \bo - \di\tc - \xit\bxaa$, 
where $\bo =  \mua - \muxt\bxaa$.
The \tslsxf, $\tslsxformula$, is a special case of \lemtsls\ in which 
\begini
\item $(u_i,v_i,w_i, \epi) = (\yi, \vilsx, \wilsx, \rrilsx)$,
where $\vilsx = \odxt$ and $\wilsx = \ozxt$,
\item the corresponding linear model is given by
\beginy\label{eq:model_lsx}
\beginar{l}
Y_i \oeq{\eqref{eq:rrilsx_decomp}}  \bo + \di\tc +\xit\bxaa + \rrilsx \;=\; \vilsx^\T\beta + \rrilsx,\smallskip\\
\E\left(\sumig \wilsx \rrilsx\right) = 0,
\endar
\endy
where $\beta = (\bo, \tc, \bxaa^\T)^\T$; 
\item  
$C_N = (0, 1, \zpt)^\T \in \mbr^{(2+p)}$.
\endi
Given \lemtsls, it suffices to verify that 
\begine[(i)]
\item \eqref{eq:model_lsx} is correct;  
\item the assumptions in \lemtsls\ are satisfied for \eqref{eq:model_lsx} under \assmasymlsx;
\item the expression of $\vlsxn$ in \thmlsx\eqref{it:lsx_clt} is correct. 
\ende 
We verify these three items in \secs~\ref{sec:lsx_1}--\ref{sec:lsx_3}, respectively. 

\subsubsection{Proof of the linear model in \eqref{eq:model_lsx}.}
\label{sec:lsx_1}
The form of the linear model in \eqref{eq:model_lsx} holds by the definition of $\rrilsx = \res(\ai\mid 1, \xxi)$ and $(\bo, \bxaa)$ in \eqref{eq:rrilsx_decomp}. 
In addition, properties of linear projection ensure 
\beginy\label{eq:proj_lsx}
\E\left\{\beginp 1 \\ \xxi \endp \rrilsx\right\}=0, \quad \E(\rrilsx) = 0. 
\endy
Under \assmtth, \lemcef\ ensures $\covza = 0$ and $\cov(\zi,\xxi) = 0$, so that  $\cov(\zi,\rrilsx) = \cov(\zi, \ai - \xit \bxaa) = 0$ and 
\begineq
\E(\zi \rrilsx) =\cov(\zi, \rrilsx) + \E(\zi)\cdot\E(\rrilsx) \oeq{\eqref{eq:proj_lsx}} 0.\label{eq:zr_lsx}
\endeq
\eqs~\eqref{eq:proj_lsx}--\eqref{eq:zr_lsx} together imply $\E(\wilsx \rrilsx) = 0$. 

\subsubsection{Proof of the assumptions in \lemtsls}\label{sec:lsx_2}
Note that 
\assmasym\eqref{it:assm_asym_123} ensures \lemtsls\eqref{it:hansen_cs}; 
\assmasym\eqref{it:assm_asym_123} ensures \lemtsls\eqref{it:hansen_ng} with $r=2$; 
\assmim\ ensures \lemtsls\eqref{it:hansen_identical},
so it suffices to verify \lemtsls\eqref{it:hansen_rank}--\eqref{it:hansen_bounded} with $s = 2$. 

\paragraph{Proof of \lemtsls\eqref{it:hansen_rank}.}
Under \eqref{eq:model_lsx} and \assmim, let
\begineq\label{eq:glsx_def}
\glsx = \E(\wilsx \vilsx^\T) = \E\left\{\ozx\odx\right\} = \eozxodxm 
\endeq 
denote the common value of $\E(\wilsx\vilsxt)$ across $\otn$, which is independent of $N$.  
The analog of $\gn$ in \lemtsls\eqref{it:hansen_rank}, as defined in \eqref{eq:hansen_def}, is
\beginy\label{eq:gnlsx}
\ds\gnlsx \oeq{\eqref{eq:hansen_def} + \eqref{eq:model_lsx}} \meani \E(\wilsx \vilsxt)
\oeqt{Assm.~\ref{assm:im}} \E(\wilsx \vilsxt) \oeq{\eqref{eq:glsx_def}} \glsx. \qquad
\endy
Under \assmtth, \lemcef\ ensures  
$\muzd - e\mud = \covzd = \vzpc$, and $\muxzt - e\muxt = \cov(\zi,\xxi) = 0$,
so that 
\beginy
\label{eq:det_glsx}
\det(\gnlsx) &\oeq{\eqref{eq:gnlsx}} &\det(\glsx) 
=
\det\beginp 
1 & \mud & \muxt \\ 
0 & \muzd - e\mud  & \muxzt - e\muxt\\
\mux & \muxd & \muxx
\endp
\nnb\\
%%%%
%&=&
%\det\beginp 
%1 & \mud & \muxt \\ 
%0 & \covzd  & \cov(\zi, \xit)\\
%\mux & \muxd & \muxx
%\endp
%\nnb\\
%%%
&=&
\vzpc \cdot \det\vxf
\oeqt{\lemblock} \vzpc\cdot \det(\vx) > 0,
\endy
where the last equality follows from \lemblock\ with $\vx = \muxx - \mux\muxt = \cov(\xxi)$.
This ensures $\gnlsx = \glsx$ has full rank.

\paragraph{Proof of \lemtsls\eqref{it:hansen_lambda_min}.}
Under \eqref{eq:model_lsx} and \assmim, let
$\psilsx = \E(\wilsx \wilsx^\T)$ 
denote the common value of  $\E(\wilsx\vilsxt)$ across $\otn$, which is independent of $N$.  
The analogs of $\psin$ and $\omgn$ in \lemtsls\eqref{it:hansen_lambda_min}, as defined in \eqref{eq:hansen_def}, are
\begineq\label{eq:psinlsx}
\beginar{l}
\ds \psinlsx  
\oeq{\eqref{eq:hansen_def} + \eqref{eq:model_lsx}}  
\meani \E(\wilsx\wilsxt)  \oeqt{Assn.~\ref{assm:im}}  \E(\wilsx \wilsx^\T)  = \psilsx,\\
%%%%%%%%
\omgnlsx^*  
\oeq{\eqref{eq:hansen_def} + \eqref{eq:model_lsx}}  
\dnisumg \cov\left(\dsumig \wilsx \rrilsx\right)
 \oeq{\eqref{eq:omgnlsx_omgnfex_def}}  \omgnlsx,  
\endar
\endeq
respectively,  
so that 
\lemtsls\eqref{it:hansen_lambda_min} holds if 
\begineq\label{eq:lambda_min_lsx}
\lm(\psinlsx) \oeq{\eqref{eq:psinlsx}} \lm(\psilsx) \geq\lambda, \quad  \lm(\omgnlsx)\geq\lambda \quad \text{for some $\lambda>0$.}
\endeq
\assmasym\eqref{it:assm_asym_lsx} ensures the uniform boundedness of $\lm(\omgnlsx)$ in \eqref{eq:lambda_min_lsx}.

In addition, given $\psilsx = \E(\wilsx \wilsx^\T)$, if there exists a nonzero constant vector $a \neq 0$ such that $
a ^\T\psilsx a  = \E\left\{(a^\T\wilsx)^2\right\} = 0$,
then $a^\T\wilsx = 0$ almost surely by \lem~\ref{lem:basic}\eqref{it:lem_basic_as}.
This implies $a^\T \wilsx\vilsx^\T=0$ almost surely, so that  
$
a^\T\glsx\oeq{\eqref{eq:gnlsx}} \E(a^\T \wilsx\vilsx^\T) \oeqt{\lem~\ref{lem:basic}\eqref{it:lem_basic_as}} 0 $
by the definition of $\glsx = \E(\wilsx\vilsxt)$ in \eqref{eq:gnlsx} and \lem~\ref{lem:basic}\eqref{it:lem_basic_as}. This contradicts \eqref{eq:det_glsx}.
Therefore,    $a^\T\psilsx a   = \E\left\{(a^\T\wilsx)(a^\T\wilsx)^\T\right\}> 0 $ for any nonzero constant vector $a$,
which implies $\lm(\psilsx) > 0$.

\paragraph{Proof of \lemtsls\eqref{it:hansen_bounded} with $s = 2$.}\label{sec:bound_joint_lsx}
Given $(u_i, v_i, w_i) = (Y_i, \vilsx, \wilsx)$, it suffices to verify that $\E(\yi^4) < \infty, \quad \E  (\|\vilsx\|^4 ) <\infty, \quad \E  (\|\wilsx\|^4 ) <\infty.$
First, \assmasym\eqref{it:assm_asym_Y} ensures $\E(\yi^4) < \infty$. 
Next, it follows from $
\|\vilsx\|^2 = 1 + \di^2 + \|\xxi\|^2 \le 2 + \|\xxi\|^2 \le 2 \, (1+\|\xxi\|^2)$
that 
\begina
\|\vilsx\|^4 &\leq&   4 (1+\|\xxi\|^2 )^2 \leq 8( 1 + \|\xxi\|^4 ), \quad
\E  (\|\vilsx\|^4 )\leq 8\left\{1 + \E(\|\xxi\|^4)\right\}, 
\enda
which ensures $\E(\|\vilsx\|^4) < \infty$ under \assmasym\eqref{it:assm_asym_lsx}. 
Identical reasoning after replacing $\di$ to $\zi$ ensures $\E(\|\wilsx\|^4) < \infty$.

\subsubsection{Verification of the expression of $\vlsxn$ in \thmlsx\eqref{it:lsx_clt}.}
\label{sec:lsx_3}
From \eqref{eq:gnlsx} and \eqref{eq:psinlsx}, under \eqref{eq:model_lsx} and \assmtth, the analog of $\sign$ in \lemtsls, as defined in \eqref{eq:hansen_def}, is
\beginy\label{eq:sign_lsx}
\signlsx 
\oeq{\eqref{eq:hansen_def}+\eqref{eq:gnlsx}+\eqref{eq:psinlsx}} \gnlsx^{-1} \omgnlsx (\gnlsx^{-1})^\T 
\oeq{\eqref{eq:gnlsx}} \glsxi \omgnlsx (\glsxi)^{\T}. \qquad
\endy
%where   
%\begineqs
%\omgnlsx \oeq{\eqref{eq:psinlsx}} \nifsumg \cov\left\{\dsumig \wilsxf \rrilsx\right\},
%\quad
%\glsx \oeq{\eqref{eq:glsx_def}} 
%\beginp 
%1 & \mud & \muxt\\ 
%e & \muzd & e\muxt\\
%\mux & \mudx & \muxx
%\endp.
%\endeqs
%the last equality in the expression of $\glsx$ in \eqref{eq:sign_lsx} follows from  
% $\muxz = e \mux$ under \assmtth. 
By \lemtsls, the asymptotic variance of $\htlsx$ is 
\beginy\label{eq:vlsxn_def}
\vlsxn = (0, 1, \zpt) \signlsx \beginp 0\\ 1\\ \zpt\endp.  
\endy
%%%%%%%%%%%%%%%%%%%%%%%%%%
By \eqref{eq:glsx_def}, $(-e, 1, \zpt)\glsx = \beginp 0, & \czd, & \zpt \endp = \vzpc (0, 1, \zpt)$
under \assmtth, so that 
\begineq\label{eq:thmlsx_goal}
(0, 1, \zpt) \glsx^{-1} = \dfrac1\vzpc \cdot (-e, 1, \zpt).  
\endeq
Combining this with \eqref{eq:vlsxn_def} verifies the expression of $\vlsxn$ as follows: 
\begina
\vlsxn 
&\oeq{\eqref{eq:vlsxn_def}+\eqref{eq:vlsxn_def}}&  (0, 1, \zpt)\glsxi \omgnlsx (\glsxi)^{\T}\beginp 0\\ 1\\ \zpt\endp\\
%%%%%%%%%%%%%%%
&\oeq{\eqref{eq:thmlsx_goal}}&
\dfrac{1}{\vzpcsq}  (-e,1,\zpt) \cdot 
\dnisumg \cov\left\{\dsumig \wilsxf \rrilsx\right\}
\cdot \beginp -e \\ 1 \\ \zpt\endp\\
%%%%%%%
&=&
\dfrac{1}{\vzpcsq} \cdot \nisumg \var\left\{\dsumig (\zi - e) \rrilsx\right\}. 
\enda   

\subsection{Proof of \thmfex}
%Recall that $\zicx$ denotes the residual from $\olst(\zi\sim\cci+\xxi)$, with 
%\beginy\label{eq:szycx_szdcx}
%\szycx = \meani \zicx \yi, \quad \szdcx = \meani \zicx \di.
%\endy 
Applying \lemfwl\ with $\dis = \di$, $\zis = \zi$, and $W_i = (C_i, \xxi)$ ensures the numeric expressions of $(\htfex, \hsefex)$ in \thmfex\eqref{it:fex_numeric}. 
We verify below \thmfex\eqref{it:fex_clt}.  

Let $\hbxfex$ denote the estimated \coeffv\ of $\xxi$ from $\tsfexformula$. Applying \lemfwl\ with $\dis = (\di, \xxi)$, $\zis = (\zi, \xxi)$, and $W_i = C_i$  ensures that for the purpose of \thmfex, the \tsfexf\ is numerically equivalent to its \cpo\ variant, 
\beginy\label{eq:tscdx}
\tscdxformula,
\endy
where the estimated coefficients on $\ddi$ and $\dxi$ coincide with $\htfex$ and $\hbxfex$, and the \crse\ and IV residuals coincide with $\hsefex$ and $\rifex$, respectively.
Given that \eqref{eq:tscdx} is a just-identified \tsls, the estimation equation implies
\beginy\label{eq:fex_ee_1}
\beginp
\htfex \\ 
\hbxfex 
\endp 
=
\phini 
\left\{\meanif \beginp \dzi \\ \dxi \endp \dyi \right\},
\endy
where $
\phin = \meanif \beginp \dzi \\ \dxi \endp (\ddi, \dxit) = \beginp
\szd & \sxzt\\
\sxd & \sx
\endp$ with $
\sxz = \dmeani \dxzi$, $\sx = \dmeani \dxxi$, and $\sxd = \dmeani \dxdi$. 
%%%%%
%
%
Recall from \eqref{eq:rrifex_def} that $\rrifex = \dai - \dxit\gxa$, where $\gxa = \gxaf$.
Recall from \eqref{eq:omgnlsx_omgnfex_def} the definition of $\omgnfex$. 
Let 
\begineq\label{eq:bi_fex}
B_i = \dzdx \rrifex,\quad \bb = \dmeani B_i, \with \omgnfex  = \cov(\sqrtn \bb). 
\endeq 
Equation~\eqref{eq:fex_ee_1} ensures
\begina
\dbfexf&=& 
\hbffex - \bffex\\
&\overset{\eqref{eq:fex_ee_1}}{=}& \phini\left\{ 
\meani \beginp \dzi \\ \dxi \endp \dyi 
- \meani \beginp \dzi \\ \dxi \endp (\ddi, \dxit)  
\bffex \right\}\\
%&=& 
%\phini\left[ \meani \dzdx \left\{ \dyi -  \dddx \bfcdx \right\} \right]
%\nonumber\\
%%%%%%%%%%%%%%%%%%%%
%%%%%%%%%%%%%%%%%%%%
&\oeq{\eqref{eq:rrifex_def}}&
\phini 
 \left\{ \meani \dzdx \rrifex \right\}
\oeq{\eqref{eq:bi_fex}} \phini \bb, 
\enda
so that 
\beginy\label{eq:dhtfex}
\htfex - \tc 
=  (1,0_p^\T) \dbfexf = (1,\zpt)\phini \bb = \gn \bb,
\endy
where $
\gn =  (1,0_p^\T) \phini \in \mbr^{1\times (1+p)}$. 
We verify in \secs~\ref{sec:fex_1}--\ref{sec:fex_2} (i) $\vfexn^{-1/2} \cdot \sqrtn(\htfex - \tc)  \rs \sn$; and (ii) $N\hsefexsq/\vfexn  = 1+\op$,  
respectively, which together imply $(\htfex - \tc) / \hsefex  \rs \sn$.

\subsubsection{Proof of $\vfexn^{-1/2} \cdot \sqrtn(\htfex - \tc)  \rs \sn$}\label{sec:fex_1}
Let
\begineq\label{eq:bis_fex}
\beginar{l}
\gns =  \kvzpcif(1,0_p^\T),
\quad
\bis = \gns \bi \oeq{\eqref{eq:bi_fex}} \kvzpcif\dzi\rrifex, \quad \bb^* = \dmeani \bi^* = \gns \bb,
\endar
\endeq
with 
\begineq\label{eq:v_omg_fex}
\beginar{rcl}
\var(\sqrtn \bbs)  
&\oeqt{Assm.~\ref{assm:cs}} &  \ni \dsumg \var\left(\sumig \bis\right) \\  
%%%%%%%%%%%
&=& \dfrac1\kvzpcsq \cdot \dnisumg\var\left(\sumig \dzi\rrifex\right) 
= \vfexn,\\
%%%%%%%%%%%%
\gns \omgnfex \gnst 
&\oeq{\eqref{eq:bi_fex}}&   \gns \cov(\sqrtn \bb) \gnst  
= \var(\sqrtn \bbs) = \vfexn. 
\endar
\endeq
Recall that $\gn =  (1,0_p^\T) \phini$ from \eqref{eq:dhtfex}. 
Given \lemfex,  similar reasoning as  the proof of \eqref{eq:joint_goal}--\eqref{eq:joint_goal_clt_bi} in Section~\ref{sec:app_proof_of_joint} implies that
\begineq\label{eq:goal_fex_clt}
\beginar{ll}
\gn = \gns + \op,\quad
\omgnfex^{-1/2}  \sqrtn \bb  \;\rs\; \mn(0, I),  \quad \inf_N\lm(\vfexn) > 0,\medskip\\
%%%%%%%%%%%%
\vfexnsq \sqrtn \bb^* \;\rs\; \sn, \quad  
\vfexnsq  \sqrtn(\gn - \gns)   \bb = \op.
\endar
\endeq
Combining \eqref{eq:dhtfex}, \eqref{eq:bis_fex}, and \eqref{eq:goal_fex_clt}, Slutsky's theorem ensures that 
\begina
\vfexnsq \cdot \sqrtn (\htfex - \tc) 
&\overset{\eqref{eq:dhtfex}}{=}&
\vfexnsq \cdot \sqrtn\gn\bb \nnb\\
%%%%%%%%%%%%%%%
&=&
\vfexnsq\cdot \sqrtn(\gn - \gns)\bb 
+ \vfexnsq   \cdot\sqrtn \gns \bb \nnb\\
%%%%%%%%%%%%%%%
&\oeq{\eqref{eq:bis_fex}}&
\underbrace{\vfexnsq \cdot \sqrtn(\gn - \gns)\bb}_{\text{$=\op$ by \eqref{eq:goal_fex_clt}}} 
+ \underbrace{\vfexnsq \cdot \sqrtn \bb^*}_{\textup{$\rs \sn$ by \eqref{eq:goal_fex_clt}}} \nnb\\
%%%%%%%%%%%%%%%
&\overset{\eqref{eq:goal_fex_clt}+\eqref{eq:goal_fex_clt}}{\rs}&
 \sn. 
\enda

\subsubsection{Proof of $N\hsefexsq/\vfexn  = 1+\op$}
\label{sec:fex_2}
Let 
\beginy\label{eq:hsign_fex}
\begin{array}{rcccl}
\hsign &=& \szdcx^2 \cdot N\hsefexsq &\oeqt{\thmfex\eqref{it:fex_numeric}}& \dnisumg \left(\dsumig \zicx  \rifex\right)^2,\\
%%%%%%%%%%%
\sign &=& \kvzpcsq \cdot  \vfexn &\oeqt{\thmfex\eqref{it:fex_clt}}& \dnisumg \var\left(\dsumig \dzi \rrifex \right).  
\end{array}
\endy
We show in \secs~\ref{sec:fex_varest_1} and \ref{sec:fex_varest_2} that 
\begineq\label{eq:hsign_goal_fex}
\szdcx = \kvzpc + \op, \quad \hsign = \sign + \op, 
\endeq 
respectively, which together imply  $N\hsefexsq = \vfexn + \op$. When $\inf_N\lm(\vfexn) > 0$ from \eqref{eq:goal_fex_clt}, this implies that $N\hsefexsq/\vfexn = 1 + \op$. 

Let $\hgxz= \sx^{-1}\sxz$
denote the {\coeffv} of $\dxi$ from the \olsr\ of $\dzi$ on $\dxi$.
By FWL,  
\begineq\label{eq:zicx}
\zicx \oeq{\text{Def.}} \rest(\zi \sim \cci + \xxi) 
\oeq{\text{FWL}} \rest(\dzi \sim \dxi) 
=
\dzi  - \dxit \hgxz. 
\endeq
\lemfex\eqref{it:lem_fex_S_plim} ensures $\sxz = \op$ and $\sxx = \E(\sxx) + \op$, so that 
\begineq\label{eq:hgxz_plim}
\hgxz = \sxi \sxz = \{\esx\}^{-1} \op + \op = \op  
\endeq
given \assmlm{\E(\sxx)} under \assmasym\eqref{it:assm_asym_fex}. 

\paragraph{Proof of $\szdcx = \kvzpc + \op$ in \eqref{eq:hsign_goal_fex}.}
\label{sec:fex_varest_1}
\eq~\eqref{eq:apb} ensures
\beginy\label{eq:szdcx_0}
\meani \dzi \di \oeq{\eqref{eq:apb}} \meani\dzi \ddi = \szd, \quad \meani \dxi\di  \oeq{\text{symmetry}} \sxd, 
\endy
where $\szd = \kvzpc + \op$ and $\sxd = \oop$ from \lemfex\eqref{it:lem_fex_S_plim}. Combining these with \eqref{eq:zicx}--\eqref{eq:szdcx_0}, ensures
$\szdcx 
\oeq{\eqref{eq:zicx}}
\meani \left(\dzi - \dxit\hgxz \right)\di
%%%%%%%%%%%
\oeq{\eqref{eq:szdcx_0}}
\szd  - \hgxzt \sxd 
%%%%%%%%%%%%%
\oeqt{\lemfex\eqref{it:lem_fex_S_plim}+\eqref{eq:hgxz_plim}}\kvzpc+\op.$

\paragraph{Proof of $\hsign = \sign + \op$ in \eqref{eq:hsign_goal_fex}.}
\label{sec:fex_varest_2}
Let $S_i =  \dzi \rrifex$, 
\begineq\label{eq:dtg_fex}
\ss = \meani S_i, 
\quad \ssg = \sumig S_i, 
\quad
 \dtg = \sumig\zicx \cdot \rifex - \ssg  
\endeq
to write 
\begineq
\label{eq:hsign_fex_2}
\hsign 
\oeq{\eqref{eq:hsign_fex}+\eqref{eq:dtg_fex}}
%\ni \sumg \left(\sumig \zicx \cdot \rifex\right)^2 =
\nisumg \left(\ssg +\dtg\right)^2
%%%%%%%%%
=
\nisumg\left\{\ssgsq + \dtg^2 + 2 \ssg\dtg\right\}. 
\endeq
%%%%%
%%%%%
We can show that $\{S_i: \otn\}$ satisfies the conditions in \lemsmclt\ with $\E(S_i) = 0$ as follows:
\begini
\item From \eqref{eq:bi_fex} and \lemfex\eqref{it:lem_fex_uni_B},  $\supi \E(\en{\bi}^{2+2q}) = O(1)$.
It follows from $S_i^2 \leq \|\bi\|^2$ that $\supi \E(S_i^{2+2q}) = O(1)$ so $\{S_i^2:\otn\}$ is  \uni.
%%%%%%%%
\item 
\begineq\label{eq:vss}
\var(\sqrtn\ss) \oeq{\eqref{eq:dtg_fex}} \ds\nif\var\left( \sumi S_i\right)
 \oeq{\eqref{eq:hsign_fex}} \sign.  \quad
\endeq 
From \eqref{eq:omgnlsx_omgnfex_def} and \eqref{eq:vss}, $\sign$ is the (1,1)th element of $\omgnfex$. That $\lm(\omgnfex)\geq \lambda$ for some $\lambda >0$ under \assmasym\eqref{it:assm_asym_fex} implies that $\sign \geq\lambda > 0$. 
 
\item \lemcefa--\eqref{it:lem_cef_EXZ} ensure $\E(\dzi \dai) = 0$ and $\E(\dzi \dxi) = 0$ for all $\otn$, so that $\E(S_i) =\E\{ \dzi(\dai - \dxit \gxa)\}  = 0.$
\endi
Applying \lemsmclt\ to $\{S_i: \otn\}$ ensures the first term in \eqref{eq:hsign_fex_2} satisfies
\beginy
\label{eq:ssgsq_fex}
&&\dnisumg \ssgsq
=  \var (\sqrtn\ss ) + \op  
\oeq{\eqref{eq:vss}} \sign+\op. 
\endy
We show below that 
\beginy\label{eq:dtg_goal_fex}
\nisumg\dtg^2 = \op. 
\endy
Equations \eqref{eq:ssgsq_fex} and \eqref{eq:dtg_goal_fex} together ensure
\beginy\label{eq:ssg_dtg_bound}
\nisumg\ssg \dtg 
\leq
 \left\{\nisumg\ssgsq\right\}^{1/2}
\left(\nisumg\dtg^2\right)^{1/2}
%%%%%
\oeq{\eqref{eq:ssgsq_fex}+\eqref{eq:dtg_goal_fex}}
\op
\endy
by \csf.
Plugging \eqref{eq:ssgsq_fex}--\eqref{eq:ssg_dtg_bound} in  \eqref{eq:hsign_fex_2} ensures $\hsign = \sign + \op$. 

%%%%%%%%%%%%%%%%%%%%%%%%
%%%%%%%%%%%%%%%%%%%%%%%%
%%%%%%%%%%%%%%%%%%%%%%%%
\paragraph*{Proof of \eqref{eq:dtg_goal_fex}.}
Applying \lembasu\ with $\dis = (\di, \xxi)$, $\zis = (\zi, \xxi)$, and $W_i = C_i$ ensures that $(\htfex, \hbxfex, \rifex)$ equal the \coeffs\ and IV residuals from $\tscdxformula$, respectively, so that 
\beginy\label{eq:rifex}
\rifex 
&\overset{\text{\lembasu}}{=}& \dyi - \ddi\htfex - \dxit\hbxfex\nnb\\
&=& \dai + \ddi\tc - \ddi\htfex - \dxit\hbxfex\nnb\\
&=& 
\rrifex - \ddi (\htfex - \tc) - \dxit(\hbxfex - \gxa).  
\endy
Let $\rrgfex = (\rrifex: \iig)$,  $\rgfex = (\rifex: \iig)$,  and $\zgcx = (\zicx:\iig)$
denote the vectors of $\rrifex$, $\rifex$, and $\zicx$ within cluster $g$, with 
\begineq\label{eq:zgcx_rgfex}
\beginar{rcl}
 \zgcx &\oeq{\eqref{eq:zicx}}&  \dzg  - \dxg \hgxz,\\
%%%%%%%%%%%%%%%%%
 \rgfex  
&\oeq{\eqref{eq:rifex}}&  \rrgfex -\ddg(\htfex-\tc) - \dxg (\hbxfex-\gxa)
\endar 
\endeq
from \eqref{eq:zicx} and \eqref{eq:rifex}, where $\dxg = \vjh{\dot X}$.
Let 
\begineqs
\szdg = \dzgt\ddg, \quad
\sxg   = \dxgt \dxg,\quad
\sxzg  = \dxgt \dzg,\quad\sxdg  = \dxgt\ddg,
\endeqs 
\begineqs
\szrg = \sumig  \dzi \rrifex,\quad\sxrg = \sumig  \dxi \rrifex.
\endeqs
Plugging \eqref{eq:zgcx_rgfex} into \eqref{eq:dtg_fex} yields
\begineqs
\beginar{rcl}
\dtg
&\oeq{\eqref{eq:dtg_fex}}& \zgcxt \rgfex - \dzgt\rrgfex\smallskip\\
%&=& \left(\dzgt -  \hgxzt\dxgt\right)  \left\{\rrgfex  -\ddg(\htfex-\tc) - \dxg (\hbxfex-\gxa)\right\}  - \dzgt\rrgfex \nnb\\
&\oeq{\eqref{eq:zgcx_rgfex}}&  - \dzgt\ddg(\htfex-\tc)- \dzgt\dxg (\hbxfex-\gxa) 
\nnb\\
&&-  \hgxzt\dxgt\rrgfex + \hgxzt\dxgt\ddg(\htfex-\tc) + \hgxzt\dxgt\dxg (\hbxfex-\gxa)\smallskip\\
%%%%%%%%%%%%%%%%%
%&=&
%\underbrace{- (\htfex - \tc) \szdg}_{\dgo}
%%
%\underbrace{-\dhbxfext \sxzg}_{\dgt}
%\nnb\\
%&& \underbrace{-\hgxzt \sxrg}_{\dgth}
%+ \underbrace{\hgxzt \sxdg (\htfex - \tc)}_{\dgfo}
%%
% + \underbrace{\hgxzt \sxg \dhbxfexp}_{\dgfi} \nnb\\
&=& \dgo + \dgt + \dgth + \dgfo + \dgfi,
\endar
\endeqs
where $\dgo = - (\htfex - \tc)\szdg$, 
\ $\dgt = -\dhbxfext \sxzg$, \ $\dgth = -\hgxzt \sxrg$, \ $\dgfo = (\htfex - \tc)\hgxzt \sxdg$, and $\dgfi = \hgxzt \sxg \dhbxfexp$.

\eq~\eqref{eq:dhtfex} and the central limit theorem of $\bb$ in \eqref{eq:goal_fex_clt} we just proved ensure  $
\htfex - \tc = \op$ and $\hbxfex - \gxa = \op$. 
Combining this with $\hgxz = \op$ from \eqref{eq:hgxz_plim},  $\nisumg \szdg^2 = \ooo$ from \lemasymnox\eqref{it:lem_asym_nox_S}, and 
\begin{itemize}
\item[]  $\nisumg \sxzg \sxzgt = \oop$, \quad $\nisumg \sxdg \sxdgt = \oop$, \\
%%%%%%%
$\nisumg \|\sxg\|_2^2 = \oop$, \quad $\nisumg \sxrg \sxrg^\T = \oop$
\endi 
from \lemfex\eqref{it:lem_fex_bound_Sg^2}  ensures that $
\nisumg \dgk^2 = \op$ for $k = 1, 2, 3, 4, 5$. 
This implies 
$\dtg^2 = (\dgo + \dgt + \dgth + \dgfo + \dgfi)^2
\leq  5 (\sum_{k=1}^5 \dgk^2)$ by \csf,
so that $\nisumg \dtg^2  \leq  5  \nisumg\sum_{k=1}^5 \dgk^2 =  5 \sum_{k=1}^5 \left(\nisumg \dgk^2\right) =\op.$

\end{document}